\documentclass{article}

\usepackage{fullpage}

\title{Nearly-Linear Time LP Solvers and Rounding Algorithms for Scheduling Problems}


\author{Shi Li\\ State Key Laboratory for Novel Software Technology, \\Nanjing University, \\Nanjing, Jiangsu Province, China.\\ \href{mailto:shili@nju.edu.cn}{shili@nju.edu.cn}}

\usepackage[utf8]{inputenc}

\usepackage{multirow}

\ifdefined\CR\else 
\usepackage{graphicx} 
\usepackage{amsmath, amsthm, amssymb, amsfonts}
\usepackage{thmtools} 
\usepackage{thm-restate}
\usepackage[colorlinks]{hyperref}
\fi

\graphicspath{{./figs/}}

\usepackage{enumitem}
\setlist{topsep=3pt,itemsep=0pt}

\usepackage{algorithm}
\usepackage{algorithmicx}
\usepackage{algpseudocode}
\algtext*{EndWhile}
\algtext*{EndIf}
\algtext*{EndFor}
\algtext*{EndFunction}
\algtext*{EndProcedure}

\ifdefined\CR\else
\newtheorem{theorem}{Theorem}[section]
\newtheorem{definition}[theorem]{Definition}
\newtheorem{lemma}[theorem]{Lemma}
\newtheorem{claim}[theorem]{Claim}
\fi
\newtheorem{coro}[theorem]{Corollary}
\newtheorem{obs}[theorem]{Observation}

\usepackage{framed}

\newcommand{\ceil}[1]{{\left\lceil#1\right\rceil}}
\newcommand{\floor}[1]{{\left\lfloor#1\right\rfloor}}
\newcommand{\set}[1]{{\left\{#1\right\}}}

\newcommand{\R}{{\mathbb{R}}}
\newcommand{\Z}{{\mathbb{Z}}}

\newcommand{\bbD}{{\mathbb{D}}}

\mathchardef\mhyphen="2D
\DeclareMathOperator*\E{\mathbb{E}}
\DeclareMathOperator*\union{\bigcup}

\newcommand{\poly}{\mathrm{poly}}

\setlength{\parskip}{0pt}

\newcommand{\bfC}{{\mathbf{C}}}

\newcommand{\bfP}{{\mathbf{P}}}

\newcommand{\bfa}{{\mathbf{a}}}
\newcommand{\bfb}{{\mathbf{b}}}
\newcommand{\bfc}{{\mathbf{c}}}

\newcommand{\bff}{{\mathbf{f}}}
\newcommand{\bfg}{{\mathbf{g}}}
\newcommand{\bfu}{{\mathbf{u}}}
\newcommand{\bfv}{{\mathbf{v}}}
\newcommand{\bfx}{{\mathbf{x}}}
\newcommand{\bfy}{{\mathbf{y}}}
\newcommand{\bfz}{{\mathbf{z}}}

\newcommand{\calA}{{\mathcal{A}}}
\newcommand{\calF}{{\mathcal{F}}}

\newcommand{\calI}{{\mathcal{I}}}

\newcommand{\calO}{{\mathcal{O}}}
\newcommand{\calQ}{{\mathcal{Q}}}
\newcommand{\calR}{{\mathcal{R}}}

\newcommand{\calT}{{\mathcal{T}}}

\newcommand{\opt}{{\mathsf{opt}}}
\newcommand{\lp}{{\mathsf{lp}}}

\newcommand{\sfd}{{\mathsf{d}}}

\newcommand{\busy}{{\mathsf{busy}}}
\newcommand{\idle}{{\mathsf{idle}}}

\newcommand{\val}{{\mathsf{val}}}
\newcommand{\supp}{{\mathsf{supp}}}

\newcommand{\inclen}{\mathsf{inc}\mhyphen\mathsf{len}}
\newcommand{\listscheduling}{\mathsf{list}\mhyphen\mathsf{scheduling}}

\newcommand{\tprec}{\mathrm{prec}}

\newcommand{\load}{{\mathsf{load}}}
\newcommand{\vecload}{\mathsf{loads}}

\allowdisplaybreaks

\begin{document}
		\maketitle
	
	\begin{abstract}
		We study nearly-linear time approximation algorithms for non-preemptive scheduling problems in two settings: the unrelated machine setting, and the identical machine with job precedence constraints setting, under the well-studied objectives such as makespan and weighted completion time. For many problems, we develop nearly-linear time approximation algorithms with approximation ratios matching the current best ones achieved in polynomial time.
		
		Our main technique is linear programming relaxation.  For the unrelated machine setting, we formulate mixed packing and covering LP relaxations of nearly-linear size, and solve them approximately using the nearly-linear time solver of Young. For the makespan objective, we develop a rounding algorithm with $(2+\epsilon)$-approximation ratio. For the weighted completion time objective, we prove the LP is as strong as the rectangle LP used by Im and Li, leading to a nearly-linear time $(1.45 + \epsilon)$-approximation for the problem. 
		
		For problems in the identical machine with precedence constraints setting, the precedence constraints can not be formulated as packing or covering constraints.   To achieve the nearly-linear running time, we define a polytope for the constraints, and leverage the multiplicative weight update (MWU) method with an oracle which always returns solutions in the polytope.
		
	\end{abstract}

	
	\section{Introduction}
	\label{sec:intro}
	
	Scheduling theory is an important sub-area of combinatorial optimization, operations research and approximation algorithms. Over the past few decades, advanced techniques have been developed to design approximation algorithms for numerous scheduling problems, among which mathematical relaxation is a prominent one. The algorithms based on the technique follow a two-step framework: solve some linear/convex/semi-definite programming relaxation for the problem to obtain a fractional schedule, and round it into an integral one.  The main focus of the algorithm design in the literature has been the best approximation ratios that can be achieved in polynomial time.    Many of the LPs used have size much larger than that of the input, and a general convex/semi-definite program requires a large polynomial time to solve, making these algorithms impractical. 
	
	To overcome the running time issue, we design approximate LP-based scheduling algorithms that run in \emph{nearly-linear} time. We focus on two well-studied non-preemptive scheduling settings:
	\begin{enumerate}
		\item {\bf Unrelated machine setting.} We are given a set $J$ of $n$ jobs, a set $M$ of $m$ machines, a bipartite graph $G = (M, J, E)$ between $M$ and $J$, and a processing time $p_{ij} \in \Z_{> 0}$ for every $ij  \in E$, indicating the time it takes to process job $j$ on machine $i$. If $ij \notin E$, then the job $j$ can not be processed on machine $i$.
		The output of a problem in this setting is an assignment $\sigma \in M^J$ of jobs to machines so that $\sigma_jj \in E$ for every $j \in J$.  This indicates that we process the job $j$ on machine $\sigma_j$. 
		
		\item {\bf Identical machine with job precedence constraints setting.} In this setting, we are given  a set $J$ of $n$ jobs,  each job $j \in J$ with a processing time $p_j \in \Z_{\geq 0}$,  and the number $m \geq 1$ of identical machines. There are precedence constraints of the form $j \prec j'$, indicating that the job $j'$ can only start after job $j$ completes. The output of a problem in the setting is a completion time vector $(C_j)_{j \in J} \in \Z_{\geq 0}^J$, meaning that a job $j \in J$ is processed during the time interval $(C_j - p_j, C_j]$. We need $C_j \geq p_j$ for every $j \in J$, $C_j \leq C_{j'} - p_{j'}$ for every $j \prec j'$, and every integer $t \geq 1$ is contained in $(C_j - p_j, C_j]$ for at most $m$ jobs $j \in J$.\footnote{It is a folklore that if the last property is satisfied, we can assign $\{(C_j - p_j], j \in [J]\}$ to $m$ machines so that the intervals assigned to each machine are disjoint.}  
	\end{enumerate}
	
	The main objective function we focus on is \emph{weighted completion time}: We are additionally given a weight $w_j \in \Z_{> 0}$ for every job $j \in J$, and the goal of the problem is to minimize $\sum_{j \in J}w_j C_j$, where $C_j$ is the completion time of $j$ on its assigned machine.  For the second setting, this is explicitly given by the output. For the first setting, given the assignment  $\sigma \in M^J$ of jobs to machines, it is well-known that the Smith's rule\footnote{By this rule, we schedule jobs $j$ assigned to a machine $i$ using non-decreasing order of $p_{ij}/w_j$.} gives the optimum order on each machine $i$. 
	For the first setting, we also consider the objective of minimizing the \emph{makespan}, which is defined as $\max_i \sum_{j\in \sigma^{-1}(i)} p_{ij}$, i.e., the maximum load over all machines.
	
	It is convenient for us to use the classic three-field notation $\alpha|\beta|\gamma$ in \cite{GLLR79} to denote scheduling problems studied in this paper.\footnote{In the notation, $\alpha$ indicates the machine model, $\beta$ gives the set of additional constraints, and $\gamma$ is the objective. $\alpha = R$ and $\alpha = P$ denote the unrelated and identical machine settings respectively, and $\tprec \in \beta$ indicates that jobs have precedence constraints. $\gamma = C_{\max}$ and $\gamma = \sum_jw_jC_j$ denote the makespan and weighted completion time objectives respectively.}  The makespan and weighted completion time minimization problems in the unrelated machine setting are denoted as $R||C_{\max}$ and $R||\sum_j w_jC_j$ respectively.  The problem to minimize weighted completion time in the identical machine with job precedence constraint setting is denoted as $P|\tprec|\sum_j w_j C_j$. We will also consider special cases of the problem, and give their notations when we discuss them.
	
	There is a rich literature on designing approximation algorithms for these problems. For the unrelated makespan minimization problem, i.e., $R||C_{\max}$,  the classic result of Lenstra, Shmoys and Tardos \cite{LST90} gives a $2$-approximation, which remains the state-of-the-art result. The problem is NP-hard to approximate within a factor of better than $1.5$. 
	Plotkin, Shmoys and Tardos \cite{PST95} studied fast approximation algorithms for the problem, as an application of  their packing and covering LP solver. They developed a randomized $(2+\epsilon)$-approximation algorithm in time $\tilde O_\epsilon(mn)$.\footnote{In this paper, we use $\tilde O_{\epsilon}(\cdot)$ to hide a factor that is poly-logarithmic in the input size of the instance being considered, which will be clear from the context, and polynomial in $1/\epsilon$, where $\epsilon$ is a precision parameter. An algorithm is nearly-linear if its running time is $\tilde O_\epsilon(\text{input size})$.} So their algorithm is nearly-linear if $|E| = \Theta(m n)$. Much work on the problem has focused on a special setting called the restricted assignment setting \cite{Sve12, JR17, JR20}, where there is an intrinsic size $p_j \in \Z_{>0}$ for every $j \in J$, and for every $ij  \in E$ we have $p_{ij} = p_j$.  
	
	
	For the unrelated machine weighted completion time problem, i.e., $R||\sum_j w_jC_j$,  many independent rounding algorithms 
	achieve an approximation ratio of $1.5$ \cite{SS02, Sku01, SS99, Li20}.  
	Bansal, Svensson and Srinivasan \cite{BSS16} showed that the barrier of $1.5$ is inherent for this type of algorithms. 
	%
	To overcome the barrier, they developed a novel dependent rounding scheme and a lifted SDP relaxation for the problem, leading to a $(1.5-1/2160000)$-approximation algorithm. The ratio has been improved to $1.5 - 1/6000$ by Li \cite{Li20}, to $1.488$ by Im and Shadloo \cite{IS20} and to the current best ratio of $1.45$ by Im and Li \cite{IL23}. The three subsequent works are based on the rectangle LP relaxation for the problem. 
	
	There is a vast literature on the problem of minimizing weighted completion time in the identical machine with job precedence constraints setting, i.e., the problem $P|\tprec|\sum_j w_j C_j$.  A special case of the problem where there is only one machine (i.e., $m = 1$), denoted as $1|\tprec|\sum_j w_j C_j$, is already non-trivial. Hall et al.\ \cite{HSS97} developed a $2$-approximation for the problem, which is the best possible under some stronger version of the unique game conjecture introduced by Bansal and Khot \cite{BK09}. Another special case that is considered moderately in the literature is when all jobs have unit-size, denoted as $P|\tprec, p_j = 1|\sum_j w_jC_j$.  Munier, Queyranne and Schulz \cite{MQS98} gave approximation ratios of $3$ and $4$ for the special case and the general problem $P|\tprec|\sum_j w_j C_j$ respectively. The ratios were improved to $1+\sqrt{2}$ and $2 + 2\ln2$ by Li \cite{Li20}. 
	Most algorithms \cite{HSS97, MQS98, QS06, Li20} for $P|\tprec|\sum_j w_jC_j$ and the two special cases use the following framework: Solve some linear/convex program to obtain an order of the jobs respecting the precedence constraints. For every job in this order, schedule it as early as possible, without violating the precedence and $m$-machine constraints.
	\smallskip
	
	Most of the results we discussed focused on optimizing the approximation ratios with polynomial time algorithms. Albeit being polynomial, the running times in these results are often very large.  For LP-based algorithms, this may be caused by two factors. First, the size of an LP might already be large w.r.t the input size. Consider a typical time-indexed LP relaxation in the unrelated machine setting,  one need a variable for every triple $ijs$ with $ij \in E$ and $s$ being the starting time. Assuming the number of possible starting times is linear in $n$, the number of variables in the LP is already $\Theta(n|E|)$; the size of the LP can only be bigger. Second, these algorithms often use a general LP solver, which has a large running time w.r.t the size of the LP. There is a vast literature in recent years on designing exact and approximate general LP solvers. Here we could only include a few representative results. To solve a linear program with $\bar n$ variables, $\bar m$ constraints and $\bar N$ non-zero coefficients up to a precision of $\epsilon$, Lee and Sidford \cite{LS15} developed an algorithm with running time $\tilde O\big((\bar N + \bar m^2)\sqrt{\bar m}\log\frac1\epsilon\big)$. Lee, Song and Zhang \cite{LSZ19} gave an algorithm with running time $\tilde O(\bar n^\omega\log\frac1\epsilon)$,\footnote{The result requires that the LP does not have redundant constraints.} where $\omega \approx 2.373$ is the current best exponent for matrix multiplication. Brand, Lee, Sidford and Song \cite{BLS20} provided a $\tilde O(\bar m \bar n+\bar n^3)$ time randomized algorithm that solves the LP exactly with high probability; the running time is nearly linear if the constraint matrix is dense and tall. However, to solve general linear programs, these running times are at least quadratic, even if the LP has a linear size. Convex or semi-definite programming based algorithms need to solve the CP/SDP using the interior point or ellipsoid methods, which are often time-consuming.
	
	\subsection{Our Results}
	To overcome the above issue, we design approximation algorithms for scheduling problems, that run in \emph{nearly-linear} time, i.e., in time $\tilde O_\epsilon(\text{input size})$. So,  up to a $\poly(\log n, 1/\epsilon)$-factor, our running times are the best possible. Some of the algorithms  we developed have been studied empirically \cite{Als22}.  In the unrelated machine setting, $G = (M, J, E)$ denotes the bipartite graph between $M$ and $J$, and a nearly-linear time is of order $\tilde O_\epsilon(|E|)$.  For the identical machine with precedence constraints setting, we use $\kappa$ to denote the number of precedence constraints. A nearly-linear time algorithm runs in time $\tilde O_\epsilon(n + \kappa)$. Unlike the polynomial running time scenario, we can not assume $\prec$ is transitive, as it may dramatically increase the number of precedence constraints to quadratic. Moreover, the best known algorithm computing the transitive closure of the precedence constraints takes $O(n\kappa)$ time \cite{Pur70}. 
	
	For many problems, including $R||C_{\max}, R||\sum_j w_j C_j, 1|\tprec|\sum_j w_j C_j$ and $P|\tprec, p_j=1|\sum_j w_j C_j$, our nearly-linear time algorithms achieve the correspondent best known  polynomial-time approximation ratios, due to Lenstra, Shmoys and Tardos \cite{LST90}, Im and Li \cite{IL23}, Hall et~al.\ \cite{HSS97}, and Li \cite{Li20} respectively.	 
	
	\begin{theorem}
		\label{thm:main-R-Cmax}
		For any $\epsilon > 0$, there is a $\tilde O_\epsilon(|E|)$-time $(2+\epsilon)$-approximation algorithm for $R||C_{\max}$, i.e., the makespan minimization problem on unrelated machines. 
	\end{theorem}
	
	For the problem $R||\sum_j w_jC_j$, we believe that showing that the rectangle LP can be approximated in nearly-linear time is interesting on its own. So we give two theorems for the problem. Refer to LP\eqref{LP:rectangle} for the formal description of the rectangle LP for the problem. 
	\begin{theorem}
		\label{thm:main-R-wC-LP} Consider an instance of $R||\sum_jw_jC_j$ and the rectangle LP \eqref{LP:rectangle} for the instance.
		Let $\epsilon > 0$ and $\lp_{\eqref{LP:rectangle}}$ be the value of the LP. Then in $\tilde O_\epsilon(|E|)$ time, we can construct a solution $\bfz$ to the LP such that: 
		\begin{itemize}
			\item $\bfz$ satisfies all the constraints in the LP, except that the constraint at most one job is processed on any machine at any time may be violated by a factor of $1 + \epsilon$. (Formally, Constraint \eqref{LPC:rectangle-capacity} is only satisfied with the right-side replaced by $1+\epsilon$.)
			\item The value of $\bfz$ to the LP is at most $(1+\epsilon)\lp_{\eqref{LP:rectangle}}$. 
		\end{itemize}
	\end{theorem}
	In the theorem, our $\bfz$ will be represented by the list of non-zero coordinates and their values.  Then, we show that the rounding algorithm of Im and Li \cite{IL23} can indeed run in time nearly-linear on the support size of the LP solution. This gives the following theorem. 
	\begin{theorem}
		\label{thm:main-R-wC}
		For any $\epsilon > 0$, there is a $\tilde O_\epsilon(|E|)$-time $(1.45+\epsilon)$-approximation algorithm for $R||\sum_j w_jC_j$, i.e., the weighted completion time minimization problem on unrelated machines. 
	\end{theorem}
	
	The following two theorems are for $1|\tprec|\sum_j w_j C_j$ and $P|\tprec, p_j = 1|\sum_j w_jC_j$.
	
	\begin{theorem}
		\label{thm:main-1-Prec-wC}
		For any $\epsilon > 0$, there is a $\tilde O_\epsilon((n + \kappa)\log p_{\max})$-time $(2+\epsilon)$-approximation algorithm for $1|\tprec|\sum_j w_jC_j$, i.e., the weighted completion time problem on a single machine with precedence constraints, where $p_{\max}:=\max_{j \in J}p_j$ is the maximum job size.
	\end{theorem}
	So the algorithm runs in nearly-linear time only when $p_{\max}$ is polynomially bounded.
	
	\begin{theorem}
		\label{thm:main-P-Prec-pj1-wC}
		For any $\epsilon > 0$, there is a $\tilde O_\epsilon(n + \kappa)$-time $(1 + \sqrt{2}+\epsilon)$-approximation algorithm for $P|\tprec, p_j = 1|\sum_j w_jC_j$, i.e., the weighted completion time problem on identical machines with unit-size jobs and precedence constraints.
	\end{theorem}
	
	Along the way of algorithm design for the identical machine with precedence constraints setting, we developed a nearly-linear time $(1+\epsilon)$-approximation algorithm for the single commodity network flow problem in directed acyclic graphs, with bounded supplies and demands on sources and sinks, but infinite capacities on edges.

	Recently there has been a lot of progress on solving maximum flow problem on undirected and directed graphs.  For undirected graphs, the problem can be approximated within a factor of $1+\epsilon$ in nearly-linear time \cite{KLO14, Pen16,She17}, and solved exactly with a slightly weaker running time of $m^{1 + o(1)}$ (this is called \emph{almost-linear} time) \cite{BGS21}.  It was open whether an almost-linear  running time can be achieved for solving maximum flow on directed graphs.\footnote{By repeatedly solving maximum flow instances on residual graphs, one can convert an approximate maximum flow algorithm on directed graphs to an exact algorithm, without much loss on the running time. So for directed graphs, allowing $(1+\epsilon)$-approximation does not give much advantage. } This was resolved in the affirmative by a recent breakthrough due to Chen et al.\ \cite{CKL22}: They developed an algorithm that computes exact maximum flows on directed graphs with polynomially bounded integral capacities in $m^{1+o(1)}$ time.  
	Thus, we could use the result as a black-box for our problem, if we allow the running time to be almost-linear. Nevertheless as our theme is to design \emph{nearly-linear} time algorithms, we include in 
	\ifdefined\CR
	the full version of the paper  %
	\else
	Appendix~\ref{sec:appendix-networkflow}
	\fi
	our approximate maximum-flow algorithm for the special case with this running time. To the best of our knowledge, this was not known before.

	For the general precedence-constrained scheduling  problem $P|\tprec|\sum_j w_j C_j$ (on multiple machines with variant job lengths), we achieve an $O(1)$-approximation algorithm in nearly-linear time. However, the approximation ratio of the algorithm is $6 + \epsilon$, which is worse than the best polynomial-time ratio of $2+2\ln2$ due to Li \cite{Li20}.
	\begin{theorem}
		\label{thm:main-P-Prec-wC}
		For any $\epsilon > 0$, there is a $\tilde O_\epsilon((n + \kappa)\log p_{\max})$-time $(6+\epsilon)$-approximation algorithm for $P|\tprec|\sum_j w_jC_j$, i.e., the weighted completion time minimization problem on identical machines with precedence constraints, where $p_{\max}:=\max_{j \in J}p_j$ is the maximum job size.
	\end{theorem}
	
	\subsection{Our Techniques}
	All of our algorithms are based on linear programming: We design an LP relaxation of nearly-linear size, solve it in nearly-linear time to obtain a $(1+\epsilon)$-approximate solution, and round the solution into an integral schedule in nearly-linear time. 
	
	For $R||C_{\max}$, the natural LP relaxation has $O(|E|)$ size, and the mixed packing and covering form.  Thus it can be solved within a factor of $1+\epsilon$ by the algorithm of Young \cite{You14} in $\tilde O_\epsilon(|E|)$ time.  In particular,  the algorithm outputs a $(1+\epsilon)$-approximate solution that violates the constraints by a factor of $1\pm \epsilon$, in $O\left(\frac{\bar N\log \bar m}{\epsilon^2}\right) = \tilde O_\epsilon(\bar N)$ time, where $\bar m$ and $\bar N$ are the number of constraints and non-zero coefficients in the LP respectively.  	To round the fractional solution, we apply the grouping technique of \cite{ST93} for the so called generalized assignment problem, but with a $(1+\epsilon)$-slack. This gives us a bipartite graph $H = (V, J, E_H)$ satisfying $|N_H(J')| \geq (1+\epsilon)|J'|$ for every $J' \subseteq J$, where $N_H(J')$ is the set of neighbors of $J'$ in $H$.  This allows us to find a matching in $H$ that covers $J$ in nearly-linear time, which leads to a $(2 + \epsilon)$-approximate solution, matching the current best approximation of $2$ in \cite{LST90}.  We remark that the $\tilde O_\epsilon(mn)$-running time of \cite{PST95} comes from both solving the LP, and rounding the LP solution.  So even with the nearly-linear time mixed covering and packing LP solver, the algorithm of \cite{PST95} still requires $\tilde O_\epsilon(mn)$ time. 	
	
	For the problem $R||\sum_j w_jC_j$, 
	we give a nearly-linear size mixed packing and covering LP  that (up to a factor of $1+O(\epsilon)$) is equivalent  to the rectangle LP used by Li \cite{Li20}, Im and Shadloo \cite{IS20}, Im and Li \cite{IL23}. In the rectangle LP, there is a variable $x_{ijs}$ indicating if a job $j$ is scheduled on the machine $i$ and has starting time $s$, and constraints that at most one job is processed at any time on any machine.   To reduce the size of the LP to $\tilde O_{\epsilon}(|E|)$, we partition the time horizon into \emph{windows}, with lengths geometrically increasing by a factor of $1+\epsilon$. We distinguish between two types of scheduling intervals: If a job is scheduled within a window on some machine $i$ (we call this an inside-window interval), then we do not need to capture the precise location of the scheduling interval. On the other hand, if the job starts and ends at two different windows (we call the interval an cross-window interval), we will approximately capture its starting and ending times. To do so, we divide each window into $1/\epsilon$ sub-windows, and let the LP variables capture the two sub-windows containing the starting and completion times.  In the LP, we require all the cross-window intervals incur a congestion of 1: any point $t$ is covered by at most 1 fraction of cross-window intervals. Then we require the total volume of jobs processed inside each window is at most its length.  We show that up to a factor of $1+O(\epsilon)$, a solution to the LP can be converted to one for the rectangle LP with no large cost.  Roughly speaking, the width of window is small compared to its position and so we do not need to know the precise location of an inside-window-interval.  For a cross-window-interval, we may incur an error on its length that is about $\epsilon$ times the total length of its starting window and ending window.  As a sub-window has a small length, and a cross-window-interval covers some window-boundary, the total error incurred will also be small. 
	
	We proceed to our techniques for the weighted completion time problems in the identical machine with precedence constraints setting, i.e., the problem $P|\tprec|\sum_j w_j C_j$ and its special cases.  Due to the precedence constraints, the LP relaxations do not have the mixed packing and covering form anymore. 
	Nevertheless, the multiplicative weight update (MWU) framework can still be applied.  We enclose the precedence constraints in a polytope $\calQ$. In each iteration of the MWU framework, we guarantee that all these constraints are satisfied, i.e., the vector we obtain is in $\calQ$.  Other than the precedence constraints, we have $\tilde O_\epsilon(\log p_{\max})$ packing inequalities correspondent the $m$-machine constraint. This is due to that we can round completion times to integer powers of $1+\epsilon$.  
	
	The number of iterations the MWU framework takes is $\tilde O_\epsilon(\bar m)$, where $\bar m$ is the number of packing constraints in the LP, without counting the constraints for $\calQ$.  Fortunately we have $\bar m = \tilde O_\epsilon(\log p_{\max})$. To obtain the claimed $\tilde O_\epsilon((n + \kappa)\log p_{\max})$ time,  we need to run each iteration of MWU in nearly-linear time.  The bottleneck comes from finding a vector in $\calQ$ satisfying one aggregated packing constraint, that maximizes a linear objective with non-negative coefficients.
	
	A key technical contribution of our paper is an oracle for the problem. For an appropriately defined directed acyclic graph $G = (V, E)$, the polytope $\calQ$ can be formulated as $\{\bfy \in [0, 1]^V: y_v \leq y_u, \forall vu \in E\}$. For two given row vectors $\bfa, \bfb \in \R_{\geq 0}^V$, the aggregated LP in each iteration of MWU is: $\max \bfa\bfy$ subject to $\bfy \in \calQ$ and $\bfb\bfy \leq 1$. Using LP duality, the problem is reduced to the special single commodity maximum flow problem we introduced:  We have bounded supplies and demands on sources and sinks, but infinite capacities on edges. When allowing a $(1+\epsilon)$-approximation for the scheduling problem, we need to find a flow whose value is at least the maximum value for the instance with sink capacities scaled by $\frac{1}{1+\epsilon}$.  This is done by our nearly-linear time maximum-flow algorithm for the special case.  
	
	\subsection{Other Related Work}
	The makespan minimization problem in the identical machine setting with precedence constraints, i.e., the problem $P|\tprec|C_{\max}$, is another classic problem in scheduling theory. The seminal work of Graham \cite{Gra69} gives a simple greedy algorithm that achieves a $2$-approximation. On the negative side, Lenstra and Rinnooy Kan \cite{LR78} proved a $(4/3-\epsilon)$-hardness for the problem. Under the stronger version of the Unique Game Conjecture (UGC) introduced by Bansal and Khot \cite{BK09}, Svensson \cite{Sve10} showed that the problem is hard to approximate within a factor of $2-\epsilon$ for any $\epsilon > 0$.  
	Much work has focused on the special case where $m = O(1)$ and all jobs have size $1$ \cite{LR21, Garg18, Li21}, for which obtaining a PTAS is a long-standing open problem.

	The multiplicative weight update (MWU) method for solving linear programs has played an important role in a wide range of applications. Some of its foundational work can be found in a beautiful survey by Arora, Hazan and Kale \cite{AHK12}.   
	There has been a vast literature on solving packing, covering, and mixed packing and covering LPs  approximately to a factor of $1+\epsilon$ using iterative methods \cite{SM90, PST95, LN93, You95, GK07, KY07, KY13, You14, AO15, CQ18}.  In particular, to solve a mixed packing and covering LP with $\bar n$ variables, $\bar m$ constraints and $\bar N$ non-zero coefficients, the algorithm of Young \cite{You14} returns $(1+\epsilon)$-approximation deterministically in $O\left(\frac{\bar N\ln \bar m}{\epsilon^2}\right)$ time.  The dependence on $\epsilon$ has been improved slightly by Chekuri and Quanrud \cite{CQ18}, who gave a randomized algorithm with running time $\tilde O\left(\frac{\bar N}{\epsilon} + \frac{\bar m}{\epsilon^2} + \frac{\bar n}{\epsilon^3}\right)$, where $\tilde O(\cdot)$ hides a poly-logarithmic factor.

	There has been a recent surge of interest in designing fast or nearly-linear time approximation algorithms for combinatorial optimization problems \cite{CQ17, CQ18a, CHQ20, CGL20, Li21, BG21}.\smallskip
	
	\noindent{\bf Organization.}\  \ The rest of the paper is organized as follows. In Section~\ref{sec:prelim}, we define some elementary notations used across the paper, and describe the result of Young \cite{You14} on solving mixed packing and covering LPs, and a template solver for packing LPs over an ``easy'' polytope. In Sections~\ref{sec:unrelated-makespan} and \ref{sec:unrelated-wC}, we present our results for $R||C_{\max}$ and $R||\sum_j w_j C_j$. 	
	\ifdefined\CR
	Due to the page limit, we leave our algorithms for $P|\tprec|\sum_j w_jC_j$ and the two special cases $1|\tprec|\sum_j w_jC_j$ and $P|\tprec, p_j = 1|\sum_j w_jC_j$ to the full version of the paper. The full version also contains other technicalities, such as how to handle the case where input integers are not polynomially bounded,  how to reduce problems to the promise versions and how to use the self-balancing binary search tree data structure to run a list scheduling algorithm.
	\else
	\fi
	
	
	\section{Preliminaries}
	\label{sec:prelim}
	We use bold lowercase letters to denote vectors, and their correspondent italic letters to denote their coordinates.  We use bold uppercase letters to denote matrices. $\bf0$ and $\bf1$ are used to denote the all-$0$ and all-$1$ vectors whose domain can be inferred from the context.  Given a template vector $\bfv$ over some finite domain,  and a subset $S$ of the domain, let $v(S) := \sum_{e \in S}v_e$ be the sum of $v$-values over elements in $S$.
	
	Given an (undirected) graph $H = (V_H, E_H)$, we use $\delta_{H}(v), N_{H}(v), \delta_{H}(U), N_{H}(U)$ to respectively denote the sets of incident edges of $v \in V_H$, neighbors of $v$, edges between the set $U\subseteq V_H$ and $V_H \setminus U$, and vertices in $V_H \setminus U$ with at least one neighbor in $U$, in the graph $H$.  Given a directed graph $H = (V_H, E_H)$, 
	for every $v \in V_H$, we use $\delta^+_H(v)$ and $\delta^-_H(v)$ to denote the sets of outgoing and incoming edges of $v$ respectively. For every $U \subseteq V_H$, let $\delta^+_H(U):=\{uv  \in E_H: u \in U, v \notin U\}$ and $\delta^-_H(U):=\{uv  \in E_H: u \notin U, v \in U\}$ be the sets of edges from $U$ to $V_H\setminus U$ and from $V_H\setminus U$ to $U$ respectively. 		
	When $H = G$ for the graph $G$ in the context (which can be undirected or directed), we omit the subscript $H$ in the notations.
	
	For cleanness of exposition, we use $\tilde O_{\epsilon}(\cdot)$ to hide factors that are polynomial in $\frac1\epsilon$ and poly-logarithmic in the size of the input. 
	As we gave the first nearly-linear time algorithms for the studied problems, the hidden factors are small compared to the improvements we make.   The final approximation ratios we get have an additive factor of $O(\epsilon)$ (instead of $\epsilon$); but it can be reduced to $\epsilon$ if we start from a smaller $\epsilon$.  By default, for an (undirected or directed) graph $H = (V_H, E_H)$ we deal with, we assume every vertex is incident to at least one edge so $|E_H| = \Omega(V_H)$. 
	%
	%
	For any $a \in \R$, we define $(a)_+$ as $\max\{a, 0\}$. 

	\subsection{Nearly-Linear Time Mixed Packing and Covering LP Solver}
	A mixed packing and covering LP is an LP of the following form: 
	\begin{align}
		\text{find } \bfx \qquad \text{such that} \qquad \bfx \geq 0,\qquad \bfP \bfx \leq \bf1\qquad\text{and}\qquad\bfC \bfx \geq \bf1, \tag{\textrm{MPC}} \label{LP:MPC}
	\end{align}
	where $\bfP \in \R_{\geq 0}^{\bar m_{\bfP} \times \bar n}$ and $\bfC \in \R_{\geq 0}^{\bar m_{\bfC} \times \bar n}$ for some positive integers $\bar n, \bar m_{\bfP}, \bar m_{\bfC}$.  Let $\bar m = \bar m_\bfP + \bar m_\bfC$ and $\bar N$ be the total number of non-zeros in $\bfP$ and $\bfC$.  Young \cite{You14} developed a nearly-linear time algorithm that solves \eqref{LP:MPC} approximately: 
	\begin{theorem}[\cite{You14}] \label{thm:fast-LP-solver}
		Given an instance of \eqref{LP:MPC} and $\epsilon > 0$, there is an $O\left(\frac{\bar N \log \bar m}{\epsilon^2}\right)$-time algorithm that either claims \eqref{LP:MPC} is infeasible, or outputs an $\bfx \in \R_{\geq 0}^{\bar n}$ such that $\bfP \bfx \leq (1+\epsilon)\bf1$ and $\bfC \bfx \geq \frac{\bf1}{1+\epsilon}$.
	\end{theorem}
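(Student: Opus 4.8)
This statement is the theorem of Young~\cite{You14}, so I only sketch the plan one would follow to establish it. The approach is to recast \eqref{LP:MPC} as a Lagrangian saddle‑point problem and run a width‑independent multiplicative‑weights coordinate‑ascent scheme. Write $P_i$ and $C_\ell$ for the rows of $\bfP$ and $\bfC$, and fix a smoothing parameter $\beta = \Theta(\epsilon^{-1}\ln\bar m)$. One tracks the two smoothed potentials $\Phi(\bfx) := \sum_i \exp(\beta\, P_i\bfx)$ and $\Psi(\bfx) := \sum_\ell \exp(-\beta\, C_\ell\bfx)$; note $\tfrac1\beta\ln\Phi$ overestimates $\max_i P_i\bfx$ by at most $\tfrac1\beta\ln\bar m_\bfP = O(\epsilon)$, and $-\tfrac1\beta\ln\Psi$ underestimates $\min_\ell C_\ell\bfx$ by at most $\tfrac1\beta\ln\bar m_\bfC = O(\epsilon)$. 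Starting from $\bfx=\mathbf{0}$, the algorithm repeatedly chooses a coordinate $j$ and raises $x_j$. Letting $\bfy,\bfz$ be the probability vectors with $y_i \propto \exp(\beta P_i\bfx)$ and $z_\ell \propto \exp(-\beta C_\ell\bfx)$, a unit increase of $x_j$ grows $\ln\Phi$ at rate $\beta(\bfy^\top\bfP)_j$ and grows $-\ln\Psi$ at rate $\beta(\bfz^\top\bfC)_j$, so one increases a coordinate $j$ with $(\bfz^\top\bfC)_j \ge (1-O(\epsilon))(\bfy^\top\bfP)_j$, stopping as soon as $-\tfrac1\beta\ln\Psi(\bfx) \ge \tfrac1{1+\epsilon}$, which already guarantees $\bfC\bfx \ge \tfrac1{1+\epsilon}\mathbf{1}$.

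Correctness is argued via the potentials. Since every increment is made at a coordinate where the covering potential grows at least $(1-O(\epsilon))$ times as fast as the packing potential, one maintains an invariant coupling $\ln\Phi(\bfx)$ and $\ln\Psi(\bfx)$; combined with the $O(\epsilon)$ gaps between the smoothed potentials and the true $\max$ and $\min$, this yields $\bfP\bfx \le (1+\epsilon)\mathbf{1}$ at the moment the stopping condition is met, after a routine tuning of the $\Theta(\cdot)$ constants. A good coordinate exists whenever \eqref{LP:MPC} is feasible: a feasible $\bfx^\star$ with $\bfC\bfx^\star\ge\mathbf{1}$ and $\bfP\bfx^\star\le\mathbf{1}$ satisfies $\sum_j(\bfz^\top\bfC)_j x^\star_j \ge 1 \ge \sum_j(\bfy^\top\bfP)_j x^\star_j$, so some $j$ with $x^\star_j>0$ has $(\bfz^\top\bfC)_j \ge (\bfy^\top\bfP)_j$. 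Conversely, if the process reaches a configuration with $(\bfz^\top\bfC)_j < (1-O(\epsilon))(\bfy^\top\bfP)_j$ for every $j$, then, since $\bfy$ and $\bfz$ are probability vectors, the pair $\big(\bfy,\ (1-O(\epsilon))^{-1}\bfz\big)$ is a Farkas certificate of infeasibility of $\{\bfP\bfx\le\mathbf{1},\ \bfC\bfx\ge\mathbf{1},\ \bfx\ge\mathbf{0}\}$ --- it has $(1-O(\epsilon))^{-1}\bfz^\top\bfC \le \bfy^\top\bfP$ componentwise and $\mathbf{1}^\top\big((1-O(\epsilon))^{-1}\bfz\big) > 1 = \mathbf{1}^\top\bfy$ --- and the algorithm reports infeasibility.

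The running time, and the step I expect to be the main obstacle, is making both the number of increments and the per-increment work independent of the magnitudes of the entries of $\bfP$ and $\bfC$ (their width), so that the total is $O(\bar N\epsilon^{-2}\ln\bar m)$ rather than scaling with the coefficients. The plan is to use non-uniform steps: for the selected coordinate $j$, take the largest step that advances a single coordinate of $\bfP\bfx$ or of $\bfC\bfx$ by a $\Theta(\epsilon)$ amount on an appropriately rescaled per-row basis. A charging argument then shows each of the $\bar m$ rows is advanced only $O(\epsilon^{-2}\ln\bar m)$ times in total, bounding the number of increments by $O(\bar m\,\epsilon^{-2}\ln\bar m)$. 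For the per-increment cost, one maintains $\bfy$, $\bfz$ and the scores $(\bfy^\top\bfP)_j$, $(\bfz^\top\bfC)_j$ lazily, refreshing only the entries lying in the rows just advanced and re-evaluating the affected scores through a priority queue keyed over the nonzeros of $\bfP$ and $\bfC$; since advancing a row costs its number of nonzeros and happens $O(\epsilon^{-2}\ln\bar m)$ times, the total amortizes to $O(\bar N\,\epsilon^{-2}\ln\bar m)$ arithmetic operations. Finally, rounding the step sizes and exponential weights to $O(\log(\bar N/\epsilon))$ bits of precision preserves every invariant up to a further $1\pm O(\epsilon)$ factor and keeps each operation $O(1)$-time, giving the claimed running time. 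Full details are in~\cite{You14}.
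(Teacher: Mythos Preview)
The paper does not prove this theorem at all: it is quoted verbatim as a black box from Young~\cite{You14} and used only through its statement. Your proposal is therefore not comparable to any argument in the paper, but it is a faithful high-level sketch of Young's width-independent multiplicative-weights scheme (exponential packing/covering potentials, coordinate increments chosen where the covering gradient dominates the packing gradient, non-uniform step sizes so that each row advances $O(\epsilon^{-2}\ln\bar m)$ times, and lazy maintenance of the weighted column scores), so as a summary of the cited result it is appropriate.
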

	
	\subsection{Template Packing LP Solver over a Simple Polytope}
	\label{subsec:temp-lp-solver}
	In this section, we describe a template MWU-based LP solver for a packing linear program with an additional requirement that the solution is inside an ``easy'' polytope $\calQ$.  The framework we describe here is introduced in \cite{CJV15} and later reformulated in \cite{CQ17}. 
	
	Let $\bfP \in \R_{\geq 0}^{{\bar m} \times {\bar n}}$ be a non-negative matrix, with ${\bar N}$ non-zero entries. Let $\bfa \in \R_{\geq 0}^{{\bar n}}$ be a row vector, and $\calQ \subseteq \R_{\geq 0}^{{\bar n}}$ be a polytope which is defined by ``easy'' constraints.  We focus on the following linear program:
	\begin{align}
		\max\ \bfa\bfx \qquad \text{subject to} \qquad \bfx \in \calQ\qquad\text{and}\qquad\bfP\bfx \leq {\bf1}. \tag{$\mathrm{P}_\calQ$} \label{LP:packing}
	\end{align}
	Throughout the paper, we make sure all instances of \eqref{LP:packing} we deal with are feasible. 
	\begin{definition}
		Let $\epsilon \in (0, 1), \phi > 0$ be two parameters. An $(\epsilon, \phi)$-approximate solution to \eqref{LP:packing} is a vector $\bfx \in \calQ$ satisfying $\bfP\bfx \leq (1+\epsilon)\bf1$ and $\bfa\bfx \geq \bfa\bfx^* - \phi$, where $\bfx^* \in \calQ$ is the optimum solution to \eqref{LP:packing}.
	\end{definition}
	
	As a hindsight, we only allow a loss of an additive factor $\phi$ in the objective function of the LP for $P|\tprec|\sum_j w_jC_j$, which will be set to be a polynomially small term.  As is typical in a MWU framework, we need to solve the following LP where the constraints $\bfP\bfx \leq \bf1$ are aggregated into one constraint $\bfb\bfy \leq 1$, where $\bfb \in \R_{\geq 0}^{{\bar n}}$ is a row vector:
	\begin{align}
		\max\ \bfa\bfy \qquad \text{subject to} \qquad \bfy \in \calQ\qquad\text{and}\qquad\bfb\bfy \leq 1. \label{LP:packing-aggregate}
	\end{align}
	Again we guarantee all instances of \eqref{LP:packing-aggregate} we encounter are feasible. 
	\begin{definition}
		Let $\epsilon \in (0, 1), \phi > 0$ be two parameters.
		An $(\epsilon, \phi)$-approximate solution to \eqref{LP:packing-aggregate} is a vector $\bfy \in \calQ$ satisfying $\bfb\bfy \leq 1+\epsilon$ and $\bfa\bfy \geq \bfa\bfy^* - \phi$, where $\bfy^*$ is the optimum solution to the LP.  An $(\epsilon, \phi)$-oracle for \eqref{LP:packing-aggregate} is an algorithm that,  given an instance of \eqref{LP:packing-aggregate}, and $\epsilon  \in (0, 1), \phi > 0$,  outputs an $(\epsilon, \phi)$-approximate solution $\bfy$ to \eqref{LP:packing-aggregate}.
	\end{definition}
	
	
	\begin{algorithm}[ht]
		\caption{LP Solver for \eqref{LP:packing}}
		\label{alg:LP-solver}
		\begin{algorithmic}[1]
			\Require{an instance of \eqref{LP:packing}, $\epsilon \in (0, 1), \phi > 0$, and $(\epsilon, \phi)$-oracle $\calO$ for \eqref{LP:packing-aggregate}}
			\Ensure{an $(O(\epsilon), \phi)$-approximate solution $\bfx$ for \eqref{LP:packing}}
			\State $t \gets 0, \rho \gets \frac{\ln {\bar m}}{\epsilon^2}, \bfx^{(0)} \gets {\bf0} \in \R_{\geq 0}^{\bar n}, \bfu^{(0)} \gets {\bf1} \in \R_{\geq 0}^{\bar m}$  
			\Statex\Comment{$\bfx^{(t)}$'s are column vectors and $\bfu^{(t)}$'s are row vectors}
			\While{$t < 1$}\label{step:lp-solver-main-loop}
			\State define $\bfb := \frac{\bfu^{(t)}}{|\bfu^{(t)}|}\bfP$, and run the oracle $\calO$ for \eqref{LP:packing-aggregate} to obtain an $(\epsilon, \phi)$-approximate solution $\bfy$ for \eqref{LP:packing-aggregate} \label{step:lp-solver-use-oracle}
			\State $\displaystyle \delta \gets \min\left\{\min_{i \in [{\bar m}]} \frac{1}{\rho \cdot  \bfP_i\bfy}, 1-t \right\}$\label{step:lp-solver-delta} \medskip
			\For{every $i \in [{\bar m}]$}
			$u^{(t+\delta)}_i \gets u^{(t)}_i \cdot \exp\big(\delta\epsilon \rho \cdot  \bfP_i \bfy\big)$ 
			\EndFor
			\State $\bfx^{(t + \delta)} \gets \bfx^{(t)} + \delta \bfy, t \gets t + \delta$
			\EndWhile
			\State \Return $\bfx:=\bfx^{(1)}$
		\end{algorithmic}
	\end{algorithm}
	
	The template LP solver is described in Algorithm~\ref{alg:LP-solver}, where we use $\bfP_i$ to denote the $i$-th row vector of $\bfP$.  By our assumption that \eqref{LP:packing} is feasible, the instance of \eqref{LP:packing-aggregate} defined in every execution of Step~\ref{step:lp-solver-use-oracle} is also feasible.  The performance of the algorithm is summarized in the following theorem. 
	
	\begin{restatable}{theorem}{thmtemplatelpsolver}
		Algorithm \ref{alg:LP-solver} will return an $(O(\epsilon), \phi)$-approximate solution $\bfx$ to \eqref{LP:packing}, within $O(\frac{{\bar m} \log {\bar m}}{\epsilon^2})$ iterations of Loop~\ref{step:lp-solver-main-loop}.
	\end{restatable}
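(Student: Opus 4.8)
The plan is to carry out the standard width-free multiplicative-weight-update analysis in the continuous-time formulation of \cite{CJV15,CQ17}, where the scalar $t$ running from $0$ to $1$ measures cumulative progress. The key object is the potential $\Phi(t):=|\bfu^{(t)}|=\sum_{i\in[\bar m]}u_i^{(t)}$, together with the exact weight--load identity obtained by unrolling the updates: since $\bfu^{(0)}={\bf1}$, each iteration multiplies $u_i$ by $\exp(\delta\epsilon\rho\,\bfP_i\bfy)$ and adds $\delta\bfy$ to the running iterate, so at termination $u_i^{(1)}=\exp(\epsilon\rho\,\bfP_i\bfx)$ with $\bfx=\bfx^{(1)}$, giving $\bfP_i\bfx=\frac{\ln u_i^{(1)}}{\epsilon\rho}\le\frac{\ln\Phi(1)}{\epsilon\rho}$. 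Thus feasibility of the output reduces to an upper bound on $\ln\Phi(1)$. The objective part is almost free: the $\delta$'s are nonnegative and sum to $1$ (the loop exits exactly when $t$ reaches $1$), so $\bfx=\sum_t\delta\bfy$ is a convex combination of the oracle outputs $\bfy\in\calQ$, whence $\bfx\in\calQ$ and $\bfa\bfx$ is the same convex combination of the values $\bfa\bfy$.

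For the $\ln\Phi(1)$ bound I would track $\Phi$ across a single iteration. The choice of $\delta$ in Step~\ref{step:lp-solver-delta} ensures $z_i:=\delta\rho\,\bfP_i\bfy\in[0,1]$ for every $i$, so by convexity of $s\mapsto e^{\epsilon s}$ on $[0,1]$ (a convex function is below its chord) we get $\exp(\delta\epsilon\rho\,\bfP_i\bfy)\le 1+(e^\epsilon-1)\,\delta\rho\,\bfP_i\bfy$. Multiplying by $u_i^{(t)}$, summing over $i$, and using $\sum_i u_i^{(t)}\bfP_i\bfy=\Phi(t)\,\bfb\bfy$ with $\bfb=\frac{\bfu^{(t)}}{|\bfu^{(t)}|}\bfP$ together with the oracle guarantee $\bfb\bfy\le 1+\epsilon$, this yields $\Phi(t+\delta)\le\Phi(t)\exp\big((e^\epsilon-1)(1+\epsilon)\delta\rho\big)$. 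Telescoping over the whole run and using $\Phi(0)=\bar m$ gives $\ln\Phi(1)\le\ln\bar m+(e^\epsilon-1)(1+\epsilon)\rho$; substituting $\rho=\frac{\ln\bar m}{\epsilon^2}$ and the elementary estimate $e^\epsilon-1\le\epsilon+\epsilon^2$ for $\epsilon\in(0,1)$ produces $\ln\Phi(1)\le(1+O(\epsilon))\epsilon\rho$, hence $\bfP_i\bfx\le 1+O(\epsilon)$ for all $i$.

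For the objective guarantee, observe that the optimum $\bfx^*$ of \eqref{LP:packing} is feasible for every aggregated instance \eqref{LP:packing-aggregate} that arises: $\bfx^*\in\calQ$, and $\bfb\bfx^*=\frac{\bfu^{(t)}}{|\bfu^{(t)}|}\bfP\bfx^*\le\frac{\bfu^{(t)}}{|\bfu^{(t)}|}{\bf1}=1$ since $\bfP\bfx^*\le{\bf1}$ and the multipliers are nonnegative with unit sum. Hence the optimum $\bfy^*$ of that aggregated LP has $\bfa\bfy^*\ge\bfa\bfx^*$, so the $(\epsilon,\phi)$-oracle returns $\bfy$ with $\bfa\bfy\ge\bfa\bfy^*-\phi\ge\bfa\bfx^*-\phi$ in every iteration; taking the convex combination from the first paragraph gives $\bfa\bfx\ge\bfa\bfx^*-\phi$. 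Combined with $\bfx\in\calQ$ and $\bfP\bfx\le(1+O(\epsilon)){\bf1}$, this is exactly an $(O(\epsilon),\phi)$-approximate solution to \eqref{LP:packing}.

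The iteration count is the one place needing a little care, and I expect it to be the main point of friction. I would split off the final iteration, which is the one with $\delta=1-t$. Every other iteration has $\delta=\min_i\frac{1}{\rho\bfP_i\bfy}$, so some index $i^*$ attains the minimum and receives an update factor of exactly $e^\epsilon$, i.e.\ $\ln u_{i^*}$ increases by $\epsilon$. Since each $\ln u_i$ is non-decreasing (every update factor is $\ge 1$) and, by the displayed bound together with $\Phi(t)\le\Phi(1)$, never exceeds $\ln\Phi(1)\le(1+O(\epsilon))\epsilon\rho$, a fixed index can play the role of $i^*$ at most $(1+O(\epsilon))\rho$ times; summing over the $\bar m$ indices bounds the number of non-final iterations by $O(\bar m\rho)=O\big(\frac{\bar m\log\bar m}{\epsilon^2}\big)$, and adding one for the final iteration gives the claim. (No width or boundedness assumption on $\calQ$ is needed: this bounds the iteration count directly; if all $\bfP_i\bfy=0$ in some iteration then $\delta=1-t$ and the loop simply terminates.) The remaining obstacle is purely bookkeeping: choosing the constants in the one-step $\Phi$-recursion so that the single bound on $\ln\Phi(1)$ simultaneously delivers $\bfP_i\bfx\le 1+O(\epsilon)$ and the $O(\rho)$ per-constraint saturation count.
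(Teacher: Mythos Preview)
Your proposal is correct and follows essentially the same approach as the paper's proof: both track the potential $|\bfu^{(t)}|$, linearize the exponential update using $\delta\rho\,\bfP_i\bfy\in[0,1]$, telescope to bound $\ln|\bfu^{(1)}|$ by $(1+O(\epsilon))\epsilon\rho$, read off $\bfP_i\bfx\le 1+O(\epsilon)$ from $u_i^{(1)}=\exp(\epsilon\rho\,\bfP_i\bfx)$, use the convex-combination structure for $\bfx\in\calQ$ and the objective bound, and count iterations by charging each non-final step to the index $i^*$ whose $\ln u_{i^*}$ jumps by exactly $\epsilon$. The only cosmetic difference is that you invoke the chord bound $e^{\epsilon s}\le 1+(e^\epsilon-1)s$ while the paper writes $e^{\epsilon\theta}\le 1+(1+\epsilon)\epsilon\theta$ directly; since $e^\epsilon-1\le\epsilon(1+\epsilon)$ these are the same estimate.
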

	
	\begin{proof}
		Focus on one iteration of Loop~\ref{step:lp-solver-main-loop}. Let $t$ be the value of $t$ at the beginning of the iteration,  $\bfy$ and $\delta$ be the $\bfy$ and $\delta$ obtained in Step~\ref{step:lp-solver-use-oracle} and \ref{step:lp-solver-delta} in the iteration respectively.  Then we have
		\begin{align*}
			&\quad |\bfu^{(t+\delta)}| = \sum_{i \in [{\bar m}]} u^{(t + \delta)}_i = \sum_{i \in [{\bar m}]} u^{(t)}_i \exp(\delta\epsilon \rho\cdot  \bfP_i\bfy) \leq \sum_{i \in [{\bar m}]} u^{(t)}_i (1+(1+\epsilon)\epsilon\cdot \delta\rho \cdot \bfP_i \bfy) \\
			&= |\bfu^{(t)}| + (1+\epsilon)\epsilon\delta\rho\cdot\bfu^{(t)}\bfP\bfy
			\leq |\bfu^{(t)}| + (1+\epsilon)^2\epsilon\delta\rho \cdot |\bfu^{(t)}| \leq |\bfu^{(t)}| \exp((1+\epsilon)^2\epsilon\delta \rho).
		\end{align*}
		The inequality in the first line is by that $\delta\rho \cdot \bfP_i \bfy \in [0, 1]$ for every $i \in [{\bar m}]$ and $e^{\epsilon \theta} \leq 1 + \epsilon\theta + (\epsilon\theta)^2 \leq 1 + \epsilon\theta + \epsilon^2\theta$ for every $\epsilon\in[0, 1]$ and $\theta \in [0, 1]$.  The first inequality in the second line is by that $\frac{\bfu^{(t)}}{|\bfu^{(t)}|}\bfP\bfy  = \bfb\bfy \leq 1 + \epsilon$.
		%
		
		Combining the inequality over all iterations, we have 
		\begin{align}
			|\bfu^{(1)}| \leq |\bfu^{(0)}| \exp\left((1+\epsilon)^2\epsilon \rho\right) = {\bar m}\cdot \exp\left((1+\epsilon)^2\epsilon \rho\right). \label{inequ:u1-to-u0}
		\end{align} 
		%
		For every $i \in [{\bar m}]$, we have $u^{(1)}_i  = \exp\left(\epsilon \rho \cdot \bfP_i \bfx\right)$, where  $\bfx:=\bfx^{(1)}$ is the returned solution. So, by \eqref{inequ:u1-to-u0}, we have $\exp(\epsilon \rho  \cdot \bfP_i \bfx) \leq {\bar m} \cdot \exp((1+\epsilon)^2\epsilon \rho)$, which implies $ \bfP_i \bfx \leq \frac{\ln {\bar m}}{\epsilon \rho}  + (1+\epsilon)^2 \leq (1+\epsilon)^2+\epsilon  = 1 +O(\epsilon)$. 
		
		In the end $\bfx=\bfx^{(1)}$ is a convex combination of vectors $\bfy$ obtained in all iterations. As each $\bfy$ is in $\calQ$, we have $\bfx \in \calQ$. Moreover, for the instance of \eqref{LP:packing-aggregate} in any iteration, $\bfx^*$ is a valid solution. So, the optimum solution $\bfy^*$ to the instance of \eqref{LP:packing-aggregate} has $\bfa\bfy^* \geq \bfa\bfx^*$, and the $\bfy$ returned by the oracle has $\bfa\bfy \geq \bfa\bfy^* - \phi \geq \bfa\bfx^* - \phi$. This implies our final $\bfx$ has $\bfa\bfx \geq \bfa\bfx^* - \phi$. Therefore, $\bfx$ is a $(O(\epsilon), \phi)$-approximate solution to \eqref{LP:packing}.
		
		It remains to bound the number of iterations that Loop~\ref{step:lp-solver-main-loop} can take. 
			In every iteration of loop~\ref{step:lp-solver-main-loop} except for the last one, some $i$ has $\frac{1}{\rho \cdot \bfP_i \bfy} = \delta$, i.e., $\delta\epsilon\rho\cdot \bfP_i \bfy = \epsilon$.   We say $u_i$ is increased fully in the iteration. Notice by \eqref{inequ:u1-to-u0}, each $u_i$ can be increased fully in at most $\frac{\ln\big({\bar m} \exp((1+\epsilon)^2\epsilon\rho)\big)}{\epsilon} = \frac{\ln {\bar m}+ (1+\epsilon)^2\epsilon \rho}{\epsilon} = O\left(\frac{\ln {\bar m}}{\epsilon^2}\right)$ iterations.  This bounds the number of iterations by $O\left(\frac{{\bar m}\log {\bar m}}{\epsilon^2}\right)$ as there are ${\bar m}$ different values of $i$.
	\end{proof}

	For each iteration of loop~\ref{step:lp-solver-main-loop}, the steps other than Step~\ref{step:lp-solver-use-oracle} takes $O({\bar N})$ time. Therefore, the running time of Algorithm~\ref{alg:LP-solver} is $O\left(\frac{{\bar m}\log {\bar m} \cdot {\bar N}}{\epsilon^2}\right)$, plus the time for running the oracle $O\left(\frac{{\bar m}\log {\bar m}}{\epsilon^2}\right)$ times.
	
	
	\section{Unrelated Machine Makespan Minimization}
	\label{sec:unrelated-makespan}
	In this section, we give the nearly-linear time $(2+\epsilon)$-approximation algorithm for the unrelated machine makespan minimization problem, i.e, the problem $R||C_{\max}$. Recall that we are given a bipartite graph $G = (M, J, E)$ and a $p_{ij} \in \Z_{>0}$ for every $ij  \in E$. Recall that $N(j), N(i), \delta(j)$ and $\delta(i)$ denote the set of neighbors or incident edges of a job $j \in J$ or a machine $i \in M$, in the graph $G$.
	
	Via a standard technique described in the full version of the paper, 
	we can focus on the following promise version:
	\begin{itemize}
		\item We are given a number $P \geq \opt$, where $\opt$ is the optimal makespan of the instance, and our goal is to construct an assignment of makespan at most $(2+O(\epsilon))P$. 
	\end{itemize}	
	For some $ij  \in E$ with $p_{ij} > P$, we remove $ij $ from $E$, as the optimum solution does not use the edge.  The following is the natural LP relaxation for the problem: \vspace*{-10pt}
	
	\noindent\begin{minipage}[t]{0.35\textwidth}
		\begin{align}
			\sum_{j \in N(i)} p_{ij} x_{ij} \leq P,  \forall i \in M \label{LPC:makespan} 
		\end{align}
	\end{minipage}\hfill
	\begin{minipage}[t]{0.33\textwidth}
		\begin{align}
			\sum_{i \in N(j)} x_{ij} \geq1, \forall j \in J \label{LPC:makespan-covered}
		\end{align}
	\end{minipage}\hfill
	\begin{minipage}[t]{0.3\textwidth}
		\begin{align}
			x_{ij} \geq 0, \forall ij  \in E \label{LPC:makespan-non-negative}
		\end{align}
	\end{minipage}

	In the correspondent integer program, $x_{ij} \in \{0, 1\}$ for every $ij  \in E$ indicates whether the job $j$ is assigned to machine $i$. \eqref{LPC:makespan} requires that the makespan of the schedule to be at most $P$, \eqref{LPC:makespan-covered} requires every job to be scheduled. In the linear program, we replace the requirement that $x_{ij} \in \{0, 1\}$ with the non-negativity constraint \eqref{LPC:makespan-non-negative}. 
	
	By the promise that $P \geq \opt$, the LP is feasible.  Therefore, applying Theorem~\ref{thm:fast-LP-solver}, we can solve the LP in $\tilde O_\epsilon(|E|)$ time to obtain an approximate solution $\bfx \in [0, 1]^E$.  By scaling, we can assume \eqref{LPC:makespan-covered} holds with equalities, and \eqref{LPC:makespan} holds with right side replaced by $(1+O(\epsilon))P$.

	To round the solution to an integral assignment in $\tilde O_{\epsilon}(|E|)$-time, we use the grouping idea from \cite{ST93}: For each machine $i \in M$, we break the fractional jobs assigned to $i$ into groups, each containing $\frac{1}{1+\epsilon}$ fractional jobs.  This gives us a bipartite graph $H$ between jobs and groups. Any perfect matching (i.e., a matching covering all jobs $J$) will give a $(2+O(\epsilon))$-approximation for the makespan problem.  In $H$, every subset $J' \subseteq J$ of jobs has at least $(1+\epsilon)|J'|$ neighbors. The $(1+\epsilon)$-factor allows us to design a $\tilde O_{\epsilon}(|E|)$-time algorithm to find a matching covering all jobs $J$, as stated in the following lemma:
	\begin{restatable}{lemma}{lemmapbm}
		\label{lemma:pbm}
		Assume we are given a bipartite graph $H = (S, T, E_H)$ and $\epsilon > 0$ such that $|N_H(S')| \geq (1+\epsilon)|S'|$ for every $S' \subseteq S$.  In $O\left(\frac{|E_H|}{\epsilon}\log |S|\right)$-time, we can find a matching in $H$ covering all vertices in $S$.
	\end{restatable}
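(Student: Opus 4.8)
The plan is to run the shortest-augmenting-path algorithm of Hopcroft and Karp (equivalently, Dinic's algorithm on the unit-capacity $s$--$t$ flow network), and to show that the $(1+\epsilon)$-expansion forces every intermediate matching that does not yet saturate $S$ to admit an augmenting path of length $O\!\left(\frac{\log|S|}{\epsilon}\right)$. This caps the number of Hopcroft--Karp phases by $O\!\left(\frac{\log|S|}{\epsilon}\right)$, and since a phase costs $O(|E_H|)$ time (one layered BFS plus one DFS that finds a maximal set of vertex-disjoint shortest augmenting paths), the total running time is $O\!\left(\frac{|E_H|}{\epsilon}\log|S|\right)$. First note that applying the hypothesis over all $S'\subseteq S$ gives in particular $|N_H(S')|\ge|S'|$, i.e.\ Hall's condition, so a matching saturating $S$ exists; Hopcroft--Karp terminates with a maximum matching, hence with one saturating $S$.

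The heart of the matter is the length bound. Fix a matching $M$ leaving some vertex of $S$ unmatched, let $U\subseteq S$ be the set of $M$-unmatched vertices, and build the alternating BFS layers $L_0=U$, $L_{2i+1}=N_H(L_{2i})\setminus(L_1\cup\cdots\cup L_{2i-1})$, $L_{2i+2}=\{M(t):t\in L_{2i+1}\}\setminus(L_0\cup\cdots\cup L_{2i})$, and suppose the shortest augmenting path has length $2\ell+1$, so that $L_1,\dots,L_{2\ell-1}$ consist entirely of $M$-matched vertices. Two facts drive the estimate: \textbf{(i)} for $1\le i\le\ell$ the map $t\mapsto M(t)$ is a bijection from $L_{2i-1}$ onto $L_{2i}$ — injectivity is immediate from $M$ being a matching, and no partner $M(t)$ can lie in an earlier even layer $L_{2j}$ ($j<i$), since $L_{2j}$ is covered by matching edges from $L_{2j-1}$ and uniqueness of $M$ would then force $t\in L_{2j-1}$, contradicting the disjointness of layers; hence $|L_{2i}|=|L_{2i-1}|$; and \textbf{(ii)} the BFS explores all of $N_H$, so $N_H(L_0\cup L_2\cup\cdots\cup L_{2i})\subseteq L_1\cup L_3\cup\cdots\cup L_{2i+1}$. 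Writing $s_i:=\sum_{k\le i}|L_{2k}|$, fact (i) gives $\sum_{k=1}^{i+1}|L_{2k-1}|=\sum_{k=1}^{i+1}|L_{2k}|=s_{i+1}-|U|$ for $i\le\ell-1$, so combining with (ii) and the expansion hypothesis, $s_{i+1}-|U|\ge|N_H(L_0\cup\cdots\cup L_{2i})|\ge(1+\epsilon)s_i$, whence $s_{i+1}\ge(1+\epsilon)s_i$ for $0\le i\le\ell-1$ and therefore $|S|\ge s_\ell\ge(1+\epsilon)^\ell|U|\ge(1+\epsilon)^\ell$. This yields $\ell\le\log_{1+\epsilon}|S|=O\!\left(\frac{\log|S|}{\epsilon}\right)$ for $\epsilon\in(0,1]$, i.e.\ an augmenting path of length $O\!\left(\frac{\log|S|}{\epsilon}\right)$ exists.

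With this claim in hand I would invoke the standard Hopcroft--Karp analysis: each phase augments $M$ along a maximal set of vertex-disjoint shortest augmenting paths, and the length of the shortest augmenting path strictly increases (by at least $2$) from one phase to the next. As soon as this length exceeds the bound $\ell^\ast=O\!\left(\frac{\log|S|}{\epsilon}\right)$ of the previous paragraph, no augmenting path remains, so $M$ saturates $S$ and the algorithm halts; hence the number of phases is $O\!\left(\frac{\log|S|}{\epsilon}\right)$. Restricting, after the first BFS of each phase, to vertices reachable from $U$ and discarding isolated vertices, each phase is genuinely $O(|E_H|)$, and the overall running time is $O\!\left(\frac{|E_H|}{\epsilon}\log|S|\right)$, as claimed.

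The main obstacle is getting the layered-BFS bookkeeping of fact (i) exactly right — in particular the verification that the matching partner of a matched vertex in an odd layer never falls back into a previously discovered even layer, so that consecutive layer sizes telescope cleanly into the bound $s_i=|U|+\sum_{k\le i}|L_{2k-1}|$. Once that is nailed down, the geometric blow-up of $s_i$ and the Hopcroft--Karp wrapper are routine.
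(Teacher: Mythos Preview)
Your proposal is correct and follows essentially the same approach as the paper: run Hopcroft--Karp and show via a geometric-growth argument on the alternating BFS that the $(1+\epsilon)$-expansion forces any non-saturating matching to admit an augmenting path of length $O\!\left(\frac{\log|S|}{\epsilon}\right)$. The only notable difference is packaging: the paper works directly with the \emph{cumulative} reachable sets $S^\ell=\{s\in S:\text{reachable by an alternating walk of length}\le 2\ell\}$ and $T^\ell=N_H(S^\ell)$, which lets it write $(1+\epsilon)|S^\ell|\le|T^\ell|\le|S^{\ell+1}|$ in one line and sidesteps the layer-by-layer bijection you flag as ``fact (i)''; your use of individual BFS layers $L_k$ is equivalent but requires the extra bookkeeping you identify.
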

	\begin{proof}
		Let $L = \floor{\log_{1+\epsilon}|S|} + 1 > \log_{1+\epsilon}|S|$.  Then we use the shortest-augmenting path algorithm of  Hopcroft and Karp \cite{HK73} to find a matching for which there is no augmenting path of length at most $2L+1$.  The running time of the algorithm can be made to $O(|E_H| L) = O(\frac{|E_H|}{\epsilon}\log |S|)$. 
		%
		%
		It remains to show the following lemma: 
		\begin{restatable}{lemma}{lemmashortestaugmenting}
			\label{lemma:shortest-augmenting}
			Let $F$ be a matching in $H$ for which there is no augmenting path of length at most $2L+1$. Then all vertices in $S$ are matched in the matching $F$.
		\end{restatable}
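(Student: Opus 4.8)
The plan is a quantitative version of the standard alternating-BFS argument, exploiting the $(1+\epsilon)$-slack in Hall's condition. Suppose for contradiction that some $s_0 \in S$ is left unmatched by $F$. Starting from $A_0 := \{s_0\}$, I would build alternating layers: having defined $A_0,\dots,A_{i-1} \subseteq S$ and $B_1,\dots,B_{i-1} \subseteq T$, set $B_i := N_H(A_{i-1}) \setminus (B_1 \cup \cdots \cup B_{i-1})$, and—provided every vertex of $B_i$ is matched by $F$—let $A_i$ be the set of $F$-partners of $B_i$. A routine induction shows that for each $t \in B_i$ there is an $F$-alternating path from $s_0$ to $t$ of length $2i-1$ whose last edge is non-matching (the edge landing in $B_i$ cannot lie in $F$, since its endpoint in $A_{i-1}$ is either $s_0$, which is unmatched, or an $F$-partner of some vertex of $B_{i-1}$, so its $F$-partner already lies in $B_1 \cup \cdots \cup B_{i-1}$). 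Hence if some vertex of $B_i$ were unmatched, that path would be an augmenting path of length $2i-1$; since $F$ has no augmenting path of length at most $2L+1$, every vertex of $B_i$ is matched whenever $i \le L+1$, so the layers $A_i$ are well-defined for all $i \le L+1$.

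Next I would record the two facts that drive the counting. First, the layers are pairwise disjoint: the $B_i$ are disjoint by construction, $F$-partners are unique so the $A_i$ with $i \ge 1$ are pairwise disjoint, and $A_0 = \{s_0\}$ is disjoint from them because $s_0$ is unmatched; consequently $|A_0 \cup \cdots \cup A_i| = 1 + \sum_{k=1}^{i} |B_k| = 1 + |B_1 \cup \cdots \cup B_i|$. Second, $N_H(A_0 \cup \cdots \cup A_{i-1}) \subseteq B_1 \cup \cdots \cup B_i$, immediately from the definition of the $B_k$'s.

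Combining these with the hypothesis applied to the nonempty set $S' = A_0 \cup \cdots \cup A_{i-1}$ gives $|B_1 \cup \cdots \cup B_i| \ge |N_H(A_0 \cup \cdots \cup A_{i-1})| \ge (1+\epsilon)\,|A_0 \cup \cdots \cup A_{i-1}|$, hence $|A_0 \cup \cdots \cup A_i| = 1 + |B_1 \cup \cdots \cup B_i| > (1+\epsilon)\,|A_0 \cup \cdots \cup A_{i-1}|$. Iterating from $|A_0| = 1$ yields $|A_0 \cup \cdots \cup A_i| > (1+\epsilon)^i$ for every $i \le L+1$. Taking $i = L$ and using $L = \floor{\log_{1+\epsilon}|S|} + 1 > \log_{1+\epsilon}|S|$, we get $(1+\epsilon)^L > |S|$, so $|A_0 \cup \cdots \cup A_L| > |S|$; but this union is a subset of $S$, a contradiction. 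Therefore every vertex of $S$ is matched by $F$.

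The argument is essentially bookkeeping, so I do not expect a genuine obstacle. The one place to be careful is the interface between the combinatorial setup (layer definitions, disjointness, and the augmenting-path-length count $2i-1$) and the bound $i \le L+1$ needed to ensure each $B_i$ is matched; in particular one must check the layered construction never stalls before reaching $i = L$, which is automatic here because the slack inequality itself forces $B_i \neq \emptyset$ at each step (if $B_i$ were empty the displayed inequality would give $|A_0\cup\cdots\cup A_{i-1}| > (1+\epsilon)|A_0\cup\cdots\cup A_{i-1}|$).
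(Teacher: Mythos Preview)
Your proposal is correct and follows essentially the same alternating-BFS layering argument as the paper's proof. The only cosmetic differences are that the paper starts from the set of \emph{all} free vertices in $S$ (rather than a single $s_0$) and works with cumulative reachability sets $S^\ell, T^\ell$ defined via the residual digraph (rather than your disjoint layers $A_i, B_i$); the growth inequality $(1+\epsilon)|S^\ell|\le |S^{\ell+1}|$ and the contradiction with $L>\log_{1+\epsilon}|S|$ are identical.
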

		\begin{proof}
			Let $\vec{H}$ be the residual graph of $H$ w.r.t the $F$: $\vec H$ is a directed graph over $S \cup T$, for every edge $st \in E_H$, we have $st \in \vec H$, and for every $st \in F$, we have $ts \in \vec H$.   We say a vertex in $S$ is free if it is unmatched in $F$.  For every integer $\ell \in [0, L]$, define $S^\ell$ ($T^\ell$ resp.) to be the set of vertices in $S$ ($T$, resp.) to which there exists a path in $\vec{H}$ of length \emph{at most} $2\ell$ ($2\ell+1$, resp.) from a free vertex. 
			So, we have $S^0 \subseteq S^1 \subseteq S^2 \subseteq \cdots \subseteq S^L$ and $T^0 \subseteq T^1 \subseteq T^2 \subseteq \cdots \subseteq T^L$.
			
			Notice that $T^{\ell} = N_H(S^{\ell})$ for every $\ell \in [0, L]$. 
			So for every $\ell \in [0, L]$, we have $({1+\epsilon})|S^{\ell}| \leq {|T^{\ell}|}$ by the condition of the lemma. All vertices in $T^L$ are matched by our assumption that there are no augmenting paths of length at most $2L + 1$.  So for every $\ell \in [0, L-1]$, we have $|T^{\ell}| \leq |S^{\ell+1}|$ as all vertices in $T^\ell$  are matched to $S^{\ell+1}$.
			
			Combining the two statements gives us $(1+\epsilon)|S^{\ell}| \leq |S^{\ell+1}|$ for every $\ell \in [0, L-1]$. Thus $|S^L| \geq (1+\epsilon)^L |S^0|$, which contradicts the definition of $L$ and that  $|S^0| \geq 1, |S^L| \leq |S|$. 
		\end{proof}
		This finishes the proof of Lemma~\ref{lemma:pbm}.
	\end{proof}
	
	With the lemma, we prove the following theorem using the grouping technique from \cite{PST95}:
	\begin{restatable}{theorem}{thmgrouping}
		\label{thm:integral-matching}
		Given $\bfx \in [0, 1]^E$ satisfying $x(\delta(j)) = 1$ for every $j \in J$, and  $\epsilon \in (0, 1)$, there is an $O\left(\frac{|E|}{\epsilon} \log n\right)$-time algorithm that outputs an assignment $\sigma \in M^J$ of jobs to machines such that $\sigma_jj \in E$ and $x_{\sigma_jj} > 0$ for every $j \in J$,  and for every $i \in M$, we have
		\begin{align*}
			\sum_{j \in \sigma^{-1}(i)}  p_{ij} \leq  (1+\epsilon) \sum_{j \in N(i)}  p_{ij} x_{ij} + \max_{j \in \sigma^{-1}(i)} p_{ij}. \quad  (\text{Assume the maximum over $\emptyset$ is 0.})
		\end{align*}
	\end{restatable}
	
	\begin{proof}
		We construct a bipartite graph $H = (V, J, E_H)$, starting with $V = \emptyset$ and $E_H = \emptyset$. For every machine $i \in M$, we run the following procedure. See Figure~\ref{fig:grouping} for an illustration. (The notations defined in the paragraph depend on $i$;  if a notation does not contain $i$ in the subscript, it will only be used locally, in this paragraph.) 
		Let $D_i$ be the number of jobs $j$ with positive $x_{ij}$ values.  Let $j_1, j_2, \cdots, j_{D_i}$ be these jobs  $j$, sorted in non-increasing order of $p_{ij}$; that is, we have $p_{ij_1} \geq p_{ij_2} \geq \cdots \geq p_{ij_{D_i}}$. For every integer $d \in [0, D_i]$, we define $Z_d = \sum_{d' = 1}^{d} x_{ij_{d'}}$.  Let $R_i = \ceil{(1+\epsilon)Z_{D_i}} = \ceil{(1+\epsilon)x(\delta(i))}$.  For every ${r} = 1, 2, 3,\cdots, R_i$, we create a vertex $ir$ and add it to $V$. We add to $E_H$ an edge between $ir, {r} \in [R_i]$ and $j_d$, $d \in [D_i]$  if $(\frac{{r}-1}{1+\epsilon}, \frac{{r}}{1+\epsilon})\cap(Z_{d-1}, Z_d) \neq \emptyset$, and we define $y_{(ir)j_d}$ to be the length of the interval.  This finishes the construction of $H = (V, J, E_H)$, along with a vector $\bfy \in \left(0, \frac{1}{1+\epsilon}\right]^{E_H}$. 
		\begin{figure}[ht]
			\centering
			\includegraphics[width=0.8\textwidth]{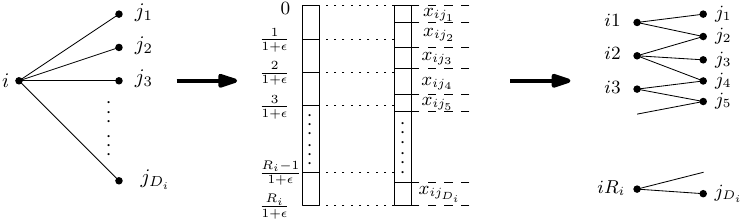}
			\caption{Construction of the $H$ for the machine $i \in M$. In the bipartite graph between $\{i1, i2, \cdots, iD_i\}$ and $\{j_1, j_2, \cdots, j_{D_i}\}$ and  there is an edge between $j_d$ and $(ir)$ iff the interval correspondent to $j_d$ intersects the interval $(\frac{r-1}{1+\epsilon}, \frac{r}{1+\epsilon})$. }
			\label{fig:grouping}
		\end{figure}
		
		The number of edges in $H$ for each $i$ is at most $D_i + R_i - 1 \leq |\delta(i)| + (1+\epsilon)x(\delta(i))$. Therefore the total number of edges we created in $H$ is at most $|E| + (1+\epsilon)|J|  = O(|E|)$. For every $ij  \in E$, we have $\sum_{r:(ir)j\in E_H}y_{(ir)j} = x_{ij}$. This implies that for every $j \in J$, we have $y(\delta_H(j)) = 1$. For every $ir\in V$, we have $y\big(\delta_H(ir)\big) \leq \frac{1}{1+\epsilon}$, and the inequality holds with equality except when $r = R_i$. 
		
		For every set $J' \subseteq J$, we have $|N_{H}(J')| \geq (1+\epsilon)|J'|$, as we can view $\bfy$ as a fractional matching in $H$ where every $j \in J$ is matched to an extent of 1 and every $ir \in V$ is matched to an extent of at most $\frac{1}{1+\epsilon}$. Then we can use Lemma~\ref{lemma:pbm} \footnote{We need to switch the left and right sides when going from the bipartite graph $H$ in Theorem~\ref{thm:integral-matching} to that in Lemma~\ref{lemma:pbm}. That is, we set $S = J$ and $T = V$. } to find a matching in $H$ that covers all jobs $J$.  The running time of the algorithm is $O\left(\frac{|E_H|}{\epsilon}\log n\right) = O\left(\frac{|E|}{\epsilon}\log n\right)$. The matching  gives an assignment $\sigma \in M^J$: If $j$ is matched to $ir$, then define $\sigma_j = i$.  Fix some $i \in M$ with $\sigma^{-1}(i) \neq \emptyset$; we upper bound $\sum_{j \in \sigma^{-1}(i)}  p_{ij}$:
		\begin{align*}
			&\quad \sum_{j \in \sigma^{-1}(i)}  p_{ij} \leq\quad \max_{j \in\sigma^{-1}(i)}p_{ij} + \sum_{{r} = 2}^{R_i} \max_{j \in N_H(ir)}p_{ij} \\
			&\leq\quad \max_{j \in\sigma^{-1}(i)}p_{ij} + (1+\epsilon)\sum_{{r} = 2}^{R_i} \sum_{j \in N_H(i(r-1))} p_{ij}y_{(i(r-1))j}\\
			&\leq\quad \max_{j \in\sigma^{-1}(i)}p_{ij} + (1+\epsilon)\sum_{{r} = 1}^{R_i} \sum_{j \in N_H(ir)} p_{ij}y_{(ir)j} 
			\ =\  \max_{j \in\sigma^{-1}(i)}p_{ij}+(1+\epsilon)\sum_{j \in N(i)}p_{ij}x_{ij}.
		\end{align*}
		To see the first inequality, notice that the job $j'$ matched to $i1$ (if it exists) has $p_{ij'} \leq \max_{j \in \sigma^{-1}(i)}p_{ij}$, and the job $j'$ matched to each $ir$, ${r} \in [2, R_i]$, has $p_{ij'} \leq \max_{j \in \delta_H((ir))}p_{ij}$. Consider the second inequality. For every ${r} \in [2, R_i]$, any $j \in \delta_H(ir)$ and any $j' \in  \delta_H(i(r-1))$, we have $p_{ij} \leq p_{i j'}$.  Moreover, for every ${r} \in [2,R_i]$, we have $y\big(\delta_H(i(r-1))\big) = \frac{1}{1+\epsilon}$. The inequality in the third line follows from replacing ${r}$ with ${r}+1$. The equality holds since for every $ij  \in E$ we have $\sum_{{r}:(ir)j \in E_H}y_{(ir)j} = x_{ij}$. 
	\end{proof}
	
	We can then apply Theorem~\ref{thm:integral-matching} with the solution $\bfx$ we obtained from solving LP(\ref{LPC:makespan}-\ref{LPC:makespan-non-negative}).  Clearly we have $\max_{j\in\sigma^{-1}(i)} p_{ij} \leq P$ for every $i \in M$. So, the total load on any machine $i$ is at most $P + (1+\epsilon)\cdot \sum_{j \in N(i)}  p_{ij} x_{ij} \leq P + (1+\epsilon)\cdot (1+O(\epsilon)) P = (2+O(\epsilon))P$, as \eqref{LPC:makespan} is satisfied with right side replaced by $(1+O(\epsilon))P$. This finishes the analysis of the algorithm for $R||C_{\max}$ and proves Theorem~\ref{thm:main-R-Cmax}.
	
	
	\section{Unrelated Machine Weighted Completion Time Minimization}
	\label{sec:unrelated-wC}
	
	In this section, we give our nearly-linear time algorithm for $R||\sum_j w_jC_j$, with an approximation ratio of  $1.45 + \epsilon$, matching the current best ratio of Im and Li \cite{IL23} achieved in polynomial time. Our result is based on formulating an LP relaxation that is equivalent to the rectangle LP introduced by  Li \cite{Li20}. The new LP relaxation has a nearly-linear size and the mixed packing and covering form; thus it can be solved in nearly-linear time using Theorem~\ref{thm:fast-LP-solver}.  We describe the rectangle LP (LP\eqref{LP:rectangle}), our new LP relaxation (LP\eqref{LP:RwC}) and show their equivalence in  Sections~\ref{subsec:rectangle-LP}, \ref{subsec:RwC-new-LP} and \ref{subsec:R-wC-equivalance} respectively. 
	
	In \ifdefined\CR the full version of the paper \else Appendix~\ref{sec:R-wC-rounding} \fi 
	we show how to construct a solution to LP\eqref{LP:rectangle} from one to LP\eqref{LP:RwC} in nearly-linear time, finishing the proof of Theorem~\ref{thm:main-R-wC-LP}. We also show in
	\ifdefined\CR the full version \else Appendix~\ref{sec:R-wC-rounding} \fi
	that the rounding algorithm of Im and Li can run in nearly-linear time; this finishes the proof of Theorem~\ref{thm:main-R-wC}. Throughout the section, we assume all processing times are integers bounded by a polynomial of $n$. The general case is handled in
	\ifdefined\CR the full version\else Appendix~\ref{subsec:wc-big-p-w}\fi.
	
	\subsection{Rectangle LP Relaxation}
	\label{subsec:rectangle-LP}
	We describe the rectangle LP relaxation for $R||\sum_j w_jC_j$ introduced by Li \cite{Li20}.  Let $T= \sum_{j \in J} \max_{i \in N(j)} p_{ij}$ so that any schedule will complete by time $T$. The following is the rectangle LP:
	\begin{align}
		\min \qquad \sum_{j \in J}w_j \sum_{i \in N(j), s \in [0, T)} z_{ijs}(s+p_{ij}) \label{LP:rectangle}
	\end{align}\vspace*{-20pt}
	
	\noindent\begin{minipage}{0.54\textwidth}
		\begin{align}
			\sum_{i \in N(j), s\in [0, T)}z_{ijs} &\geq 1 & &\forall j \in J \label{LPC:rectangle-scheduled}\\
			\sum_{j \in N(i), s \in [t - p_{ij}, t)}z_{ijs} &\leq 1 & &\forall i \in M, t \in [T] \label{LPC:rectangle-capacity}
		\end{align}
	\end{minipage}\hfill
	\begin{minipage}{0.44\textwidth}
		\begin{align}
			z_{ijs} &= 0 & &\forall ij \in E, s > T - p_{ij} \label{LPC:rectangle-no-late} \\[15pt]
			z_{ijs} &\geq 0	& &\forall ij \in E, s \in [0, T) \label{LPC:rectangle-non-negative} \\[3pt]\nonumber
		\end{align}
	\end{minipage}\smallskip
	
	In the correspondent integer program, $z_{ijs}$ for every $ij \in E$ and integer $s \in [0, T)$ indicates if job $j$ is scheduled on machine $i$, with starting time $s$.  The objective gives the weighted completion time of the schedule.  \eqref{LPC:rectangle-scheduled} requires that every job $j$ is scheduled. \eqref{LPC:rectangle-capacity} requires that at any time on machine $i$, at most one job is being processed. \eqref{LPC:rectangle-no-late} ensures that no jobs complete after time $T$. \eqref{LPC:rectangle-non-negative} is the non-negativity constraint.  Im and Li \cite{IL23} showed that given a solution $\bfz$ to LP\eqref{LP:rectangle}, one can round it to an integral schedule, whose weighted completion time in expectation is at most $1.45$ times the value of $\bfz$.
	
	\subsection{A Nearly-Linear Size LP Relaxation}
	\label{subsec:RwC-new-LP}
	
	In this section we formulate the relaxation that can be solved in nearly-linear time, and prove its equivalence to LP\eqref{LP:rectangle} in Section~\ref{subsec:R-wC-equivalance}. We create a list of time points as follows: $T_0 = 0$,  $T_d = \floor{(1+\epsilon)T_{d-1}}+1$ for every integer $d \geq 1$.   Define $D = O(\frac{\log n}{\epsilon})$ to be the smallest integer so that $T_D \geq T$.  We call $(T_{d-1}, T_d]$ the $d$-th \emph{window}, and the time points $T_0, T_1, \cdots, T_D$ \emph{window boundaries} (or simply boundaries). Define $\Delta_d = T_d - T_{d-1}$ to be the length of the $d$-th window.
	
	Let $\eta_d := \ceil{\epsilon \Delta_d}$. We partition $(T_{d-1}, T_d]$ into \emph{sub-windows} of length $\eta_d$, except that the last sub-window may be shorter.  Then $q_d := \ceil{\frac{\Delta_d}{\eta_d}} \leq \frac1\epsilon$ is the number of sub-windows of $(T_{d-1}, T_d]$. Let $\tau^{(d)}_0 = T_{d-1}, \tau^{(d)}_1, \tau^{(d)}_2, \cdots, \tau^{(d)}_{q_d} = T_d$ be the boundaries of the $q_d$ sub-windows. 
	
	We describe the variables in the LP.  For every $ij \in E$ and $d \in [D]$ with $p_{ij} \leq \Delta_d$, we introduce a variable $x_{ijd}$, indicating if $j$ is scheduled on $i$ inside the $d$-th window.   
	Let $S_j$ and $C_j$ be the starting and completion time of $j$ in the target optimum schedule (which the algorithm does not know).  For every $ij \in E, 1 \leq d \leq e < D$, integers $u \in [0, q_d), v \in [0, q_{e+1})$, we \emph{may} introduce a variable $y_{ijdeuv}$ indicating if $j$ is scheduled on $i$, $S_j \in [\tau^{(d)}_u, \tau^{(d)}_{u+1})$ and $C_j \in (\tau^{(e+1)}_v, \tau^{(e+1)}_{v+1}]$.  That means, the scheduling interval $(S_j, C_j]$ of $j$ contains the $d'$-th window for every $d' \in [d+1, e]$, and a non-empty part of the $d$-th and $(e+1)$-th windows. $u$ and $v$ approximately give the volumes of $j$ processed in the two windows. It is disjoint from all other windows.  As a hindsight, a sub-window is short enough and we can afford to incur an error equaling its length for every window. We only introduce a $y$-variable if the correspondent event can happen. That is, the following conditions need to be satisfied for the existence of $y_{ijdeuv}$: $\tau^{(e+1)}_v - \tau^{(d)}_{u+1} + 2 \leq p_{ij}\leq \tau^{(e+1)}_{v+1} - \tau^{(d)}_u$.
	Notice when $y_{ijdeuv} = 1$, then the scheduling interval $(S_j, C_j]$ of $j$ intersects at least two windows.
	
	For a variable $y_{ijdeuv}$ and an integer $d' \in [D]$, we define
	\begin{align*}
		Q_{ijdeuvd'} := \begin{cases}
			0 & \text{if } d'  \leq d-1 \text{ or } d' \geq e+2\\
			\Delta_{d'} & \text{if } d+1 \leq d' \leq e\\
			T_d - \tau^{(d)}_{u+1} + 1 & \text{if } d' = d\\
			\tau^{(e+1)}_v - T_e + 1 & \text{if } d' = e+1
		\end{cases}.
	\end{align*}
	Assuming $j$ starts at time $\tau^{(d)}_{u+1}-1$ and ends at time $\tau^{(e+1)}_v + 1$ on machine $i$, we have that $Q_{ijdeuvd'}$ is the volume of job $j$ processed in the $d'$-th window. So, if $y_{ijdeuv} = 1$ in a schedule,  then $Q_{ijdeuvd'}$ gives a lower bound on the volume.
	
	We say the quadruple $deuv$ left-covers the pair $d'u'$ if the sub-window $(\tau^{(d')}_{u'-1}, \tau^{(d')}_{u'}]$ is between the sub-windows $(\tau^{(d)}_u, \tau^{(d)}_{u+1}]$ (exclusive) and $(\tau^{(e+1)}_v, \tau^{(e+1)}_{v+1}]$ (inclusive) in the time horizon, or if $(\tau^{(d')}_{u'-1}, \tau^{(d')}_{u'}]= (\tau^{(d)}_u, \tau^{(d)}_{u+1}]$  and $\tau^{(d')}_{u'} - \tau^{(d')}_{u'-1}  = 1$. So if $deuv$ left-covers $d'u'$ and $y_{ijdeuv} = 1$, then the scheduling interval of $j$ will surely cover the left-most time unit of the sub-window $(\tau^{(d')}_{u'-1}, \tau^{(d')}_{u'}]$.
	
	With the necessary definitions, we can formulate the LP relaxation as LP\eqref{LP:RwC}. For the sake of convenience, we assume if a variable does not exist, then it is not included in a summation.  
	\begin{equation}
		\min \qquad \sum_{ijd} w_j \cdot (T_{d-1}+1) \cdot x_{ijd} + \sum_{ijdeuv} w_j \cdot (T_e+1) \cdot y_{ijdeuv} \label{LP:RwC}
	\end{equation}\vspace*{-10pt}
		\begin{align}
			\sum_{id} x_{ijd} + \sum_{ideuv} y_{ijdeuv} &\geq 1 &\quad & \forall j \in J \label{LPC:RwC-scheduled}\\
			\sum_{\substack{jdeuv\ :\ deuv\\\text{left-covers } d'u'}} y_{ijdeuv} &\leq 1 &\quad &\forall i \in M, d' \in [D], u' \in [q_{d'}] \label{LPC:RwC-capacity-cross}\\
			\sum_{j} p_{ij} \cdot x_{ijd'} + \sum_{jdeuv}Q_{ijdeuvd'}\cdot y_{ijdeuv} &\leq \Delta_{d'} &\quad &\forall i \in M, d' \in [D]\label{LPC:RwC-capacity-both}
		\end{align}\vspace*{-15pt}
		\begin{align}
			\text{all variables are non-negative} \label{LPC:RwC-non-negative}
		\end{align}
		
		Consider the correspondent integer program and an integral schedule. If $x_{ijd} = 1$, then the completion time of $j$ is in $(T_{d-1}, T_d]$. If $y_{ijdeuv} = 1$, then it is in $(T_e, T_{e+1}]$. So, the objective \eqref{LP:RwC} approximates and underestimates the total weighted completion time of the schedule.\footnote{A more precise estimation for the case $y_{ijdeuv}=1$ is $\tau^{(e+1)}_v+1$. But the estimation $T_e + 1$ is good enough.} \eqref{LPC:RwC-scheduled} requires that every job is scheduled: either the scheduling interval of a job $j$ is inside some window, or it overlaps with at least two windows.  \eqref{LPC:RwC-capacity-cross} follows from the definition of $deuv$ left-covering $d'u'$.  If $x_{ijd'} = 1$, then the $p_{ij}$ units of job $j$ is processed in the $d'$-th window on machine $i$. If $y_{ijdeuv} = 1$, then at least $Q_{ijdeuvd'}$ units is processed. So \eqref{LPC:RwC-capacity-both} is valid as the volume of the jobs processed in the $d'$-th window is at most $\Delta_{d'}$.  Therefore, LP\eqref{LP:RwC} is valid, and its value is at most the weighted completion time of the optimum schedule for the instance.
		
		There are at most $D|E|$ $x$-variables. We then count the number of tuples $ijdeuv$ such that $y_{ijdeuv}$ is a variable in the LP. For fixed $ij \in E, d \in D$ and $u \in [0, q_d)$, there are at most $O(1)$ possibilities for $(e, v)$, as the lengths of sub-windows do not decrease from left to right in the time horizon, except for the last sub-window of each window.  Hence the number of $y$-variables is at most $O\big(\frac{D|E|}{\epsilon}\big) = O\big(\frac{|E|\log n}{\epsilon^2}\big)$.\footnote{By cutting job lengths $p_{ij}$ by a factor of $\epsilon$, one can reduce the number of $y$ variables to $O\Big(|E|\big(\frac{1}{\epsilon^2} + \frac{\log n}{\epsilon}\big)\Big)$. But we prioritize on giving a clean algorithm, rather than optimizing the $\poly(\log n, \frac1\epsilon)$-factor in the running time.}  The number of constraints is $O(n + \frac{m|D|}{\epsilon}) = O\big(n + \frac{m\log n}{\epsilon^2}\big)$. The number of non-zeros is at most $O\big(\frac{|E|\log^2 n}{\epsilon^4}\big)$ as each variable appears in at most $O(\frac{D}{\epsilon})$ constraints.  
		
		Therefore, by Theorem~\ref{thm:fast-LP-solver}, in $O\left(\frac{|E|\log^3 n}{\epsilon^6}\right) = \tilde O_\epsilon(|E|)$ time, we can find a solution $(\bfx, \bfy)$ satisfying the following conditions: Its cost is at most $1+\epsilon$ times that of the optimum solution to the LP, all variables are non-negative, \eqref{LPC:RwC-scheduled} holds with equalities, and \eqref{LPC:RwC-capacity-cross} and \eqref{LPC:RwC-capacity-both} hold with right sides replaced by $1+\epsilon$ and $(1+\epsilon)\Delta_{d'}$ respectively.  For convenience, we call such a solution a $(1+\epsilon)$-approximate solution to LP\eqref{LP:RwC}; but keep in mind that it may violate \eqref{LPC:RwC-capacity-cross} and \eqref{LPC:RwC-capacity-both} by a factor of $1+\epsilon$. 
		
		\subsection{Equivalence of LP\eqref{LP:RwC} and LP\eqref{LP:rectangle}}
		\label{subsec:R-wC-equivalance}
		We use $\lp_{\eqref{LP:rectangle}}$ and $\lp_{\eqref{LP:RwC}}$ to denote the values of LP\eqref{LP:rectangle} and LP\eqref{LP:RwC} respectively.  It is easy to show that $\lp_{\eqref{LP:RwC}} \leq \lp_{\eqref{LP:rectangle}}$, as one can convert a solution to LP\eqref{LP:rectangle} into one to LP\eqref{LP:RwC} with smaller or equal value.  The following theorem gives the other direction, proving the equivalence of the two LPs up to a $1 + O(\epsilon)$ factor:
		\begin{theorem}[Equivalence of LP\eqref{LP:RwC} and LP\eqref{LP:rectangle}]
			\label{thm:RwC-exists-zcirc}
			Let $(\bfx, \bfy)$ be a $(1+\epsilon)$-approximate solution to LP\eqref{LP:RwC}. 
			%
			%
			Then in $\tilde O_\epsilon(|E|)$-time we can find a solution $\bfz$ to LP\eqref{LP:rectangle} except that \eqref{LPC:rectangle-capacity} is only satisfied with the right-side replaced by $1+\epsilon$,  such that the following is true for an absolute constant $c \geq 1$, every $ij \in E$ and integer $t \geq 0$:
			\begin{align*}
				\sum_{s + p_{ij} > (1+c\epsilon)t} z_{ijs} \leq \sum_{d: T_{d-1} + 1 >  t}x_{ijd} + \sum_{deuv: T_e + 1 > t} y_{ijdeuv}.
			\end{align*}
		\end{theorem}

		In words, for every $ij \in E$, and every time $t \geq 0$, the fraction of job $j$ scheduled on $i$ with completion time after $(1+c\epsilon)t$ in $\bfz$ is at most the fraction with completion time after $t$ in $(\bfx, \bfy)$. Then, the following corollary is immediate: 
		\begin{coro}
			Let $(\bfx, \bfy)$ and $\bfz$ be defined as in Theorem~\ref{thm:RwC-exists-zcirc}. Then
			the value of  $\bfz$ to LP\eqref{LP:rectangle} is at most $1+c\epsilon$ times that of $(\bfx, \bfy)$ to LP\eqref{LP:RwC}. This implies that the value of $\bfz$ to LP\eqref{LP:rectangle} is at most $(1+c\epsilon)(1+\epsilon)\cdot\lp_{\eqref{LP:RwC}} \leq (1+O(\epsilon))\cdot\lp_{\eqref{LP:rectangle}}$.
		\end{coro}
		
		To better present the ideas behind the proof, we only show the existence of such a vector $\bfz$ in this section. That is, we are not concerned with the running time of the algorithm that constructs $\bfz$.  In  
		\ifdefined\CR the full version of the paper \else Appendix~\ref{subsec:constructing-z} \fi
		we show how $\bfz$ can be constructed in nearly-linear time. 
		
		So the rest of this section is dedicated to proving the existence of $\bfz$ satisfying the conditions in  Theorem~\ref{thm:RwC-exists-zcirc}. 
		Till the end, we fix the solution $(\bfx, \bfy)$. We assume all variables in $(\bfx, \bfy)$ have values being integer multiplies of $1/\Phi$, and $(1+\epsilon)\Phi$ is an integer, for a large enough integer $\Phi > 0$. We fix a machine $i \in M$ and show how to construct the $\bfz$ values for this $i$.   We create $(1+\epsilon)\Phi$ \emph{mini-machines}, each serving as $1/\Phi$ fraction of the machine $i$. We create two types of \emph{mini-jobs}: 
		\begin{itemize}
			\item For every variable $x_{ijd}$ with positive value, we create $\Phi x_{ijd}$ mini-jobs of length $p_{ij}$; we call them \emph{inside-mini-jobs}.  Each such inside-mini-job has an \emph{intended completion time} of $T_{d-1} + 1$; this is the estimation used in the objective \eqref{LP:RwC}.  
			\item For every variable $y_{ijdeuv}$ with positive value, we create $\Phi y_{ijdeuv}$ mini-jobs of length $\tau^{(e+1)}_v - \tau^{(d)}_{u+1}+2$; we call them \emph{cross-mini-jobs}. Notice the length may be smaller than $p_{ij}$.  Similarly, the cross-mini-jobs have an \emph{intended completion time} of $T_e  + 1$. We define the \emph{blocking interval} of these mini-jobs to be the union of the sub-windows $(\tau^{(d')}_{u'-1}, \tau^{(d')}_{u'}]$ such that $deuv$ left-covers $d'u'$. This is indeed an interval. As \eqref{LPC:RwC-capacity-cross} holds with right side replaced by $1+\epsilon$, every time point is covered by blocking intervals of at most $(1+\epsilon)\Phi$ cross-mini-jobs.
		\end{itemize}
		
		Our goal becomes to schedule all the mini-jobs on the $(1+\epsilon)\Phi$ mini-machines integrally, guaranteeing that the completion time of each mini-job is at most $1 + 5\epsilon$ times its intended completion time (after we extend the lengths of cross-mini-jobs). The construction of the schedule is given in Algorithm~\ref{alg:schedule-on-mini-machines}; recall that we are not concerned with the running time in this proof. 	 	The solution $\bfz$ to LP\eqref{LP:rectangle} will be the integral schedule scaled by a factor of $\frac{1}{\Phi}$: $z_{ijs}$ is $\frac{1}{\Phi}$ times the number of mini-jobs for $j$ that start at time $s$ in the schedule. 
		
		\begin{algorithm}[h]
			\caption{Scheduling of mini-jobs on mini-machines for a machine $i \in M$.}
			\label{alg:schedule-on-mini-machines}
			\begin{algorithmic}[1]
				\State define a vector $\sigma:\text{cross-mini-jobs} \to \text{mini-machines}$, so that for every mini-machine $h$, the blocking intervals of $\sigma^{-1}(h)$ are disjoint. \label{step:mini-schedule-sigma}
				\For{$d' \gets 1$ to $D$} \label{step:mini-schedule-loop}
				\For{every cross-mini-job $k$ for some variable $y_{ijdeuv}$ with $d \leq d' \leq e + 1$} \label{step:mini-schedule-inner-loop-cross}
				\State $\load_{\sigma_k} \gets \load_{\sigma_k} + Q_{ijdeuvd'}$ \label{step:mini-schedule-cross-load}
				\If{$d' = e + 1$} append $k$ to the mini-machine $\sigma_k$ \EndIf \label{step:mini-schedule-cross-append}
				\EndFor
				\For{every inside-mini-job $k$ for some variable $x_{ijd'}$} \label{step:mini-schedule-inner-loop-inside}
				\State let $h$ be the mini-machine with the smallest $\load_h$ \label{step:mini-schedule-choose-inside}
				\State $\load_h \gets \load_h + p_{ij}$, append $k$ to the mini-machine $h$ \label{step:mini-schedule-assign-inside}
				\EndFor
				\EndFor
				\State \label{step:mini-schedule-extend} extend the length of each cross-mini-job for a variable $y_{ijdeuv}$ to $p_{ij}$ in the constructed schedule
			\end{algorithmic}
		\end{algorithm}
		
		Step~\ref{step:mini-schedule-sigma} of Algorithm~\ref{alg:schedule-on-mini-machines} is possible since each point is covered by at most $(1+\epsilon)\Phi$ blocking intervals. When we schedule an inside-mini-job $k$ on a mini-machine $h$, we increase $\load_h$ by the length of $k$ (Step~\ref{step:mini-schedule-assign-inside}). The scheduling of a cross-mini-job $k$ for some variable $y_{ijdeuv}$ is done differently. First the mini-machine $\sigma_k$ for $k$ is pre-defined. Second, we append $k$ to $\sigma_k$ only in iteration $d' = e+1$ (Step~\ref{step:mini-schedule-cross-append}), but we add the length of $k$ to $\load_{\sigma_k}$ piece by piece: In iterations $d' = d, d+1, \cdots, e+1$, we increase the load by $Q_{ijdeuvd'}$ (Step~\ref{step:mini-schedule-cross-load}).  Still we ensure that the load to $\sigma_k$ contributed by $k$ is equal to the length of $k$.  A mini-job for a variable $y_{ijdeuv}$ may have length smaller than the desired length $p_{ij}$, so in Step~\ref{step:mini-schedule-extend} we extend these mini-jobs.
		
		\begin{restatable}{lemma}{lemmaloadadded}
			\label{lemma:load-added}
			At the end of iteration $d'$ of Loop~\ref{step:mini-schedule-loop}, every mini-machine has a load of at most $T_{d'} - 1 + \Delta_{d'}$.
		\end{restatable}
		\begin{proof}
			There are two types of loads added to mini-machines during iteration $d'$ of Loop~\ref{step:mini-schedule-loop}: those from cross-mini-jobs, and those from inside-mini-jobs. The total load (from both cross- and inside-mini-jobs) added to all mini-machines is at most $(1+\epsilon)\Phi\Delta_{d'}$:  it is precisely $\Phi$ times the left-side of \eqref{LPC:RwC-capacity-both} for the machine $i$ and $d'$, which is at most $(1+\epsilon)\Phi \Delta_{d'}$ as the constraint is violated only by a factor of $1+\epsilon$.
			
			The total load from cross-mini-jobs added to a mini-machine $h$ in iteration $d'$ is at most $\Delta_{d'}$ as the blocking intervals of all mini-jobs in $\sigma^{-1}(h)$ are disjoint. We need to check the case when one mini-job $k \in \sigma^{-1}(h)$ has blocking interval ending at $\tau^{(d')}_{u'}$ and another mini-job $k' \in \sigma^{-1}(h)$ has blocking interval starting at the time. If the length of the sub-window $(\tau^{(d')}_{u'-1}, \tau^{(d')}_{u'}]$ is at least 2,  then the statement holds as we only gave 1 unit length to $k$ and $k'$ in this sub-window. If the length is 1, then because we handled the case in a special way in the definition of left-covering, we did not give any length to $k'$ for the sub-window.
			
			With the observations, we can prove the lemma. Before we add an inside-mini-job $k$ for $x_{ijd'}$ to a mini-machine $h$ in iteration $d'$, the total load of all mini-machines is strictly less than $(1+\epsilon)\Phi\sum_{d'' = 1}^{d'} \Delta_{d''} = (1+\epsilon)\Phi T_{d'}$ (as the length of $k$ has not been added to the loads yet). Therefore $\load_h<T_{d'}$ before we append $k$ to $h$. After that, we have $\load_h \leq T_{d'} - 1 + p_{ij} \leq T_{d'}-1+\Delta_{d'}$.
			
			Assume towards the contradiction that the lemma does not hold and consider the first time when the condition is violated. Assume this is at iteration $d'$, and some mini-machine has a load of at least $T_{d'} + \Delta_{d'}$. This must be due to that we add the loads from cross-mini-jobs to the machine.  By our assumption, every mini-machine has a load of at most $T_{d'-1}-1+\Delta_{d'-1}$ at the end of iteration $d'-1$. (A special case is when $d' = 1$; but this can be handled trivially.)  As we argued, we add a load of at most $\Delta_{d'}$ from cross-mini-jobs to each mini-machine $i$n iteration $d'$. Therefore after we add the loads, every mini-machine has a load of at most $T_{d' - 1} - 1 + \Delta_{d' - 1} + \Delta_{d'} = T_{d'} -1 + \Delta_{d' - 1} \leq T_{d'} - 1 + \Delta_{d'}$, a contradiction. 
		\end{proof}
		
		Now we consider how Step~\ref{step:mini-schedule-extend} changes the completion times. The length of a cross-mini-job for a variable $y_{ijdeuv}$ is increased by at most $\eta_d - 1 + \eta_{e+1}-1 \leq \epsilon \Delta_d + \epsilon \Delta_{e+1} \leq 2\epsilon(\Delta_d + \Delta_e)$.  For all cross-mini-jobs assigned to the same mini-machine $h$, the correspondent intervals $\{d, d+1, \cdots, e\}$ are disjoint.  Therefore, a mini-job scheduled in iteration $d'$ of Loop~\ref{step:mini-schedule-loop} is delayed by at most $2\epsilon(\Delta_1 + \Delta_2 + \cdots + \Delta_{d'}) = 2\epsilon T_{d'}$ units time. In the final schedule constructed by Algorithm~\ref{alg:schedule-on-mini-machines} the completion time of a mini-job scheduled in iteration $d$ is at most 
		\begin{align*}
			&T_d  - 1 + \Delta_d + 2\epsilon T_{d} \leq T_d - 1 + ((1+\epsilon) T_{d-1} + 1) - T_{d-1} + 2\epsilon T_{d} \\
			&= T_d  + \epsilon T_{d-1} + 2\epsilon T_d \leq (1+5\epsilon)(T_{d-1}+1).
		\end{align*}
		Setting $c = 5$, Theorem~\ref{thm:RwC-exists-zcirc} follows from that $T_{d-1} + 1$ is the intended completion time of the mini-job.

	\bibliographystyle{plain}
	\bibliography{reflist}
	\appendix

\section{Precedence Constrained Scheduling on Identical Machines to Minimize Weighted Completion Time}
\label{sec:prec}
In this section, we give our nearly-linear time algorithms for  $P|\tprec|\sum_j w_jC_j$, and its two special cases $1|\tprec|\sum_j w_jC_j$ and $P|\tprec, p_j = 1|\sum_j w_jC_j$. The approximation ratios for the two special cases are $2+\epsilon$ and $1+\sqrt{2} + \epsilon$ respectively, matching their correspondent current best ones achievable in polynomial time.

We describe our LP relaxation in Section~\ref{subsec:prec-LP} and the rounding algorithms in Section~\ref{subsec:prec-rounding}. The oracle for solving \eqref{LP:packing-aggregate} in the MWU framework is described in Section~\ref{subsec:oracle}, with the key component on solving the network flow problem deferred to Section~\ref{sec:appendix-networkflow}.

\subsection{Linear Programming Relaxation} 
\label{subsec:prec-LP}


We describe the LP relaxation for $P|\tprec|\sum_j w_jC_j$.  To concentrate on the main ideas, we assume $p_{\max}:= \max_{j \in J}p_j$ is bounded by $\poly(n)$, and deter the general case to Appendix~\ref{sec:prec-super-p}. We remark that a direct implementation of our algorithm would give a $\tilde O_\epsilon((n + \kappa)\log^3p_{\max})$ running time; additional ideas are needed to reduce the $\log^3p_{\max}$ term to $\log p_{\max}$.

For every $j \in J$, let $q_j$ be the maximum total length of jobs in a precedence chain ending at $j$. This can be computed in $O(n + \kappa)$ time using dynamic programming.  We define a list of completion times as follows: let $\tau_0 =0$, $\tau_d = (1+\epsilon)^{d-1}$ for every integer $d \geq 1$. Let $D$ be the smallest integer such that $\tau_D \geq p(J)$.  Then $D = O\left(\frac{\log n}{\epsilon}\right)$ since we assumed $p(J) = \poly(n)$. For every integer $d \in [0, D-1]$ we define $\eta_d := \tau_{d+1} - \tau_d$. 

Let $d^{\min}_{j} = 0$ and $d^{\max}_{j} = D$ for every $j \in J$.  Later in the super-polynomial $p_{\max}$ case, we define $d^{\min}_{j}$'s and $d^{\max}_{j}$'s differently. 
The linear program is defined by the objective \eqref{LP:prec} and constraints (\ref{LPC:prec-same-j}-\ref{LPC:prec-1}).
\begin{align}
	\min \qquad w(J) \tau_D - \sum_{j \in J} w_j \sum_{d = 1}^{D-1} \eta_d x_{jd} \label{LP:prec}
\end{align}\vspace*{-25pt}

\noindent\begin{minipage}[t]{0.55\textwidth}
	\begin{align}
		x_{jd}&\leq x_{j(d+1)} & &\forall j \in J, d \in [0, D) \label{LPC:prec-same-j}\\
		x_{jd} &\geq x_{j'd} &  &\forall j \prec j', d \in [0, D]  \label{LPC:prec-prec}\\
		\sum_{j \in J} p_j x_{jd} &\leq m\tau_d & &\forall d \in [D] \label{LPC:prec-capacity}
	\end{align}
\end{minipage}\hfill
\begin{minipage}[t]{0.45\textwidth}
	\begin{align}
		x_{jd} &= 0 & &\forall j \in J, d \in [0, d^{\min}_j]\text{ or } \tau_d < q_j\label{LPC:prec-0}\\
		x_{jd} &= 1 & &\forall j \in J, d \in [d^{\max}_j, D] \label{LPC:prec-1}
	\end{align}
\end{minipage}\bigskip

In the correspondent 0/1-integer program, $x_{jd}$ indicates whether $j$ has completion time at most $\tau_d$.   \eqref{LPC:prec-same-j} says if $j$ has completion time at most $\tau_d$, then it has completion time at most $\tau_{d+1}$. \eqref{LPC:prec-prec} requires that for two jobs $j \prec j'$,  if $j'$ has completion time at most $\tau_d$, then so does $j$.  \eqref{LPC:prec-capacity} requires the total size of jobs with completion time at most $\tau_d$ to be at most $m \tau_d$ for every $d \in [0, D]$. \eqref{LPC:prec-0} says a job can not complete before $q_j$. For the $p_{\max} = \poly(n)$ setting, the condition $d \in [0, d^{\min}_j]$ is redundant.\footnote{We may assume there are no jobs $j$ with $q_j = 0$ since they can be removed.} \eqref{LPC:prec-1} says a job must complete before or at time $\tau_D$. In the linear program, \eqref{LPC:prec-0}, \eqref{LPC:prec-1} and \eqref{LPC:prec-same-j} bound all variables in $[0, 1]$. 

It remains to discuss the objective \eqref{LP:prec}. Let $\lp$ be the value of LP\eqref{LP:prec}, and $\opt$ be the weighted completion time of the optimum schedule. We prove the following lemma:
\begin{restatable}{lemma}{lemmapreclptoopt}
	\label{lemma:prec-lp-to-opt}
	$\lp \leq (1+\epsilon)\opt$.
\end{restatable}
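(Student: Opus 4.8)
The plan is to exhibit a feasible fractional solution $\bfx$ to LP~\eqref{LP:prec} whose objective value is at most $(1+\epsilon)\opt$. Start from an optimal integral schedule with completion times $(C_j^*)_{j \in J}$ of weighted cost $\opt$. For each job $j$ and each $d \in [0, D]$, set $x_{j,d} := 1$ if $C_j^* \le \tau_d$ and $x_{j,d} := 0$ otherwise. The monotonicity constraint~\eqref{LPC:prec-same-j} is immediate since $\tau_d \le \tau_{d+1}$; the precedence constraint~\eqref{LPC:prec-prec} holds because $j \prec j'$ forces $C_j^* \le C_{j'}^*$, so $C_{j'}^* \le \tau_d$ implies $C_j^* \le \tau_d$; and the boundary conditions~\eqref{LPC:prec-0} and~\eqref{LPC:prec-1} hold because in any valid schedule $C_j^* \ge q_j$ (a precedence chain ending at $j$ must be run sequentially before $j$ finishes) and $C_j^* \le p(J) \le \tau_D$. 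For~\eqref{LPC:prec-capacity}: the jobs with $C_j^* \le \tau_d$ are entirely processed within the time window $(0, \tau_d]$, and since at most $m$ machines run in parallel, their total size is at most $m\tau_d$; thus $\sum_j p_j x_{j,d} = \sum_{j : C_j^* \le \tau_d} p_j \le m\tau_d$.

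It remains to bound the objective. The key identity is that for the integral solution above,
\begin{align}
\tau_D - \sum_{d=1}^{D-1} \eta_d x_{j,d} = \tau_D - \sum_{d : \tau_d \ge C_j^*,\, 1 \le d \le D-1} \eta_d = \tau_{d(j)},
\end{align}
where $d(j) := \min\{d : \tau_d \ge C_j^*\}$ is the index of the smallest threshold above $C_j^*$; here the telescoping sum $\sum_{d \ge d(j)}^{D-1} \eta_d = \tau_D - \tau_{d(j)}$ is used. Hence the objective~\eqref{LP:prec} evaluated at this $\bfx$ equals $\sum_{j \in J} w_j \tau_{d(j)}$. Since the $\tau_d$'s are $0$ together with consecutive powers of $1+\epsilon$, we have $\tau_{d(j)} \le (1+\epsilon) C_j^*$ whenever $C_j^* \ge 1$ (the gap between consecutive thresholds is a factor $1+\epsilon$), and when $C_j^* = 0$ — which forces $p_j = 0$ and $q_j = 0$, and such jobs can be discarded as noted in the footnote — we trivially get $\tau_{d(j)} = 0 = C_j^*$. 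Therefore $\sum_j w_j \tau_{d(j)} \le (1+\epsilon) \sum_j w_j C_j^* = (1+\epsilon)\opt$, and since $\lp$ is the minimum over all feasible fractional solutions, $\lp \le (1+\epsilon)\opt$.

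\textbf{Main obstacle.} The proof is essentially routine once the right integral solution and the telescoping identity are written down; there is no genuine difficulty. The one point requiring a little care is the rounding step $\tau_{d(j)} \le (1+\epsilon)C_j^*$: one must check the edge case $C_j^* \in (0,1)$ cannot occur for nontrivial jobs (processing times are integers in $\Z_{\ge 0}$, so $C_j^* \ge p_j \ge 1$ unless $p_j = 0$), and handle $p_j = 0$ jobs by the stated reduction. A secondary bookkeeping point is confirming that in the $p_{\max} = \poly(n)$ regime the choices $d_j^{\min} = 0$, $d_j^{\max} = D$ make~\eqref{LPC:prec-0} and~\eqref{LPC:prec-1} consistent with the integral solution — the condition $\tau_d < q_j$ in~\eqref{LPC:prec-0} is exactly matched by $C_j^* \ge q_j$, so no conflict arises.
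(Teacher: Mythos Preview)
Your proposal is correct and takes essentially the same approach as the paper: both construct the integral indicator solution $x_{j,d} = \mathbf{1}[C_j^* \le \tau_d]$ from the optimal schedule, verify feasibility, and use telescoping to identify the LP objective at this point as $\sum_j w_j \tau_{d(j)}$, which is at most $(1+\epsilon)\opt$ since consecutive thresholds differ by a factor $1+\epsilon$. The paper writes the same chain of inequalities in the reverse direction (starting from $\opt$ and lower-bounding), but the content is identical.
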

\begin{proof}
	Let $\tilde \bfx^* \in \{0, 1\}^{J \times [0, D]}$ be the solution correspondent to the optimum schedule: $\tilde x^*_{jd} \in \{0, 1\}$ indicates if $j$ has completion time at most $\tau_d$ in the schedule. Then, we have
	\begin{align*}
		\opt &\geq \frac{1}{1+\epsilon}\sum_{j \in J}w_j \sum_{d = 1}^{D} (\tilde x^*_{jd} - \tilde x^*_{j(d-1)}) \tau_d= \frac{1}{1+\epsilon}\sum_{j \in J} w_j \left(\sum_{d = 1}^{D-1} \tilde x^*_{jd}\big(\tau_d - \tau_{d+1}\big) + \tau_D\right) \\
		&= \frac{1}{1+\epsilon}\sum_{j \in J} w_j\left(\tau_D  - \sum_{d=1}^{D-1} \eta_d \tilde x^*_{jd}\right) \geq \frac{\lp}{1+\epsilon}.
	\end{align*}
	To see the inequality in the first line, note that a job $j \in J$ with $\tilde x^*_{j(d-1)} = 0$ and $\tilde x^*_{jd} = 1$ has completion time at least $\frac{\tau_d}{1+\epsilon}$.  The equality in the line is by rearranging of terms, and that $\tilde x^*_{j, 0} = 0$ and $\tilde x^*_{j, D} = 1$ for every $j \in J$.  The equality in the second line is by the definition of $\eta_d$'s.  The inequality in the line is by that $\tilde \bfx^*$ is a valid solution to the LP.  Therefore, we have $\lp \leq (1+\epsilon)\opt$, finishing the proof of the lemma.
\end{proof}

We use the template algorithm (Algorithm~\ref{alg:LP-solver}) to solve the LP. We remove the variables that are fixed to $0$ or $1$ and define a directed graph $G = (V, E)$ as follows: $V$ is the set of $jd$ pairs for which $x_{jd}$ is not fixed to $0$ or $1$.  We add an edge to $E$ from $jd$ to $j'd'$ if we have the constraint $x_{jd} \leq x_{j', d'}$ in \eqref{LPC:prec-same-j} or \eqref{LPC:prec-prec}.  Define $$\calQ:= \{ \bfx \in [0, 1]^V: x_v \leq x_u, \forall vu \in E\}.$$
Let $\bfP \in \R_{\geq 0}^{[0, D] \times V}$ so that \eqref{LPC:prec-capacity} can be written as $\bfP \bfx \leq \bf1$.  Notice that each $v \in V$ participates in exactly one row of $\bfP$ and thus $\bfP$ has ${\bar N} := |V|$ non-zeros. Let $a_{jd} = w_j \eta_d$ for every $jd \in V$. Then minimizing \eqref{LP:prec} is equivalent to maximizing $\bfa\bfx$. Our LP becomes $\max \bfa\bfx$ subject to $\bfx \in \calQ$ and $\bfP\bfx \leq {\bf1}$,  which is exactly \eqref{LP:packing}. Let $\bfx^*$ be the optimum solution to the LP.


To apply Algorithm~\ref{alg:LP-solver}, we need an $(\epsilon, \phi)$-oracle for \eqref{LP:packing-aggregate} with some appropriate value of $\phi$. This is summarized in the following theorem, which we prove in Section~\ref{subsec:oracle}. 
\begin{restatable}{theorem}{thmprecaggregatemain}
	\label{thm:prec-aggregate-main}
	Let $G = (V, E)$ be a directed acyclic graph and $\calQ := \{\bfy \in [0, 1]^V: y_v \leq y_u, \forall vu \in E\}$. Let $\bfb, \bfa \in \R_{\geq 0}^V$  be two row vectors. 
	Let $\bfy^*$ be the $\bfy \in \calQ$ satisfying $\bfb\bfy\leq 1$ with the maximum $\bfa\bfy$. Let $\epsilon \in (0, 1), \phi \in (0, |\bfa|_1/2)$.
	Then, in $\tilde O_\epsilon\left(|E|\cdot\log^2\frac{|\bfa|_1}{\phi}\right)$ time, we can find a $\bfy \in \calQ$ satisfying $\bfb\bfy \leq 1 + \epsilon$ and $\bfa\bfy \geq \bfa\bfy^* - \phi$.
\end{restatable}


We run Algorithm~\ref{alg:LP-solver} on our instance of \eqref{LP:packing} defined by $\calQ, \bfP$ and $\bfa$, with the $(\epsilon, \phi)$-oracle given in Theorem~\ref{thm:prec-aggregate-main} to output an $(O(\epsilon), \phi)$-approximate solution to \eqref{LP:packing}, where $\phi = \epsilon \cdot w(J) \leq \epsilon \cdot \opt$. Then the $\bfx$ returned by the template LP solver has $\bfx \in \calQ, \bfP\bfx \leq (1 + O(\epsilon))\bf1$ and $\bfa\bfx \geq \bfa\bfx^* - \phi$. Then, we have $w(J)\tau_D - \bfa\bfx \leq w(J)\tau_D - \bfa\bfx^* + \phi = \lp + \phi \leq (1+\epsilon)\opt + \epsilon\cdot\opt = (1+2\epsilon)\opt$.  

The running time of Algorithm~\ref{alg:LP-solver}, excluding Step~\ref{step:lp-solver-use-oracle}, is $O\left(\frac{{\bar m}\ln {\bar m} \cdot {\bar N}}{\epsilon^2}\right) =\tilde O_{\epsilon}(n)$, as ${\bar m} = D = O\left(\frac{\log n}{\epsilon}\right) = \tilde O_{\epsilon}(1)$ and ${\bar N} = O\left(\frac{n \log n}{\epsilon}\right) = \tilde O_\epsilon(n)$. Also $|\bfa|_1 \leq \poly(n)\cdot \phi$ as all job sizes are polynomially bounded.  Therefore, in each iteration the oracle takes $\tilde O_{\epsilon}(|E|) =  \tilde O_{\epsilon}(n+\kappa)$ time, and  there are at most $O\left(\frac{{\bar m}\log {\bar m}}{\epsilon^2}\right) = \tilde O_{\epsilon}(1)$ iterations.  So, the running time of the algorithm is $\tilde O_{\epsilon}(n+\kappa)$, assuming $p_{\max} = \poly(n)$.  \smallskip

Before proceeding to the next section, we summarize the properties of our $\bfx \in [0, 1]^{J \times [0, D]}$.  Its value to \eqref{LP:prec} is at most $(1+O(\epsilon))\opt$.  $\bfx$ satisfies all constraints in LP\eqref {LP:prec}, except\eqref{LPC:prec-capacity}, which is satisfied with a factor of $1+O(\epsilon)$ on the right side.

\subsection{Rounding Algorithms}
\label{subsec:prec-rounding}
After we obtain the solution $\bfx$,  we round it to an integral one using problem-dependent algorithms. For every $j \in J$, we define 
\begin{align*}
	C_j := \sum_{d = 1}^{D} \tau_d(x_{jd} - x_{j(d-1)}) =  \sum_{d = 1}^{D-1}(\tau_{d} - \tau_{d+1}) x_{jd} + \tau_D = \tau_D - \sum_{d = 1}^{D-1}\eta_d x_{jd}
\end{align*}
to be the fractional completion time of $j$.  Then $\bfx$ has value $\sum_{j \in J} w_j C_j$ to the LP\eqref{LP:prec}.

\begin{restatable}{claim}{claimprecCtoq}
	\label{claim:prec-C-to-q}
	For a job $j \in J$, we have $C_j \geq q_j$. 
	For two jobs $j \prec j'$, we have $C_j \leq C_{j'}$. 
\end{restatable}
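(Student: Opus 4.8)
\textbf{Proof proposal for Claim~\ref{claim:prec-C-to-q}.}

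The plan is to prove both parts directly from the LP constraints, using the two equivalent closed forms for $C_j$ already written down: the increment form $C_j = \sum_{d=1}^{D} \tau_d(x_{j,d} - x_{j,d-1})$ for the lower bound $C_j \geq q_j$, and the deficit form $C_j = \tau_D - \sum_{d=1}^{D-1}\eta_d x_{j,d}$ for the monotonicity $C_j \leq C_{j'}$.

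For the lower bound $C_j \geq q_j$, I would first note that constraint~\eqref{LPC:prec-same-j} makes $d \mapsto x_{j,d}$ non-decreasing, so every increment $x_{j,d} - x_{j,d-1}$ is non-negative, and these increments telescope to $x_{j,D} - x_{j,0} = 1$ by~\eqref{LPC:prec-1} (which gives $x_{j,D} = 1$, since $D \in [d^{\max}_j, D]$) and~\eqref{LPC:prec-0} (which gives $x_{j,0} = 0$, since $0 \in [0, d^{\min}_j]$; equivalently $\tau_0 = 0 < q_j$ as we may assume $q_j > 0$). Next, whenever an increment $x_{j,d} - x_{j,d-1}$ is strictly positive we have $x_{j,d} > x_{j,d-1} \geq 0$, hence $x_{j,d} > 0$, and the contrapositive of~\eqref{LPC:prec-0} forces $\tau_d \geq q_j$; for the remaining indices the corresponding term vanishes. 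Therefore $\tau_d(x_{j,d} - x_{j,d-1}) \geq q_j(x_{j,d} - x_{j,d-1})$ for every $d$, and summing over $d \in [1, D]$ and telescoping yields $C_j \geq q_j(x_{j,D} - x_{j,0}) = q_j$.

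For $C_j \leq C_{j'}$ when $j \prec j'$, I would use the deficit form. The sequence $\tau_0 < \tau_1 < \tau_2 < \cdots$ is strictly increasing (as $\tau_1 = 1 > 0 = \tau_0$ and $\tau_{d+1}/\tau_d = 1+\epsilon$ for $d \geq 1$), so each $\eta_d = \tau_{d+1} - \tau_d$ is strictly positive. Constraint~\eqref{LPC:prec-prec} gives $x_{j,d} \geq x_{j',d}$ for every $d \in [0, D]$, hence $\sum_{d=1}^{D-1}\eta_d x_{j,d} \geq \sum_{d=1}^{D-1}\eta_d x_{j',d}$; subtracting both sides from $\tau_D$ gives $C_j \leq C_{j'}$. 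In terms of difficulty there is essentially none here — each part is a two-line computation. The only care required is to invoke the correct constraint at each step (monotonicity and the boundary values from~\eqref{LPC:prec-same-j},~\eqref{LPC:prec-0},~\eqref{LPC:prec-1} for the first part, and the precedence inequality~\eqref{LPC:prec-prec} for the second) and to recall that $\eta_d > 0$, which follows purely from the strict monotonicity of the $\tau$ sequence and is independent of the LP solution.
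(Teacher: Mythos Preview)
Your proposal is correct and matches the paper's own proof essentially line for line: both parts use the same forms of $C_j$ and the same constraints (\eqref{LPC:prec-0} for $C_j \geq q_j$, \eqref{LPC:prec-prec} with $\eta_d > 0$ for $C_j \leq C_{j'}$). You are simply more explicit than the paper in spelling out why each increment is nonnegative and why $x_{j,d} > 0$ forces $\tau_d \geq q_j$, but the underlying argument is identical.
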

\begin{proof}
	To see the first statement, notice that $x_{jd} = 0$ if $\tau_d < q_j$. Thus $C_j = \sum_{d = 1}^D \tau_d (x_{jd} - x_{j(d-1)}) \geq \sum_{d = 1}^D q_j (x_{jd} - x_{j(d-1)}) = q_j$.
	The second statement follows from that $C_j = \tau_D - \sum_{d = 1}^{D-1} \eta_d x_{jd}, C_{j'} = \tau_D - \sum_{d = 1}^{D-1} \eta_d x_{j'd}$ and that $x_{jd} \geq x_{j'd}$ for every $d \in [0, D]$.
\end{proof}

\begin{restatable}{lemma}{lemmaboundcjbeforetau}
	\label{lemma:bound-cj-before-tau}
	Let ${C^*} \geq 0$ be a time point and let $J' := \set{j \in J: C_j \leq {C^*}}$. Then, we have
	\begin{align*}
		p(J') \leq (2 + O(\epsilon))m {C^*}.
	\end{align*}
\end{restatable}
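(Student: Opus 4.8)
The plan is to relate the fractional completion times $C_j$ to the LP variables $x_{j,d}$ and then invoke the capacity constraint \eqref{LPC:prec-capacity}, which holds up to a $(1+O(\epsilon))$ factor for our solution $\bfx$. First I would observe that, by the formula $C_j = \tau_D - \sum_{d=1}^{D-1}\eta_d x_{j,d}$, if $C_j \le C^*$ then the job $j$ must have accumulated a substantial amount of its ``mass'' by a scale comparable to $C^*$. Concretely, choose $d^*$ to be the smallest index with $\tau_{d^*} \ge C^*$ (or handle $C^* \ge \tau_D$ trivially, since then $J' = J$ and $p(J) = \tau_D \le m\tau_D$ if $m\ge1$, wait — more carefully $p(J) \le \tau_D$ by the choice of $D$, but we also need the bound against $C^*$; in this regime $C^* \ge \tau_{D-1} = (1+\epsilon)^{D-2}\ge \tau_D/(1+\epsilon)^{?}$, so it still works out). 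For $C^* < \tau_D$, I claim $x_{j,d^*} \ge 1/2 - O(\epsilon)$ for every $j \in J'$. Indeed, the monotonicity \eqref{LPC:prec-same-j} gives $C_j = \tau_D - \sum_{d}\eta_d x_{j,d} \ge \tau_D - \sum_{d < d^*}\eta_d - x_{j,d^*}\sum_{d \ge d^*}\eta_d = \tau_{d^*} - x_{j,d^*}(\tau_D - \tau_{d^*})$. Hmm, this only directly gives $x_{j,d^*} \ge (\tau_{d^*} - C_j)/(\tau_D - \tau_{d^*})$, which can be tiny; so I actually want a slightly different cutoff.

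The right move is to pick $d^*$ as the smallest index with $\tau_{d^*} \ge 2C^*$ (possible when $2C^* < \tau_D$; the boundary case is handled separately). Then for $j \in J'$, using $x_{j,d} \le x_{j,d^*}$ for $d \le d^*$ and $x_{j,d}\le 1$ for all $d$, we get
\begin{align*}
	C_j = \tau_D - \sum_{d=1}^{D-1}\eta_d x_{j,d} \ge \tau_D - x_{j,d^*}\sum_{d=1}^{d^*-1}\eta_d - \sum_{d=d^*}^{D-1}\eta_d = \tau_{d^*} - x_{j,d^*}(\tau_{d^*}-\tau_1).
\end{align*}
Rearranging, $x_{j,d^*} \ge \frac{\tau_{d^*} - C_j}{\tau_{d^*} - \tau_1} \ge \frac{\tau_{d^*} - C^*}{\tau_{d^*}} \ge \frac{2C^* - C^*}{2(1+\epsilon)C^*} = \frac{1}{2(1+\epsilon)}$, where I used $C_j \le C^*$, $\tau_1 \ge 0$, and $\tau_{d^*} \le (1+\epsilon)\cdot 2C^*$ by minimality of $d^*$ (the previous threshold $\tau_{d^*-1} < 2C^*$ and $\tau_{d^*} = (1+\epsilon)\tau_{d^*-1}$ for $d^* \ge 2$; the case $d^* = 1$ is degenerate and easy). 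Hence $x_{j,d^*} \ge \frac{1}{2} - O(\epsilon)$ for all $j \in J'$, so $\sum_{j \in J'} p_j \le (2 + O(\epsilon))\sum_{j\in J'} p_j x_{j,d^*} \le (2+O(\epsilon))\sum_{j \in J} p_j x_{j,d^*}$. Now \eqref{LPC:prec-capacity}, satisfied with a $(1+O(\epsilon))$ slack, gives $\sum_{j\in J} p_j x_{j,d^*} \le (1+O(\epsilon))\, m\,\tau_{d^*} \le (1+O(\epsilon))\cdot 2(1+\epsilon) m C^*$. Combining, $p(J') \le (2+O(\epsilon)) m C^*$ after folding the constants (one absorbs the factor-$2$ from $\tau_{d^*}$ vs. $C^*$ and the factor-$2$ from $x_{j,d^*} \ge 1/2$ — wait, that would give a factor $4$).

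Let me reconsider to get the constant right: I should choose $d^*$ minimal with $\tau_{d^*} \ge C^*/(1-\delta)$ for a small $\delta$, not $2C^*$. Then $x_{j,d^*} \ge \frac{\tau_{d^*} - C^*}{\tau_{d^*}} \ge \delta$, which is too weak. The tension is real: to make $x_{j,d^*}$ bounded below by a constant close to $1$ I need $\tau_{d^*}$ much bigger than $C^*$, but then the capacity bound $m\tau_{d^*}$ is correspondingly larger. The correct balance is to take $d^*$ minimal with $\tau_{d^*} \ge C^*$; then $\tau_{d^*} \le (1+\epsilon)C^*$, and I bound $p(J') = \sum_{j\in J'}p_j$ by splitting each $C_j \le C^*$ as before but being careful: $C_j \ge \tau_{d^*-1} - x_{j,d^*-1}(\tau_{d^*-1}-\tau_1) \ge \tau_{d^*-1}(1 - x_{j,d^*-1})$, so $x_{j,d^*-1} \ge 1 - C_j/\tau_{d^*-1} \ge 1 - (1+\epsilon)C^*/C^* $ — again negative. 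I think the honest accounting is: use that $C_j = \sum_{d=1}^{D}\tau_d(x_{j,d}-x_{j,d-1})$ is a weighted average of the $\tau_d$'s with weights $x_{j,d}-x_{j,d-1}$ summing to $1$; since this average is $\le C^*$, at least half the weight (by Markov) sits on indices $d$ with $\tau_d \le 2C^*$, i.e. $1 - x_{j,d^\dagger} \le \sum_{d > d^\dagger}(x_{j,d}-x_{j,d-1}) \le 1/2$ where $d^\dagger$ is maximal with $\tau_{d^\dagger}\le 2C^*$; so $x_{j,d^\dagger} \ge 1/2$, giving $p(J') \le 2\sum_j p_j x_{j,d^\dagger} \le 2(1+O(\epsilon))m\tau_{d^\dagger} \le 2(1+O(\epsilon))\cdot 2m C^* = (4+O(\epsilon))mC^*$. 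That is a factor $4$, not $2$.

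Getting the factor $2$ exactly (which is what the statement claims, with the bad constant hidden in $O(\epsilon)$) requires a sharper argument: I would instead use that for each $j \in J'$ and each $d$, $C_j \ge \tau_d - x_{j,d}(\tau_d - \tau_1) \ge \tau_d(1-x_{j,d})$, hence $x_{j,d} \ge 1 - C^*/\tau_d$, and then sum the capacity constraint \eqref{LPC:prec-capacity} with cleverly chosen geometric weights over all $d$ with $\tau_d > C^*$: $\sum_{d} \lambda_d \sum_{j} p_j x_{j,d} \le \sum_d \lambda_d (1+O(\epsilon)) m \tau_d$, and on the left side $\sum_d \lambda_d x_{j,d} \ge \sum_d \lambda_d(1 - C^*/\tau_d)$ which, with $\lambda_d$ proportional to $1/\tau_d$ or similar and the geometric spacing $\tau_{d+1}=(1+\epsilon)\tau_d$, telescopes to something $\ge (1-O(\epsilon))\sum_d\lambda_d$ for $j \in J'$, while the right side is a convergent-type sum that evaluates to $(1+O(\epsilon))\cdot mC^* \cdot \sum_d \lambda_d / (\text{normalization})$ — tuning gives the factor $2$. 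I expect \textbf{this weight-tuning / telescoping step to be the main obstacle}: one must choose the multipliers so that the left side loses only an $\epsilon$-fraction while the right side's geometric tail sums to exactly (twice) $mC^*$ up to $(1+\epsilon)$, and verify the boundary cases ($C^*$ near $0$, near $\tau_D$, and small $d$) do not break the estimate. The rest — Claim~\ref{claim:prec-C-to-q} for the easy monotonicity facts, and invoking \eqref{LPC:prec-capacity} with its $(1+O(\epsilon))$ slack — is routine.
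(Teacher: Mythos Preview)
Your final proposed route --- lower-bound $x_{j,d}\geq 1-C^*/\tau_d$ and then take a weighted sum of the capacity constraints --- cannot yield the factor $2$; it is stuck at $4$. Indeed, that route gives
\[
p(J')\cdot\sum_d \lambda_d\Big(1-\tfrac{C^*}{\tau_d}\Big)\ \le\ \sum_d \lambda_d\sum_{j\in J}p_jx_{j,d}\ \le\ (1+O(\epsilon))\,m\sum_d \lambda_d\tau_d,
\]
so the best multiplicative constant is $\inf_\lambda \frac{\sum_d\lambda_d\tau_d}{C^*\sum_d\lambda_d(1-C^*/\tau_d)}$. Because this is a ratio of linear functionals, the infimum is attained at a Dirac weight, and with $u:=\tau_d/C^*$ one is left with $\min_{u>1}u^2/(u-1)=4$ (at $u=2$). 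This is precisely why your Markov argument and your ``pick $\tau_{d^*}\approx 2C^*$'' attempt both gave $4$: they are instances of the same bound, and no choice of multipliers can do better.

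The missing idea is to work with the \emph{aggregate} quantity $\sum_{j\in J'}p_jC_j$ rather than per-job bounds on $x_{j,d}$. Multiply $C_j\le C^*$ by $p_j$ and sum over $J'$; then expand $C_j=\tau_D-\sum_{d}\eta_d x_{j,d}$ to get
\[
C^*\,p(J')\ \ge\ p(J')\tau_D-\sum_{d}\eta_d\sum_{j\in J'}p_j x_{j,d}.
\]
Now bound $\sum_{j\in J'}p_jx_{j,d}\le \min\{\,(1+O(\epsilon))m\tau_d,\ p(J')\,\}$, using the capacity constraint for small $d$ and the trivial bound $p(J')$ for large $d$. Splitting at the threshold $D'$ where $(1+O(\epsilon))m\tau_{D'}\approx p(J')$ and approximating $\sum_{d<D'}\eta_d\tau_d\le\int_0^{\tau_{D'}}t\,dt=\tau_{D'}^2/2$, the right-hand side collapses to roughly $p(J')\tau_{D'}-\tfrac12(1+O(\epsilon))m\tau_{D'}^2\ge \tfrac{1-O(\epsilon)}{2}\cdot \tfrac{p(J')^2}{m}$. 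This yields the quadratic inequality $C^*p(J')\ge (\tfrac12-O(\epsilon))p(J')^2/m$, hence $p(J')\le (2+O(\epsilon))mC^*$. The crucial difference from your approach is the use of the second upper bound $\sum_{j\in J'}p_jx_{j,d}\le p(J')$ for large $d$; your per-job inequality discards this information entirely.
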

\begin{proof}
	Let $\xi$ be the $1+O(\epsilon)$ term so that $\bfx$ satisfies \eqref{LPC:prec-capacity} with the right-side replaced by $\xi m \tau_d$. Let $D'$ be the minimum number such that $\xi m \tau_{D'} \geq p(J')$. Then $1 \leq D' \leq D$.  If $D' = 1$, then we have $p(J') \leq \xi m \leq (1+O(\epsilon)) m C^*$ if $C^* \geq 1$; if $C^*< 1$ then $J' = \emptyset$ by Claim~\ref{claim:prec-C-to-q}. So, we can assume $2 \leq D' \leq D$.
	\begin{align*}
		&\qquad {C^*} p(J') \quad\\
		&\geq\quad \sum_{j \in J'}C_j p_j \quad=\quad \sum_{j \in J'} p_j\left(\tau_D - \sum_{d = 0}^{D-1}\eta_d 	x_{jd}\right) \\
		&=\quad p(J')\tau_D - \sum_{j \in J'}p_j\sum_{d=0}^{D-1}\eta_d x_{jd} \quad=\quad p(J')\tau_D -\sum_{d= 0}^{D-1}\eta_d \sum_{j \in J'} p_j  x_{jd} \\
		&\geq \quad p(J')\tau_D - \sum_{d=0}^{D-1} \eta_d \min \big\{\xi m\tau_d ,p(J')\big\} 
		\quad =\quad p(J')\tau_D - \left(\xi m\sum_{d = 0}^{D' - 1} \tau_d\eta_d + \sum_{d = D'}^{D-1} \eta_d p(J')\right)\\
		&\geq \quad p(J')\tau_D - \xi m \int_{t = 0}^{\tau_{D'}} t \sfd t - (\tau_D - \tau_{D'})  p(J') \quad=\quad p(J')\tau_{D'} -\frac{\xi m\tau_{D'}^2}{2}\\
		&\geq \quad p(J')\tau_{D'} - \frac{\tau_{D'}(1+\epsilon)p(J')}{2} \quad=\quad \frac{(1-\epsilon)p(J')\tau_{D'}}{2} \quad\geq\quad \frac{(1-\epsilon)p^2(J')/(\xi m)}{2}.
	\end{align*}
	The inequality in the third line holds as $\sum_{j \in J'} p_j x_{jd} \leq \sum_{j \in J} p_j x_{jd} \leq \xi m \tau_d$ and $\sum_{j \in J'} p_j x_{jd} \leq p(J')$. To see the inequality in the fourth line, notice that $\tau_d\eta_d \leq \int_{t = \tau_{d}}^{\tau_{d+1}} t \sfd t$ for every $d \in [0, D'-1]$.  The first inequality in the last line used that $\xi m \tau_{D'} = \xi m(1+\epsilon)\tau_{D'-1} < (1+\epsilon)p(J')$ by the choice of $D'$.  The second inequality in the line used that $\tau_{D'} \geq p(J')/(\xi m)$.

	Therefore, we have $p(J') \leq \frac{2\xi m {C^*}}{1-\epsilon} = (2+O(\epsilon))m{C^*}$, as $\xi = 1+O(\epsilon)$.
\end{proof}

The lemma immediately gives us a $\tilde O_\epsilon(n + \kappa)$-time $(2+O(\epsilon))$-approximation for $1|\tprec|\sum_j w_jC_j$, finishing the proof of Theorem~\ref{thm:main-1-Prec-wC}. We schedule the jobs on the single machine in non-decreasing order of $C_j$ values, guaranteeing that if $j \prec j'$ then $j$ is scheduled before $j'$.  Then the completion time $\tilde C_{j^*}$ of a job $j^*$ is at most $p(\{j \in J: C_j \leq C_{j^*}\}) \leq (2+O(\epsilon))C_{j^*}$.  The weighted completion time of the schedule then is at most $(2+O(\epsilon))\sum_{j \in J} w_j C_j \leq (2+O(\epsilon))\opt$ as the value of $\bfx$ to LP\eqref{LP:prec} is at most $(1+O(\epsilon))\opt$.  \medskip

When $m>1$, we use a simple job-driven list scheduling algorithm as in \cite{Li20}.  In addition to the set $J$ of jobs with job sizes and precedence constraints, we are given a vector $(F_j)_{j \in J} \in \R_{\geq 0}^J$ that respects the precedence constraints: For every $j \prec j'$ we have $F_j \leq F_{j'}$.  Notice it is possible that $F_j = F_{j'}$ for $j \prec j'$. 

In the algorithm, for every job $j$ in non-decreasing order of $F_j$ values, breaking ties so that if $j' \prec j''$ then $j'$ is handled before $j''$, we schedule $j$ as early as possible without violating the $m$-machine constraint and the precedence constraints.  The pseudo-code is given in Algorithm~\ref{alg:list-scheduling}.  In the algorithm, the congestion of a set of scheduling intervals is the maximum number of intervals in the set covering a same unit-time slot.  

\begin{algorithm}[ht]
	\caption{$\listscheduling((F_j)_{j \in J})$} \label{alg:list-scheduling}
	\textbf{Input}:  a vector $(F_j)_{j \in J} \in \R_{\geq 0}^J$ respecting the precedence constraints  \\
	\textbf{Output}: a schedule of jobs, given by starting times $(\tilde S_j)_{j \in J}$ and completion times $(\tilde C_j = \tilde S_j + p_j) _{j \in J}$
	\begin{algorithmic}[1]
		\For{every $j \in J$ in non-decreasing order of $F_j$, breaking ties first using $\prec$ and then arbitrarily}
		\State $t \gets \max_{j' \prec j}\tilde C_{j'}$, assuming the maximum of an empty set is $0$
		\State find the minimum $t' \geq t$ such that we can schedule $j$ in interval $(t', t' + p_j]$, without increasing the congestion of the scheduling intervals to $m+1$ \label{step:list-scheduling-t'}
		\State ${\tilde S}_j \gets t', \tilde C_j \gets t' + p_j$, and schedule $j$ in $({\tilde S}_j, \tilde C_j]$
		\EndFor
		\State \Return $((\tilde C_j)_{j \in J})$
	\end{algorithmic}
\end{algorithm}

To guarantee that the algorithm runs in $\tilde O_\epsilon(n + \kappa)$ time, we need to show how to find the $t'$ in Step~\ref{step:list-scheduling-t'} in amortized $O(\log n)$ time. This is done by maintaining two self-balancing binary search trees. We defer the details to Section~\ref{subsec:data-structures-for-list-scheduling}.

Throughout this section, we fix a job $j^* \in J$ and analyze the completion time $\tilde C_{j^*}$ of the job in the constructed schedule. We focus on the moment where $\tilde S_{j^*}$ and $\tilde C_{j^*}$ are decided; that is, the end of the iteration in which we handle $j^*$. We call the scheduled constructed so far the schedule of interest (jobs handled after $j^*$ are not scheduled yet).  In the schedule, a unit time slot $(t-1, t]$ is said to be busy if exactly $m$ jobs have scheduling intervals covering $(t - 1,t]$; otherwise we say $(t - 1, t]$ is idle.  Let $T_\busy$ and $T_\idle$ be the number of busy and idle unit-time slots before $\tilde C_{j^*}$, w.r.t the schedule of interest. Then $\tilde C_{j^*} = T_\busy + T_\idle$.

\begin{claim}
	\label{claim:busy}
	$T_\busy \leq \frac{1}{m}p\big(\{j \in J: F_j \leq F_{j^*} \}\big)$.
\end{claim}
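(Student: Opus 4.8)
The plan is a short volume-counting argument; essentially nothing about the LP solution $\bfx$ or the precedence constraints is needed beyond the order in which $\listscheduling$ processes jobs.

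First I would pin down which jobs appear in the schedule of interest. Since Algorithm~\ref{alg:list-scheduling} handles jobs in non-decreasing order of $F_j$ (breaking ties in a way consistent with $\prec$), every job processed up to and including the iteration for $j^*$ has $F$-value at most $F_{j^*}$. Writing $\hat J$ for the set of jobs scheduled in the schedule of interest, this gives $\hat J \subseteq \{j \in J : F_j \le F_{j^*}\}$, hence $p(\hat J) \le p(\{j \in J : F_j \le F_{j^*}\})$.

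Next I would bound $T_\busy$ by the total volume that $\hat J$ places inside the window $(0, \tilde C_{j^*}]$. For each integer $t$ with $1 \le t \le \tilde C_{j^*}$, let $m_t$ denote the number of jobs of $\hat J$ whose scheduling interval covers the unit slot $(t-1, t]$. By definition $(t-1,t]$ is busy exactly when $m_t = m$, and $m_t \ge 0$ always, so $\sum_{t=1}^{\tilde C_{j^*}} m_t \ge m\,T_\busy$. On the other hand each job $j \in \hat J$ occupies one interval of length $p_j$ and hence contributes at most $p_j$ to $\sum_{t=1}^{\tilde C_{j^*}} m_t$ — it covers at most $p_j$ unit slots in total, whether or not its interval extends past $\tilde C_{j^*}$ — giving $\sum_{t=1}^{\tilde C_{j^*}} m_t \le \sum_{j \in \hat J} p_j = p(\hat J)$. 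Chaining the inequalities yields $m\,T_\busy \le p(\hat J) \le p(\{j \in J : F_j \le F_{j^*}\})$, which is the claim after dividing by $m$.

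There is no real obstacle here; the only points needing care are bookkeeping ones. The schedule of interest must be taken to include $j^*$ itself, which is harmless since $F_{j^*} \le F_{j^*}$, and one should observe that intervals of jobs scheduled so far may stick out past $\tilde C_{j^*}$ — this only makes the bound $\sum_t m_t \le p(\hat J)$ easier rather than harder. I would keep the proof to these few lines and not invoke any structural property of $\bfx$.
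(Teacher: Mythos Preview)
Your proposal is correct and follows essentially the same volume-counting argument as the paper: the paper's proof is just the two-line version of what you wrote, observing that the total size of jobs in the schedule of interest is at most $p(\{j : F_j \le F_{j^*}\})$ and that $mT_\busy$ is bounded by this total size. Your added care about jobs whose intervals extend past $\tilde C_{j^*}$ is fine but not needed for the inequality to go through.
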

\begin{proof}
	The total size of jobs in the schedule of interest is at most $p\big(\{j \in J: F_j \leq F_{j^*} \}\big)$.  So, $mT_\busy \leq p\big(\{j \in J: F_j \leq F_{j^*} \}\big)$. Dividing both sides by $m$ gives the claim.
\end{proof}

\begin{lemma}\cite{MQS98, Li20}
	\label{lemma:idle-unit}
	When all jobs have unit sizes,  we have $T_\idle \leq q_{j^*}$.
\end{lemma}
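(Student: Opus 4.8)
The plan is to show that every idle unit-time slot before $\tilde C_{j^*}$ can be ``charged'' to a distinct element of a precedence chain ending at $j^*$; since all jobs have unit size, such a chain has total length (= number of jobs) at most $q_{j^*}$, and this gives $T_\idle \le q_{j^*}$. First I would set up the key structural fact about why a slot is idle: if $(t-1,t]$ is an idle slot before $\tilde C_{j^*}$, then at the moment we process $j^*$ in Algorithm~\ref{alg:list-scheduling}, fewer than $m$ jobs occupy $(t-1,t]$, yet $j^*$ itself was not placed at $(t-1,t]$ (or any idle slot before $\tilde C_{j^*}$). By the greedy rule in Step~\ref{step:list-scheduling-t'}, the only reason $j^*$ could not be scheduled at an earlier idle slot $(t-1,t]$ is the precedence constraint: some predecessor $j'\prec j^*$ has $\tilde C_{j'} > t-1$, i.e. $\tilde C_{j'} \ge t$ (using that all completion times are integers). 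Actually the start time $\tilde S_{j^*} = \max\{t' : \text{all of } (t',\tilde C_{j^*}] \text{ follows the predecessors}\}$ is forced: $j^*$ starts at $t\gets\max_{j'\prec j^*}\tilde C_{j'}$ unless that slot is busy, in which case it is pushed later only by busy slots.

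Second, I would build the chain explicitly by walking backwards in time. Let $t_0 := \tilde S_{j^*}$. Every idle slot before $\tilde C_{j^*}$ lies in $(0, t_0]$ (since the slots $(\tilde S_{j^*}, \tilde C_{j^*}]$ are occupied by $j^*$, hence not idle when $j^*$ is included, and for unit jobs that is the single slot $(\tilde S_{j^*},\tilde S_{j^*}+1]$). Because $j^*$ could not start before $t_0$, there is a predecessor $j_1 \prec j^*$ with $\tilde C_{j_1} = t_0$ (if $t_0 = \max_{j'\prec j^*}\tilde C_{j'}$ this is immediate; the other case, where $t_0$ was pushed by a busy slot, cannot make $t_0$ land on an idle slot, so in fact $t_0$ is either $0$ or equals some predecessor's completion time). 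Now recurse: replace $j^*$ by $j_1$ and $t_0$ by $\tilde S_{j_1}$. The crucial monotonicity is that along this chain the start times strictly decrease and, more importantly, the half-open interval $(\tilde S_{j_1},\tilde C_{j_1}]$ covers exactly one unit slot, which is the slot immediately preceding where $j^*$'s chain-segment began; iterating, the union $\bigcup_k (\tilde S_{j_k},\tilde C_{j_k}]$ together with $(\tilde S_{j^*},\tilde C_{j^*}]$ is contiguous and its complement inside $(0,\tilde C_{j^*}]$ contains no idle slots. Hence $T_\idle \le$ (number of busy slots inside the interval $(0,\tilde S_{j^*}]$ that are \emph{not} covered by the chain) $+ \dots$; more cleanly, every idle slot is covered by $(\tilde C_{j_{k+1}}, \tilde S_{j_k}]$ for some consecutive pair, and those gaps are all busy by maximality of the greedy choice — so actually there are \emph{no} idle slots strictly between consecutive chain jobs. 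Therefore all $T_\idle$ idle slots lie before the start of the chain's last job $j_\ell$, i.e. in $(0,\tilde S_{j_\ell}]$, and by minimality $\tilde S_{j_\ell}=0$; combined with the fact that the chain jobs themselves occupy $\ell+1$ (non-idle) unit slots whose union with the idle prefix and the busy gaps fills $(0,\tilde C_{j^*}]$, we get $T_\idle + (\ell+1) + T_\busy' = \tilde C_{j^*}$ with the busy gaps empty, forcing $T_\idle = \tilde S_{j_\ell} = 0$...

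Let me restate the clean version: I claim that at every idle slot $(t-1,t]$ with $t \le \tilde C_{j^*}$, \emph{all} of the jobs $j_1 \prec j_2 \prec \dots$ forming the predecessor chain of $j^*$ that get scheduled at or after time $t$ must actually have a predecessor finishing at time $\ge t$; iterating down the chain, the number of such ``blocking'' levels is at most the chain length, which is $\le q_{j^*}$ for unit jobs. Formally: define $j^{(0)} = j^*$ and, given $j^{(k)}$ with $\tilde S_{j^{(k)}} > 0$ and $(\tilde S_{j^{(k)}}-1, \tilde S_{j^{(k)}}]$ idle at the moment $j^{(k)}$ is inserted (equivalently fewer than $m$ jobs among those handled before $j^{(k)}$ cover that slot), the greedy rule forces the existence of $j^{(k+1)} \prec j^{(k)}$ with $\tilde C_{j^{(k+1)}} = \tilde S_{j^{(k)}}$. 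This produces a strictly decreasing sequence of start times, so the chain $j^{(0)} \succ j^{(1)} \succ \cdots \succ j^{(\ell)}$ terminates with $\tilde S_{j^{(\ell)}}=0$ or with $(\tilde S_{j^{(\ell)}}-1,\tilde S_{j^{(\ell)}}]$ busy. The union $\bigcup_{k=0}^{\ell}(\tilde S_{j^{(k)}},\tilde C_{j^{(k)}}]$ is the set of $\ell+1$ consecutive slots $(\tilde S_{j^{(\ell)}},\tilde C_{j^*}]$ (consecutive because $\tilde C_{j^{(k+1)}} = \tilde S_{j^{(k)}}$ and each job is unit length). Every slot in $(0,\tilde C_{j^*}]$ outside this block lies in $(0,\tilde S_{j^{(\ell)}}]$, and every \emph{idle} slot there would, by the greedy rule applied to $j^{(\ell)}$, contradict the termination condition unless it equals $(\tilde S_{j^{(\ell)}}-1,\tilde S_{j^{(\ell)}}]$ — but we can absorb that one into the recursion if it is idle. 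Hence all idle slots lie within the chain block's boundary, of which there are at most $\ell + 1 \le q_{j^*} + 1$; a slightly more careful count (the chain jobs' own slots are busy-or-occupied, not idle) yields $T_\idle \le \ell \le q_{j^*}$.

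The main obstacle I anticipate is the bookkeeping around the greedy rule: Step~\ref{step:list-scheduling-t'} pushes $j^{(k)}$ past \emph{busy} slots, so $\tilde S_{j^{(k)}}$ need not equal $\max_{j'\prec j^{(k)}}\tilde C_{j'}$ exactly; I must argue that the \emph{first} idle slot at or after $\max_{j'\prec j^{(k)}}\tilde C_{j'}$ is where $j^{(k)}$ goes (for unit jobs), and hence that the slot right before $\tilde S_{j^{(k)}}$, if idle, can only be idle because it is before $\max_{j'\prec j^{(k)}}\tilde C_{j'}$, yielding the next chain job. This case analysis — separating ``pushed by precedence'' from ``pushed by a busy slot'' — is the delicate part; once it is nailed down, the chain-length bound and the final inequality $T_\idle \le q_{j^*}$ follow by the telescoping/contiguity argument above, and the monotonicity $F_{j^{(k+1)}} \le F_{j^{(k)}}$ (needed to know $j^{(k+1)}$ was indeed handled before $j^{(k)}$) is exactly the hypothesis that $(F_j)$ respects the precedence constraints.
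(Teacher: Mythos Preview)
The paper does not give its own proof of this lemma (it is cited from \cite{MQS98,Li20}), but the backward-chain argument you are sketching is the standard one. Your ``clean version,'' however, has a real gap.

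Your recursion continues only when the slot $(\tilde S_{j^{(k)}}-1,\tilde S_{j^{(k)}}]$ is idle, and terminates when it is busy. You then assert that no idle slot can lie in $(0,\tilde S_{j^{(\ell)}}]$, because the greedy rule ``would have placed $j^{(\ell)}$ there.'' This is false: the greedy rule places $j^{(\ell)}$ at the first idle slot \emph{at or after} $t_\ell:=\max_{j'\prec j^{(\ell)}}\tilde C_{j'}$, so it says nothing about idle slots in $(0,t_\ell]$. Here is a concrete failure. Take $m=3$; jobs $p,z,w_1,w_2,w_3,a,j^*$ with $z\prec a\prec j^*$ and $p\prec w_1,w_2,w_3$; $F_p=F_z=1$, $F_{w_i}=1.5$, $F_a=2$, $F_{j^*}=3$. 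List scheduling puts $p,z$ in $(0,1]$ (idle, two jobs), $w_1,w_2,w_3$ in $(1,2]$ (busy), $a$ in $(2,3]$, $j^*$ in $(3,4]$. Your chain is $j^*,a$ and stops at $a$ since $(1,2]$ is busy, predicting $T_\idle\le 2$. But $(0,1]$ is idle too, so $T_\idle=3$. The lemma still holds ($q_{j^*}=3$), but your argument does not see the third idle slot.

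The fix is to drop contiguity. Define $j^{(k+1)}$ to be any predecessor of $j^{(k)}$ with $\tilde C_{j^{(k+1)}}=t_k$ whenever $t_k>0$, regardless of whether the slot just before $\tilde S_{j^{(k)}}$ is idle or busy. The greedy rule then guarantees the entire gap $(\tilde C_{j^{(k+1)}},\tilde S_{j^{(k)}}]$ is busy. The chain terminates only when $j^{(\ell)}$ has no predecessors, at which point $(0,\tilde S_{j^{(\ell)}}]$ is busy. Every idle slot before $\tilde C_{j^*}$ is therefore one of the $\ell+1$ chain-job slots, so $T_\idle\le \ell+1\le q_{j^*}$. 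This is precisely case~(a) of Lemma~\ref{lemma:PwC-previous-j} in the appendix, specialised to unit jobs.
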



So, if we let $F_j = C_j$ for every $j$ (notice that $(C_j)_{j \in J}$ respects the precedence constraints), and apply Claim~\ref{claim:busy} and Lemma~\ref{lemma:idle-unit}, we have 
\begin{align*}
	\tilde C_{j^*} &= T_\busy + T_\idle \leq \frac{1}{m}\big(\{j \in J: C_j \leq C_{j^*} \}\big) + q_{j^*} \leq (2+O(\epsilon)) C_{j^*} + C_{j^*} = \left(3+ O(\epsilon)\right) C_{j^*}.
\end{align*}
The second inequality used Lemma~\ref{lemma:bound-cj-before-tau}. This gives us a $\tilde O_\epsilon(n + \kappa)$-time $(3+O(\epsilon))$-approximation for $P|\tprec, p_j = 1|\sum_j w_j C_j$. In Section~\ref{subsubsec:improved}, we show the approximation ratio of $1 +\sqrt{2}$ due to \cite{Li20} can be recovered using our LP relaxation; this will prove Theorem~\ref{thm:main-P-Prec-pj1-wC}.\medskip

Finally, we focus on the general problem $P|\tprec|\sum_j w_j C_j$.  As our LP is weaker, we could not recover the approximation ratios of $4$  in \cite{MQS98} or $2+2\ln2$ in \cite{Li20}. Instead, we obtain a worse ratio of $6+O(\epsilon)$.  

\begin{restatable}{lemma}{lemmaidlegeneral}
	\label{lemma:idle-general}
	Let $\theta \in (0, 1)$ be a number such that for every $j \prec j'$, we have $F_{j'} - F_j \geq \theta p_j$. Then  $T_\idle \leq \frac{F_{j^*}}{\theta} + p_{j^*}$.
\end{restatable}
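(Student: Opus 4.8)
The plan is to adapt Graham's classical analysis of list scheduling for precedence-constrained makespan, the same type of argument underlying Lemma~\ref{lemma:idle-unit}, to our job-driven variant (Algorithm~\ref{alg:list-scheduling}) and to arbitrary job sizes. Fix $j^*$ and the schedule of interest, and write $r := \tilde S_{j^*}$. The $p_{j^*}$ unit slots of $(r, \tilde C_{j^*}]$ account for at most $p_{j^*}$ idle slots, so it suffices to show that the number of idle unit slots in $(0, r]$ is at most $F_{j^*}/\theta$.

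For this I would build, walking backwards from $j^*$, a precedence chain $j_\ell \prec j_{\ell-1} \prec \cdots \prec j_1 \prec j_0 = j^*$, together with the invariant that every idle unit slot in $(0,r]$ lies inside one of the run intervals $(\tilde S_{j_i}, \tilde C_{j_i}]$ with $1 \le i \le \ell$; equivalently, outside $\bigcup_i (\tilde S_{j_i}, \tilde C_{j_i}]$ every unit slot of $(0, \tilde C_{j^*}]$ carries $m$ jobs. To choose $j_{i+1}$ from $j_i$: if $j_i$ starts at a time forced by precedence, i.e.\ $\tilde S_{j_i} = \max_{j' \prec j_i}\tilde C_{j'}$, take $j_{i+1}$ to be a predecessor attaining this maximum; otherwise the placement of $j_i$ in Step~\ref{step:list-scheduling-t'} was blocked by already-occupied slots, and one steps back through the maximal block of busy slots ending at (or just before) $\tilde S_{j_i}$ until either time $0$ is reached --- in which case $\ell := i$ --- or an idle slot is met, where greediness forces some job to have a predecessor completing there, and the chain continues from that job. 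Granting the invariant, the number of idle slots in $(0,r]$ is at most $\sum_{i=1}^{\ell} p_{j_i}$, and since $j_{i+1} \prec j_i$ for every $i$ the hypothesis gives $\theta \sum_{i=1}^{\ell} p_{j_i} \le \sum_{i=0}^{\ell-1}\big(F_{j_i} - F_{j_{i+1}}\big) = F_{j^*} - F_{j_\ell} \le F_{j^*}$, which is exactly the bound we need (adding back $p_{j^*}$).

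The main obstacle is establishing the invariant, and it is here that the non-unit sizes bite. For unit jobs this is the standard argument: an idle machine during the slot immediately before $\tilde S_{j_i}$, together with no predecessor of $j_i$ completing at $\tilde S_{j_i}$, would have let list scheduling place $j_i$ one slot earlier; hence that slot is busy, and iterating peels off a maximal block of busy slots and then necessarily hits a precedence constraint. One must also check that list scheduling's fixed $F$-order does not spoil this, which is fine because every chain job precedes $j^*$ and so, by the tie-breaking rule, is processed no later than $j^*$, while a job processed after a chain member cannot occupy a slot before that member's start without violating $\prec$; thus the relevant slots are already in their final state when the chain member is handled. For $p_{j_i} > 1$, ``$j_i$ could not start at $t'$'' only tells us the window $(t', t'+p_{j_i}]$ meets a busy slot, not which slot; I would handle this by arguing that the first busy slot at or after such a $t'$ lies within distance $p_{j_i}$ of $t'$, is occupied by $m$ jobs one of which was itself delayed by precedence or by an earlier busy slot, and then rerouting the chain through that job, so that the total length of idle runs not covered by chain run intervals still telescopes into the $\sum_i p_{j_i}$ budget. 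I expect this case analysis --- essentially a careful downward induction on time --- to be the bulk of the proof; the termination of the backward walk, the disjointness of the chain's run intervals and of the busy ``gaps'' between them, and the final telescoping are routine.
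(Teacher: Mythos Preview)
Your approach has a genuine gap: you insist on a \emph{precedence} chain $j_\ell \prec \cdots \prec j_0 = j^*$ whose run intervals cover every idle slot in $(0,\tilde S_{j^*}]$, but no such chain need exist once job sizes exceed $1$. Take $m=2$, a job $d$ of size $2$ whose only predecessor is $a$ of size $1$, and alongside it a separate chain $y_1\prec z_1\prec y_2\prec z_3\prec y_3\prec z_5$ of unit jobs together with siblings $z_2,z_4,z_6$ (so $y_i\prec z_{2i-1},z_{2i}$), all with $F$-values strictly between $F_a$ and $F_d$. List scheduling puts $a$ at $(0,1]$, each pair $z_{2i-1},z_{2i}$ at the busy slot $(2i,2i{+}1]$, each $y_i$ alone at the idle slot $(2i{-}1,2i]$, and then $d$ cannot start until time $7$ because every length-$2$ window in $(1,7]$ meets a busy slot. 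Here $T_{\idle}=5$ while $q_d=3$; the only precedence chain ending at $d$ is $a\prec d$, and its two run intervals miss the idle slots $(1,2],(3,4],(5,6]$ entirely. Your ``rerouting through a job in the busy block'' inevitably lands on some $y_i$ or $z_i$, none of which is a predecessor of $d$, so the relation $j_{i+1}\prec j_i$ on which your telescoping $\theta\sum p_{j_i}\le F_{j^*}-F_{j_\ell}$ rests is lost.

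The paper's fix is precisely to drop the precedence requirement when stepping back across a busy block. It invokes Lemma~\ref{lemma:PwC-previous-j} (from \cite{MQS98,Li20}): from any $j$ with $\tilde S_j>0$ one can pass either to a predecessor $j'$ with $(\tilde C_{j'},\tilde S_j]$ busy, in which case the step contributes at most $p_{j'}$ idle slots and one uses $F_j-F_{j'}\ge\theta p_{j'}$; or to some $j'$ with merely $F_{j'}\le F_j$, $\tilde S_{j'}<\tilde S_j$, and $(\tilde S_{j'},\tilde S_j]$ \emph{entirely} busy, in which case the step contributes zero idle slots and only $F_j-F_{j'}\ge 0$ is needed. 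Both cases keep $F$ non-increasing, so the sum of idle contributions telescopes to at most $F_{j^*}/\theta$; but only the first case invokes $\prec$. This two-case walk---not a pure precedence chain---is the missing idea.
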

\begin{proof}
	We revisit the tools built in \cite{MQS98} and \cite{Li20} that bound $T_\idle$ when job sizes are arbitrary. The following lemma was proved in the two papers. (See, e.g., Lemma 2.2 in \cite{Li20}.)
	\begin{lemma}[\cite{MQS98}, \cite{Li20}]
		\label{lemma:PwC-previous-j}  Let $j \in J$ be a job in the schedule of interest with ${\tilde S}_j > 0$. Then we can find a job $j'$ such that either
		\begin{enumerate}[topsep=3pt,itemsep=0pt, label=(\ref{lemma:PwC-previous-j}\alph*), leftmargin=*]
			\item $j' \prec j$ and $(\tilde C_{j'}, {\tilde S}_j]$ is busy, or
			\item $F_{j'} \leq F_j, {\tilde S}_{j'} < {\tilde S}_{j}$ and $({\tilde S}_{j'}, {\tilde S}_j]$ is busy.
		\end{enumerate}
	\end{lemma}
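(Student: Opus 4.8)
The plan is to examine the single iteration of Algorithm~\ref{alg:list-scheduling} that handles $j$, and to reason about which unit slots are busy \emph{with respect to the partial schedule $\Sigma_{<j}$ of the jobs handled strictly before $j$}; since scheduling intervals are only ever added, any slot busy in $\Sigma_{<j}$ remains busy in the schedule of interest, so conclusions drawn against $\Sigma_{<j}$ transfer to the lemma's statement. Write $t' := \tilde S_j$ and $t := \max_{j' \prec j}\tilde C_{j'}$ (the maximum over the empty set being $0$), so $t' \ge t$, $t' > 0$, and $t'$ is the smallest integer $\ge t$ for which $j$ can be placed in $(t', t' + p_j]$ without pushing the congestion of any unit slot to $m+1$ (all the relevant times are integers, since completion times lie in $\Z_{\ge 0}$).

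First I would prove the dichotomy: \emph{either} $t' = t$, \emph{or} the unit slot $(t'-1, t']$ is busy in $\Sigma_{<j}$. Assume $t' > t$, so $t' - 1 \ge t$. By minimality of $t'$ in Step~\ref{step:list-scheduling-t'}, $j$ cannot be placed in $(t'-1, t'-1+p_j]$, so some unit slot of this interval has congestion $m$ in $\Sigma_{<j}$ (in particular $p_j \ge 1$); but $j$ can be placed in $(t', t'+p_j]$, so every unit slot of the latter has congestion $\le m-1$. The only unit slot of $(t'-1, t'-1+p_j]$ that is not a unit slot of $(t', t'+p_j]$ is $(t'-1, t']$, so that is the busy one.

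If $t' = t$: since $t' > 0$ we have $t > 0$, so $\{j' : j' \prec j\}$ is nonempty and the maximum defining $t$ is attained at some $j' \prec j$ with $\tilde C_{j'} = t = \tilde S_j$; then $(\tilde C_{j'}, \tilde S_j]$ is empty and hence vacuously busy, which is outcome~(a). Otherwise $(t'-1, t']$ is busy in $\Sigma_{<j}$; let $\hat t \in [0, t']$ be the smallest value with every unit slot of $(\hat t, t']$ busy in $\Sigma_{<j}$, so $\hat t \le t' - 1$. If $\hat t = 0$, then $(0,1]$ is busy, i.e.\ covered by $m$ jobs, each of which must start at time $0$; none of them is $j$ (whose start is $t' > 0$), so each was handled before $j$ and has $F$-value $\le F_j$, and any one of them serves as $j'$ in outcome~(b) with $(\tilde S_{j'}, \tilde S_j] = (0, t']$. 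If $\hat t \ge 1$, minimality of $\hat t$ forces $(\hat t - 1, \hat t]$ to carry at most $m-1$ jobs while $(\hat t, \hat t+1] \subseteq (\hat t, t']$ carries $m$, so some job covers $(\hat t, \hat t+1]$ but not $(\hat t-1, \hat t]$ and therefore starts exactly at $\hat t$; this job lies in $\Sigma_{<j}$ (it covers a slot before $\tilde S_j$, so it was handled before $j$ and has $F$-value $\le F_j$), and taking it as $j'$ gives $\tilde S_{j'} = \hat t < \tilde S_j$ with $(\tilde S_{j'}, \tilde S_j] = (\hat t, t']$ busy, again outcome~(b).

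I do not anticipate a deep obstacle; the one point that genuinely needs care is \emph{relative to which schedule} busyness is measured. The job $j'$ we extract must have been processed before $j$ so that $F_{j'} \le F_j$ holds, which is precisely why the whole analysis is run against $\Sigma_{<j}$ rather than the schedule of interest, invoking the monotonicity ``busy in $\Sigma_{<j}$ $\Rightarrow$ busy in the schedule of interest'' only at the very end. Minor items to verify along the way are the integrality of start and completion times (needed to speak of the slots $(t'-1, t']$ and $(\hat t, \hat t+1]$), the degenerate case $p_j = 0$ (which forces $t' = t$ and lands in outcome~(a)), and the convention that an empty interval counts as busy in outcome~(a).
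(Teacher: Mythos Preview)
The paper does not prove this lemma itself; it is quoted from \cite{MQS98} and \cite{Li20} and used as a black box inside the proof of Lemma~\ref{lemma:idle-general}. Your argument is a correct self-contained proof along the natural lines: the dichotomy ``either $\tilde S_j=\max_{j'\prec j}\tilde C_{j'}$, or $(\tilde S_j-1,\tilde S_j]$ is already busy in $\Sigma_{<j}$'' is precisely what minimality of $t'$ in Step~\ref{step:list-scheduling-t'} enforces, and in the second branch walking left to the first non-busy slot and extracting a job that starts there delivers outcome~(b). One cosmetic remark: the parenthetical ``it covers a slot before $\tilde S_j$, so it was handled before $j$'' is not a valid implication in isolation (jobs handled after $j$ can also cover earlier slots); the correct reason, which you set up explicitly at the outset, is simply that you are computing congestion in $\Sigma_{<j}$, so any job contributing to that congestion lies in $\Sigma_{<j}$ by definition. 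With that wording tightened, the proof is complete.
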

	
	We start from $j = j^*$ and repeat the following process. While $\tilde S_j > 0$, we find a job $j'$ satisfying either (\ref{lemma:PwC-previous-j}a) or (\ref{lemma:PwC-previous-j}b), and update $j \gets j'$.  Notice that $F_j$ and $\tilde S_j$ only decrease from iteration to iteration. $\tilde S_j$ decreases from the initial value of $\tilde S_{j^*}$ to the final value of $0$, and $F_j$ decreases from the initial value of $F_{j^*}$ to some non-negative number.   
	
	In each iteration, we show that the number of idle slots in $(\tilde S_{j'}, \tilde S_{j}]$ is at most $\frac{F_j - F_{j'}}{\theta}$: In case (\ref{lemma:PwC-previous-j}a), we get at most $\tilde C_{j'} - \tilde S_{j'} = p_{j'}$ units of  idle time in $(\tilde S_{j'}, \tilde S_j]$, and $p_{j'} \leq \frac{F_j - F_{j'}}{\theta}$.  In case (\ref{lemma:PwC-previous-j}b), there are no idle slots in $(\tilde S_{j'}, \tilde S_{j}]$, and $0 \leq \frac{F_j - F_{j'}}{\theta}$.  So, the total number of idle time slots before $\tilde S_{j^*}$ is at most $\frac{F_{j^*}}{\theta}$, implying that the total amount of idle time before $\tilde C_{j^*}$ is at most $\frac{F_{j^*}}{\theta} + p_{j^*}$. This finishes the proof of Lemma~\ref{lemma:idle-general}.
\end{proof}

\cite{MQS98} used $F_j = C_j - \frac{p_j}{2}$ to obtain their $4$-approximation for the problem. However we are not guaranteed that the vector $(C_j - \frac{p_j}{2})_{j \in J}$ respects the precedence constraints. 
Instead, we define $F_j = C_j + q_j - p_j$ for every $j \in J$ and call $\listscheduling((F_j)_{j \in J})$.  $q_j - p_j$ is the maximum size of jobs in a precedence chain ending at some predecessor of $j$.  
If $j \prec j'$ then $F_{j'} - F_j = (C_{j'} + q_{j'} - p_{j'}) -  (C_j + q_j - p_j) \geq q_{j'} - p_{j'} - q_j + p_j \geq p_j$. So, $(F_j)_{j \in J}$ respects the precedence constraints, and it satisfies the condition in Lemma~\ref{lemma:idle-general} with $\theta = 1$. By the lemma, we have $T_\idle \leq F_{j^*} + p_{j^*} = C_{j^*} + q_{j^*} \leq 2C_{j^*}$. 
\begin{align*}
	T_\busy  &\leq \frac{1}{m}p\big(\{j \in J: F_j \leq F_{j^*} \}\big) = \frac{1}{m}\big(\{j \in J: C_j + q_j - p_j \leq C_{j^*} + q_{j^*} - p_{j^*}\}\big)\\
	&\leq \frac{1}{m}\big(\{j \in J: C_j \leq C_{j^*} + q_{j^*} - p_{j^*}\}\big) \leq (2+O(\epsilon)) (C_{j^*} + q_{j^*} - p_{j^*}) \leq (4+O(\epsilon))C_{j^*}.
\end{align*}
The first and third inequalities used Claim~\ref{claim:busy} and Lemma~\ref{lemma:bound-cj-before-tau}. Therefore, we have $\tilde C_{j^*} \leq (6+O(\epsilon)) C_{j^*}$, which gives a $\tilde O_\epsilon(n + \kappa)$-time $(6+O(\epsilon))$-approximation for $P|\tprec|\sum_j w_j C_j$, assuming $p_{\max} = \poly(n)$.


\section{Nearly-Linear Time Rounding Algorithm for Weighted Completion Time Scheduling on Unrelated Machines}
\label{sec:R-wC-rounding}
We show in Section~\ref{subsec:constructing-z} that the vector $\bfz$ in Theorem~\ref{thm:RwC-exists-zcirc} can be constructed in nearly-linear time. This proves Theorem~\ref{thm:main-R-wC-LP}.  To prove Theorem~\ref{thm:main-R-wC}, we show that the rounding algorithm of Im and Li \cite{IL23} runs in time nearly-linear in the size of the support. This is done in Section~\ref{subsec:rectangle-rounding}.

\subsection{Explicit Construction of $\bfz$ in Nearly-Linear Time}
\label{subsec:constructing-z}

In this section, we show how to construct the $\bfz$ explicitly in nearly-linear time. We use the following idea: If we discretize job lengths, and ignore job identities, then we only need to deal with $\tilde O_\epsilon(1)$ different mini-jobs for any machine $i$.

We focus on a fixed machine $i \in M$ from now on, and show how to construct $\bfz$ for this $i$.  We create two sets of rectangles: 
\begin{itemize}
	\item For every quadruple $deuv$, we create a rectangle of height $\sum_{j \in  N(i)} y_{ijdeuv}$ and horizontal span being the blocking interval of a mini-job for any $y_{ijdeuv}$; notice that all the mini-jobs have the same blocking interval as it only depends on $deuv$. We call the rectangle a \emph{cross-rectangle} and denote it as $deuv$.  This will stand for all the cross-mini-jobs for the variables $y_{ijdeuv}, j \in N(i)$. As they are all isomorphic, we do not need to distinguish them until the end of the algorithm. Let $\calR_{\textsf{cross}}$ be the set of cross-rectangles we created. Notice that the total height of the rectangles covering any time point $t$ is at most $1+\epsilon$. 
	\item  We cut down the length of each inside-mini-job for $i$ to the nearest integer in $\{T_1, T_2, \cdots, T_D\}$. First, each length is cut by at most a multiplicative factor of $\epsilon$, which can be ignored.  Second, the number of different lengths for inside-mini-jobs become $D = O(\frac{\log n}{\epsilon}) = \tilde O_\epsilon(1)$.  For every $d \in [D]$, every possible length $p$, we create a rectangle of height $\sum_{j \in N(i):p_{ij} = p} x_{ijd}$ and width $p$. Denote the rectangle by the pair $dp$ and we call it an inside-rectangle. Let $\calR_{\textsf{inside}}$ be the set of inside rectangles. Unlike a cross-rectangle, an inside-rectangle only has a width; it does not have a horizontal span.  
\end{itemize}

Notice that for the fixed $i$, the total number of rectangles in both $\calR_{\textsf{cross}}$ and $\calR_{\textsf{inside}}$ is bounded by $\tilde O_\epsilon(1)$.  The running time for constructing the rectangles, over all $i$, is linear in the support of $(\bfx, \bfy)$.

Then we simulate Algorithm~\ref{alg:schedule-on-mini-machines}.  We first construct the vector $\sigma : \text{cross-mini-jobs} \to \text{mini-machines}$ as in Step~\ref{step:mini-schedule-sigma} in $\tilde O_\epsilon(1)$ time. In each iteration, we choose a set of cross-rectangles with disjoint horizontal span, as follows. Choose the first rectangle as the one in $\calR_{\textsf{cross}}$ with the earliest starting time. Then for each $a \geq 2$, choose the $a$-th rectangle as the one in $\calR_{\textsf{cross}}$ with the earliest starting time, whose starting time is at least the ending time of the $(a-1)$-th rectangle. The procedure terminates when the $a$-th rectangle can not be found.  Then, let $g$ be the minimum height of all the rectangles we chose.  For each such rectangle, we split off a sub-rectangle of height $g$, with the same horizontal span; we assign all the sub-rectangles to the first $\Phi g$ mini-machines.  The height of the rectangles will be decreased by $g$. If the height of a rectangle becomes $0$, then we remove it.  Due to the greedy choices, after the iteration, the total height of rectangles in $\calR_{\textsf{cross}}$ covering any time point becomes at most $1+\epsilon - g$.  Then we repeat the procedure until all rectangles are removed, and we use $(1+\epsilon)\Phi$ mini-machines.  In every iteration, at least one rectangle disappears. So the running time of the procedure is $\tilde O_\epsilon(1)$ since the number of rectangles in $\calR_{\textsf{cross}}$ is $\tilde O_{\epsilon}(1)$. Each sub-rectangle of a rectangle in $\calR_{\textsf{cross}}$ is assigned to a consecutive set of mini-machines (assuming mini-machines are indexed from 1 to $(1+\epsilon)\Phi$). 


Then we simulate Loop \ref{step:mini-schedule-loop} of Algorithm~\ref{alg:schedule-on-mini-machines}. We maintain a partition of the mini-machines $[(1+\epsilon)\Phi]$ into intervals that we call bundles, where the mini-machines in each interval have the same load. Every time we try to handle a sub-rectangle of rectangle in $\calR_{\textsf{cross}}$, we find the bundles that overlap with the mini-machines it is assigned to according to $\sigma$, and increase their loads. A bundle may be split into two if necessary.  To handle a rectangle $dp$ in $\calR_{\textsf{inside}}$,  we repeatedly find the bundle with the smallest load, and schedule a sub-rectangle of $dp$ on the bundle by increasing the load. The last bundle may need to be split into two bundles. Therefore, every time we handle a sub-rectangle of rectangle in $\calR_{\textsf{cross}}$, or  a rectangle in $\calR_{\textsf{inside}}$, the number of bundles increases by at most 2. Therefore, the whole algorithm runs in time $\tilde O_\epsilon(1)$, as total number of sub-rectangles from $\calR_{\textsf{inside}}$ and rectangles in $\calR_{\textsf{inside}}$ is $\tilde O_\epsilon(1)$.  Moreover, along the way, we can increase the length of sub-rectangles of $deuv \in \calR_{\textsf{cross}}$ to $\tau^{(e+1)}_v - \tau^{(d)}_u$, which upper bounds the length of a job $j$ with $y_{ijdeuv}$ being a variable. So, this will cover Step~\ref{step:mini-schedule-extend} of Algorithm~\ref{alg:schedule-on-mini-machines} as well.

Once we have the assignment of sub-rectangles into bundles, we can then recover the vector $\bfz$ for $i$. For every rectangle $deuv \in \calR_{\textsf{cross}}$, we consider all its sub-rectangles, and then matching them to the cross-mini-jobs for variables $\{y_{ijdeuv}: j \in N(i)\}$ in a natural way. Construct two lists arbitrarily, one containing the sub-rectangles of $deuv$, the other containing jobs $j$ with $y_{ijdeuv} > 0$. We take the first sub-rectangle and the first job $j$ in the two lists.  Let $g$ be the minimum of $y_{ijdeuv}$ and the height of the sub-rectangle. Assume the sub-rectangle is scheduled with starting time $s$. Then we increase $z_{ijs}$ by $g$, decrease both $y_{ijdeuv}$ and the height of the sub-rectangle by $g$. If the sub-rectangle becomes empty, we move to the next sub-rectangle in the list; if $y_{ijdeuv}$ becomes $0$, we move to the next job in the job list.   We can handle the sub-rectangles of inside-jobs in a similar way. Notice that  the number of $z_{ijs}$ with positive values is at most the number of variables in the LP for the machine $i$ plus the number of sub-rectangles constructed.  Over all the machines, the support size of $\bfz$ is $\tilde O_\epsilon(|E|)$.

Finally, we may need to increase the length of jobs to their original length.  We should extend a job of length $p$ into a job of length $\floor{\frac{p}{1-\epsilon}}$.  To do so, we scale the time horizon by a factor of $\frac{1}{1-\epsilon}$: a job with scheduling interval $(S, C]$ will now be scheduled in $\Big(\frac{S}{1-\epsilon}, \frac{C}{1-\epsilon}\Big]$.  To make sure the starting and ending times are integers, we can change the scheduling interval to $\Big(\floor{\frac{S}{1-\epsilon}}, \floor{\frac{C}{1-\epsilon}}\Big]$.

\subsection{Rounding Algorithm for $R||\sum_j w_jC_j$ of Im and Li}
\label{subsec:rectangle-rounding}
In this section we  sketch the rounding algorithm of Im and Li \cite{IL23} for the scheduling problem that achieves the 1.45-approximation, and argue that it has running time nearly-linear in $|E|$.  They first defined a strong negative correlation scheme and designed a randomized algorithm to achieve the desired properties. In the setting, there is a set $M$ of machines, a set $J$ of jobs, a set $U$ of groups, a function $g: U \to M$ mapping groups to machines, and a vector $\bfy \in [0, 1]^{U \times J}$ such that $y(u, J) \leq 1$ for every $u \in U$, and $y(U, j) = 1$ for every $j \in J$, where $y(u, J') = \sum_{j \in J'}y_{uj}$ for every $u \in U$ and $J' \subseteq J$, and $y(U', j) = \sum_{u \in U'} y_{uj}$ for every $U'\subseteq U$ and $j \in J$.  A group $u \in U$ \emph{belongs to} the machine $g(u)$. A machine $i \in M$ is said to \emph{dominate} a job $j \in J$ if $y(g^{-1}(i), j) > \frac12$, where $g^{-1}(i)$ is defined as $\{u \in U: g(u) = i\}$.

The output of the scheme is an assignment $\sigma: J \to U$ of jobs to groups satisfying marginal probabilities, non-positive correlation for a same machine, and strongly negative correlation for a same group. Formally, for any $u \in U$ and $j \in J$, we need $\Pr[\sigma(j) = u] = y_{uj}$. For any two distinct jobs $j, j'$ and two (possibly identical) groups $u, u' \in U$ with $g(u) = g(u')$, we need $\Pr[\sigma(j) = u, \sigma(j') = u'] \leq y_{uj}y_{u'j'}$. For any two distinct jobs $j, j' \in J$ and group $u \in U$ such that $g(u)$ does not dominate any of $j$ and $j'$, we have $\Pr\left[\sigma(j) = \sigma(j') = u\right] \leq (1-\eta)y_{uj}y_{uj'}$,  where $\eta > 0$ is an absolute constant. Im and Li \cite{IL23} gives an algorithm for the scheme with $\eta = 0.1561$.

Let $F$ be the set of group-job pairs $uj$ with $y_{uj} > 0$.  It is easy to see that the algorithm of Im and Li for the strong negative correlation scheme can run in $O(|F|)$ time. In the algorithm, every job $j \in J$ randomly chooses two candidate groups $v^1_j$ and $v^2_j$ with $v^1_jj, v^2_jj \in E$, with probabilities satisfying some conditions. The two edges are called candidate edges for $j$.  $H^{\mathrm{cand}}$ is defined as the graph between $U$ and $J$ containing all the candidate edges. Then they independently mark each candidate edge with some probability. For every $u \in U$, they pair the marked edges incident to $u$.  They define  a graph $H^{\mathrm{split}}$ by splitting each group $u$ into multiple copies, one for a pair of marked edges, or a single unpaired candidate edge.  $H^{\mathrm{split}}$ is a bipartite graph where every vertex on the left side (they are obtained from the splitting operation) has degree $1$ or $2$, and every job on the right side has degree $2$.  So the graph is the disjoint union of many cycles and paths. Then Im and Li used some simple procedure for each cycle and path, to obtain the final assignment $\sigma$. 

Then we  proceed to discuss how Im and Li used the strong negative correlation scheme to round a solution $\bfz = [0, 1]^{E \times [0, T)}$ to LP\eqref{LP:rectangle}. They view the LP solution $\bfz$ as a collection of rectangles. For every $z_{ijs} > 0$, they use the triple $R_{ijs}$ to denote the rectangle with horizontal span $(s, s+p_{ij}]$ and height $z_{ijs}$. 
%
Let $\alpha = 0.3, \beta = 12.1$ and they randomly choose $\rho \in [1, 1+\beta)$ so that $\ln\beta$ is uniformly distributed in $[0, \ln(1+\beta))$. The time horizon is partitioned into infinite number of base windows of the form $(\rho(1+\beta)^{k-1}, \rho (1+\beta)^k], k \in \Z$, with grid points of the form $\rho (1+\beta)^k$.  They also choose a threshold $\tau_{ij}$ for every machine-job pair $ij \in E$.  A rectangle $R_{ijs}$ belongs to a base window $k$ if $s \leq \rho(1+\beta)^{k-1} < s+\tau_{ij} \leq \rho(1+\beta)^k$; let $\calR_k$ be the set of all rectangles belonging to the base window $k$. Notice that some rectangle may not belong to any base window. 

To create the instance for the strong negative correlation scheme, they add a group $u_{ik}$ for every machine $i$ and base window $k$, correspondent to the rectangles on machine $i$ and belonging to base window $k$.  They also create a group $v_{ij}$ for a pair $ij \in E$, correspondent to the rectangles on machine $i$ for job $j$ that do not belong to any base window.  With the correspondence between groups and sets of rectangles, the edges $F$ between the groups $U$ and the jobs $J$, and the vector $\bfy$ can be defined naturally: $y_{uj}$ for $u \in U, j \in J$ is the total height of all rectangles in the set $u$ for job $j$; if $y_{uj} = 0$ then there is no edge $uj$.  Clearly, the size of the instance is $\tilde O_\epsilon(|E|)$. Using the algorithm for the strong negative correlation scheme, they can find an assignment of jobs to groups, and thus an assignment of jobs to machines. Then scheduling jobs on each machine using the Smith rule gives the final schedule.  It is not hard to see the algorithm runs in nearly linear time. Finally, the $1+\epsilon$ violation on LP constraints lead to a $1+\epsilon$ multiplicative factor in the approximation ratio.

\section{Handling Super-Polynomial Integers in Input}
\label{sec:bigpw}
In this section, we show how to handle the cases when the sizes and/or weights are super-polynomial in $n$, for the two problems $P|\tprec|\sum_j w_jC_j$ and $R||\sum_j w_j C_j$.

\subsection{Handling Arbitrary Processing Times and Weights for $R||\sum_j w_jC_j$}
\label{subsec:wc-big-p-w}

In this section, we consider the unrelated machine weighted completion time problem, and remove the assumption that all weights and processing times are bounded by a polynomial function of $n$. 

\paragraph{Preprocessing}
First we need a $\poly(n)$-approximation for the problem, and this can be done easily: 
\begin{lemma}
	Assigning each job $j$ to the machine $i$ with the smallest $p_{ij}$ leads to a $\frac{n+1}{2}$-approximation for the weighted completion time problem. 
\end{lemma}
\begin{proof}
	Let $\sigma \in M^J$ be the assignment that assigns each job $j$ to the machine $i$ with the smallest $p_{ij}$. Notice that $Q:=\sum_{j \in J} w_j p_{\sigma_jj}$ is a lower bound for the weighted completion time of any schedule.  
	
	Using Smith's rule, it is well known that the weighted completion time of $\sigma$ is 
	\begin{align*}
		&\quad\quad \sum_{i, \{j, j'\}: \sigma_j = \sigma_{j'} = i} \min\Big\{w_j p_{ij'}, w_{j'} p_{ij}\Big\} \quad\leq\quad \sum_{\{j, j'\}} \min\Big\{w_j p_{\sigma_{j'}j'}, w_{j'} p_{\sigma_jj}\Big\}\\
		&\leq\quad \frac12\sum_{\{j, j'\}} \Big(w_j p_{\sigma_jj}+ w_{j'} p_{\sigma_{j'}j'}\Big)\quad=\quad\frac{n+1}{2}\sum_{j \in J} w_j p_{\sigma_jj}\quad=\quad\frac{n+1}{2}Q.
	\end{align*}
	Above, $\{j, j'\}$ is over all subsets of $J$ of size $1$ (in case $j = j'$) or $2$.  The summations are well-defined since all the terms inside  are symmetric w.r.t $j$ and $j'$.  The second inequality holds as $w_j p_{\sigma_jj}+ w_{j'} p_{\sigma_{j'}j'} \geq 2\sqrt{w_j p_{\sigma_jj}w_{j'} p_{\sigma_{j'}j'}} \geq 2\min\Big\{w_j p_{\sigma_{j'}j'}, w_{j'} p_{\sigma_jj}\Big\}$. 
\end{proof}

Then, we can assume we are given an upper bound $\Phi$ on the optimum weighted completion time and our goal is to find a schedule with weighted completion time $(1.45+O(\epsilon))\Phi$.  If some $ij \in E$ has $w_jp_{ij} > \Phi$, then we can remove $ij $ from $E$ since it can not be used.  For any job $j \in J$ for which there exists a machine $i \in M$ such that $w_jp_{ij} \leq \frac{\epsilon^2 \Phi}{n^3}$, we can then remove $j$ from $J$ (but keeping $n$ unchanged), and in the end we insert $j$ to this machine $i$ using the Smith's rule.  Let $J'$ be the set of remaining jobs and $J''$ be the set of jobs removed and inserted back in the end.  Assume we have a schedule for $J'$ with total weighted completion time at most $(1.45 + O(\epsilon))\cdot\Phi = O(1)\cdot \Phi$, and it obeys the Smith's rule. We prove
\begin{claim}
	Inserting $J''$ to the schedule for $J'$ increases the weighted completion time by at most $O(\epsilon)\cdot \Phi$.
\end{claim}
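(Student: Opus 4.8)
The plan is to exploit the closed-form expression for the weighted completion time of a Smith-rule schedule. Recall that for any assignment $\sigma$, processing each machine's jobs by Smith's rule yields weighted completion time $\sum_{i\in M}\sum_{\{j,j'\}\subseteq\sigma^{-1}(i)}\min\{w_jp_{i,j'},w_{j'}p_{i,j}\}$, where the inner sum ranges over sub-multisets of size $1$ or $2$ (so $j=j'$ is allowed, contributing $w_jp_{i,j}$). When we insert each $j\in J''$ into its Smith position on the machine $i_j$ satisfying $w_jp_{i_j,j}\le\epsilon^2\Phi/n^3$, the jobs on every machine remain in Smith order, so the combined schedule is again a Smith-rule schedule and the increase $\Delta$ in weighted completion time is exactly the sum of the \emph{new} terms of this formula: the singleton terms $\mu_j:=w_jp_{i_j,j}$ for $j\in J''$, plus the pair terms $\min\{w_jp_{i,j'},w_{j'}p_{i,j}\}$ over pairs $\{j,j'\}$ lying on a common machine $i$ with $\{j,j'\}\cap J''\neq\emptyset$.

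First I would bound each pair term by a geometric mean: for a pair $\{j,j'\}$ on a machine $i$, $\min\{w_jp_{i,j'},w_{j'}p_{i,j}\}\le\sqrt{(w_jp_{i,j'})(w_{j'}p_{i,j})}=\sqrt{(w_jp_{i,j})(w_{j'}p_{i,j'})}$. Writing $\mu_k:=w_kp_{\sigma(k),k}$ for the ``self-term'' of a job $k\in J'\cup J''$ under the combined assignment $\sigma$, every such pair term is at most $\sqrt{\mu_j\mu_{j'}}$. I would then record three facts: (i) $\mu_j\le\epsilon^2\Phi/n^3$ for every $j\in J''$, by the choice of $J''$; (ii) $\sum_{j\in J''}\mu_j\le|J''|\cdot\epsilon^2\Phi/n^3\le\epsilon^2\Phi/n^2$ since $|J''|\le n$; and (iii) $\sum_{k\in J'\cup J''}\mu_k=O(\Phi)$, using $\mu_k\le w_kC_k$ for each $k$ together with the assumed bound $(1.5+O(\epsilon))\Phi$ on the weighted completion time of the $J'$-schedule and (ii).

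Combining these, $\Delta\le\sum_{j\in J''}\mu_j+\sum_{j\in J''}\sum_{k\in(J'\cup J'')\setminus\{j\}}\sqrt{\mu_j\mu_k}\le\sum_{j\in J''}\mu_j+\big(\sum_{j\in J''}\sqrt{\mu_j}\big)\big(\sum_{k\in J'\cup J''}\sqrt{\mu_k}\big)$. By Cauchy--Schwarz, $\sum_{k\in J'\cup J''}\sqrt{\mu_k}\le\sqrt{|J'\cup J''|\sum_k\mu_k}=O(\sqrt{n\Phi})$ and $\sum_{j\in J''}\sqrt{\mu_j}\le\sqrt{|J''|\sum_{j\in J''}\mu_j}\le\sqrt{n\cdot\epsilon^2\Phi/n^2}=\epsilon\sqrt{\Phi/n}$. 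Hence the product is $O\big(\epsilon\sqrt{\Phi/n}\cdot\sqrt{n\Phi}\big)=O(\epsilon\Phi)$, while $\sum_{j\in J''}\mu_j\le\epsilon^2\Phi/n^2=O(\epsilon\Phi)$, giving $\Delta=O(\epsilon)\Phi$ as claimed.

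The main obstacle — and the reason the deletion threshold is chosen as small as $\epsilon^2\Phi/n^3$ — is making the powers of $n$ cancel. One loses a factor of up to $n^2$ from the number of new pairs; the geometric-mean estimate is essential, since bounding $\min$ by either single side blows the estimate up to the scale of $\Phi^2$, and the crucial structural input is that the self-terms $\mu_k$ of a Smith schedule sum to at most its weighted completion time, which is exactly what lets Cauchy--Schwarz absorb the remaining $\sqrt{n}$.
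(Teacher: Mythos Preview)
Your proof is correct and follows essentially the same approach as the paper: both use the Smith-rule closed form, bound each $\min$ term by the geometric mean $\sqrt{\mu_j\mu_{j'}}$, and then control the resulting sums of square roots via Cauchy--Schwarz together with $\sum_{k\in J'}\mu_k\le\sum_{k\in J'}w_kC_k=O(\Phi)$. The only cosmetic difference is that the paper splits off the $J''\times J''$ pairs and bounds them directly by $\binom{|J''|+1}{2}\cdot\epsilon^2\Phi/n^3$, whereas you fold them into the main product bound; both give $O(\epsilon\Phi)$.
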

\begin{proof}
	%
	For two jobs $j, j' \in J$, we use $j \sim j'$ to denote that $j$ and $j'$ are assigned to the same machine.  Let $p'_j$ be the processing time of $j$ on its assigned machine in the final schedule.  The cost incurred by inserting jobs in $J''$ is
	\begin{align*}
		&\quad\quad \sum_{\{j, j'\}: j \sim j', \{ j, j'\} \cap J'' \neq \emptyset} \min\{w_j p'_{j'}, w_{j'} p'_j\}
		\quad\leq\quad \sum_{\{ j, j'\}: \{j, j'\} \cap J'' \neq \emptyset} \sqrt{w_jp'_{j}w_{j'}p'_{j'}}\\
		&\leq\quad \sum_{j \in J'}\sqrt{w_j p'_{j}} \sum_{j' \in J''}\sqrt{w_{j'}p'_{j'}} + \left(\frac{|J''|\cdot (|J''|+1)}{2}\right)\cdot \frac{\epsilon^2\Phi}{n^3}\\
		&\leq\quad \sqrt{n}\cdot\sqrt{\sum_{j \in J'}w_j p'_{j}}\cdot n \sqrt {\frac{\epsilon^2\Phi}{n^3}} + \frac{\epsilon^2\Phi}{n}\\
		&\leq\quad \sqrt{n}\cdot O(\sqrt{\Phi}) \cdot  \frac{\epsilon\sqrt{\Phi}}{\sqrt{n}}  + \frac{\epsilon^2\Phi}{n} \quad\leq\quad O(\epsilon) \cdot \Phi.
	\end{align*}
	Both inequalities in the second line used that every $j' \in J''$ has $w_j p'_{j'} \leq \frac{\epsilon^2 \Phi}{n^3}$.  The second inequality also used that $\sqrt{a_1} + \sqrt{a_2} + \cdots + \sqrt{a_n} \leq \sqrt{n(a_1+a_2+\cdots +a_n)}$ for any $a_1, a_2, \cdots, a_n \geq 0$.  The first inequality in the third line used that the schedule for $J'$ has weighted completion time at most $O(1)\cdot \Phi$. 
\end{proof}

Therefore, after removing jobs $J''$ from $J$, we can assume for every machine $ij  \in E$, we have $w_j p_{ij} \in \left(\frac{\Phi}{B}, \Phi\right]$ for $B =\frac{n^3}{\epsilon^2} = \poly(n, \frac1\epsilon)$.


\begin{lemma}
	\label{lemma:cj-relate-to-pij}
	In any schedule that respects the Smith's rule, if $j$ is scheduled on $i$, then its completion time is in $\left[p_{ij}, n\sqrt{B}p_{ij}\right]$.
\end{lemma}
\begin{proof}
	The completion time of $j$ is at least $p_{ij}$. 	On the other hand, if $j'$ is scheduled before $j$ on the same machine $i$, then we have $\frac{p_{ij'}}{w_{j'}} \leq \frac{p_{ij}}{w_j}$.   This implies that $p_{ij'}^2 = p_{ij'}w_{j'}\cdot \frac{p_{ij'}}{w_{j'}} \leq Bp_{ij}w_j \cdot \frac{p_{ij}}{w_j} = B p_{ij}^2$. So, $p_{ij'} \leq \sqrt{B} p_{ij}$.  This implies that the completion time of $j$ is at most $n\sqrt{B}p_{ij}$.
\end{proof}


\paragraph{Modifications to LP\eqref{LP:RwC}}
With the lemma, we can then show how to modify LP\eqref{LP:RwC} to make our running time nearly-linear.  

\begin{itemize}
	\item {\bf Restricting the set of $x$-variables.} With Lemma~\ref{lemma:cj-relate-to-pij}, we introduce a variable $x_{ijd}$ only if  $\Delta_d \leq p_{ij}$ and $T_d \leq n \sqrt{B}p_{ij}$.  There are at most $O\left(\frac{\log (n\sqrt{B})}{\epsilon}\right) = \tilde O_{\epsilon}(1)$ different variables $x_{ijd}$ for a fixed $ij  \in E$. Thus, the number of $x$ variables is at most $O_{\epsilon}(|E|)$.    
	
	\item {\bf Restricting the set of $y$-variables by reducing job lengths.} We decrease each $p_{ij}$ by $\floor{\epsilon p_{ij}}$ and only allow $j$ to start at or after $\floor{\epsilon p_{ij}}$ on machine $i$. On one hand, this will make the instance easier; on the other hand, extending the length from $p_{ij} - \floor{\epsilon p_{ij}}$ back to $p_{ij}$ for all $ij \in E$ increases the weighted completion time by at most a multiplicative factor of $1+\epsilon$.  After this, we introduce a $y_{ijdeuv}$ variable only if $t^d_u \geq \floor{\epsilon p_{ij}}$ and $t^d_u + p_{ij} - \floor{\epsilon p_{ij}} \leq n\sqrt{B}p_{ij}$.	 Then the number of $y_{ijdeuv}$ variables can be bounded by $\frac{|E|\log n}{\epsilon^2}$. Moreover, one can see that each variable appears in at most $O(\frac{\log n}{\epsilon})$ constraints. Thus, the number of non-zeros in the LP can still be bounded by $\tilde O_\epsilon(|E|)$.
\end{itemize}

Finally, we need to say something about the construction of $\bfz$ since now $D = \Theta(\frac{\log (np_{\max})}{\epsilon})$ might be large, and the construction of $\bfz$ as in Appendix~\ref{subsec:constructing-z} might not be fast any more.  We break the time horizon $[1, \infty)$ into phases of the form $[n^{6k}, n^{6(k + 1)}), k \in \Z_{\geq 0}$; we call $[n^{6k}, n^{6(k + 1)})$ the $k$-th phase. Let $L = \ceil{\frac{1}{\epsilon}} + 1$.  Then, we discard 1 out of every $L$ phases randomly: choose some integer $\ell \in [0, L-1]$ uniform at random, we discard the $(aL + \ell)$-th phase for every integer $a \geq 0$.  If in the fractional solution $(\bfx, \bfy)$, any fraction of job $j$ has completion time inside a discarded phase, then we discard $j$.   A job $j$ can only complete at a time in $[p_{ij}, n\sqrt{B}p_{ij}]$ on machine $i$, and all $p_{ij}$'s for the same $j$ and different $i$'s differ by at most a factor of $B$. Therefore, a job $j$ is discarded with probability at most $O(1/L)$.  A phase (job) that is not discarded is said to be alive. 

We call a maximal consecutive interval of alive phases an epoch.  An alive job $j$ belongs to an epoch if every fractional of $j$ completes inside the epoch, according to $(\bfx, \bfy)$; every alive job must belong to some epoch.  For every epoch, we can construct a solution $\bfz$ to the rectangle LP, for the jobs belonging to the epoch, using the algorithm in \ref{subsec:constructing-z}; notice that now the lengths of jobs belonging to an epoch differ by at most a factor of $n^{O(L)} = n^{O(1/\epsilon)}$.  Then, we concatenate all the solutions $z$ for all epochs into a solution for all the alive jobs. This only loses a negligible factor as for any epoch $o$, the total length of all alive jobs belonging to all previous epochs is very short compared to the length of a job belong to $o$. Finally, we can enumerate all possible $\ell$'s, and take the average of the solution $\bfz$ constructed over all $\ell$. In the solution, every job is scheduled by a fraction of $1- O(1/L) = 1-O(\epsilon)$. We can scale the variables by a factor of $1 + O(\epsilon)$; all jobs are scheduled to a fraction of 1 and \eqref{LPC:rectangle-capacity} is violated by a factor of $1+O(\epsilon)$. 

\subsection{Handling Super-Polynomial $p_{\max}$ for $P|\tprec|\sum_j w_jC_j$}
\label{sec:prec-super-p}
The main modification in this case is that we define $d^{\min}_j$ and $d^{\max}_j$ differently.  First, we need a $\poly(n)$-approximation for the scheduling instance.  Recall that $q_j$ is the maximum total size of jobs in a precedence chain ending at $j$.  The optimum schedule has weighted completion time at least $\Phi:=\sum_{j \in J}w_j q_j$.  On the other hand, if we schedule all the jobs in non-decreasing order of $q_j$ values on one machine (even in case we have $m$ machines) so that jobs respect the precedence constraints, the completion time of a job $j$ is at most $nq_j$ and thus the weighted completion time of the schedule is at most $n \Phi$.  Therefore, we have $\Phi \leq \opt \leq  n\Phi$, where $\opt$ is the optimum weighted completion time for the given instance.

For every $j \in J$,  and let $\tilde w_j:= \max_{j' \succ^* j} w_{j'}$ be the maximum weight of a job that directly or indirectly succeeds $j$: $j' \succ^* j$ means there is a precedence chain from $j$ to $j'$; we assume $j \succ^* j$. $\tilde w_j$'s can be computed in $O(|E|)$ time using dynamic programming.  We still define $\tau_0 = 0$ and $\tau_d = (1+\epsilon)^{d-1}$ for every integer $d \geq 1$. For every $j \in J$, define
\begin{align*}
	d^{\min}_j := \max\set{d:\tau_{d} \leq \frac{\epsilon\Phi}{n\tilde w_j}} \qquad\text{and}\qquad d^{\max}_j := \min \set{d : \tau_d \geq \frac{n\Phi}{\tilde w_j}}.
\end{align*}
Let $D = \max_{j \in J} d^{\max}_j$.  We use $\tilde w_j$ instead of $w_j$ in the definitions to guarantee that $(d^{\min}_j)_{j \in J}$ and $(d^{\max}_j)_{j \in J}$ respect the precedence constraints.

We still use  LP\eqref{LP:prec}, but with the new definitions of $d^{\min}_j$ and $d^{\max}_j$ values.  In the linear program that we actually solve, we only have a variable $x_{jd}$ for every $j \in J$ and integer $d \in (d^{\min}_j, d^{\max}_j)$, since the other variables are fixed to $0$ or $1$.  In the analysis, it is convenient for us to keep a variable $x_{jd}$ for every $j \in J$ and $d \in [0, D]$. 

We need to argue the validity of LP\eqref{LP:prec} again since now we forced $x_{jd} = 0$ for $d \leq d^{\min}_j$. In the correspondent 0/1-integer program with the requirement, $x_{jd}$ is intended to indicate whether $d > d^{\min}_j$ and $j$ has completion time at most $\tau_d$.  \eqref{LPC:prec-same-j} says if $j$ has completion time at most $\tau_d$ and $d > d^{\min}_j$, then it has completion time at most $\tau_{d+1}$ and $d + 1 > d^{\min}_j$. \eqref{LPC:prec-prec} requires that for two jobs $j \prec j'$ and $d > d^{\min}_{j'}$,  if $j'$ has completion time at most $\tau_d$, then so does $j$ and $d > d^{\min}_{j}$.  This is valid since $\tilde w_{j} \geq \tilde w_{j'}$, which imply $d^{\min}_j \leq d^{\min}_{j'}$. \eqref{LPC:prec-capacity} is valid since the total size of jobs with completion time at most $\tau_d$ is at most $m \tau_d$ for every $d \in [0, D]$ in any valid solution.  \eqref{LPC:prec-0} is from the intended meaning of $x_{jd}$'s and that a job $j$ can not complete before time $q_j$. \eqref{LPC:prec-1} is valid by the definition of $d^{\max}_j$: If a job $j$ has completion time more than $\tau_{d^{\max}_j}$, then it incurs a weighted completion time of more than $n\Phi \geq \opt$.

Then we show that forcing $x_{jd} = 0$ if $d \leq d^{\min}_j$ only incurs a multiplicative factor of $1+\epsilon$. Recall that $\lp$ and $\opt$ are respectively the values of LP\eqref{LP:prec} and the scheduling instance.

\begin{lemma}
	$\lp \leq (1+\epsilon)^2\opt$.
\end{lemma}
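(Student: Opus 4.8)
The plan is to exhibit a feasible solution $\bfx$ to LP\eqref{LP:prec} (with the new $d^{\min}_j, d^{\max}_j$) whose objective value is at most $(1+\epsilon)^2\opt$, by starting from the integral solution $\tilde\bfx^*$ corresponding to the optimum schedule and ``pushing down'' the small variables to respect the new constraint $x_{j,d}=0$ for $d\le d^{\min}_j$. Concretely, let $\tilde x^*_{j,d}\in\{0,1\}$ indicate whether job $j$ completes by $\tau_d$ in the optimum schedule; by Lemma~\ref{lemma:prec-lp-to-opt}'s argument this solution (with the old setting $d^{\min}_j=0$) has LP-objective at most $(1+\epsilon)\opt$. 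Now define $\hat x_{j,d} := \tilde x^*_{j,d}$ for $d > d^{\min}_j$ and $\hat x_{j,d} := 0$ for $d \le d^{\min}_j$. I would check that $\hat\bfx$ is feasible for the new LP: constraints \eqref{LPC:prec-same-j}, \eqref{LPC:prec-capacity}, \eqref{LPC:prec-0}, \eqref{LPC:prec-1} are preserved because we only zeroed out variables, only did so at indices forced to $0$ anyway when they are consistent, and the capacity constraint's left side only decreased; and \eqref{LPC:prec-prec} holds because $\tilde w_j\ge\tilde w_{j'}$ for $j\prec j'$ implies $d^{\min}_j\le d^{\min}_{j'}$, so if $\hat x_{j',d}=1$ then $d>d^{\min}_{j'}\ge d^{\min}_j$, hence $\hat x_{j,d}=\tilde x^*_{j,d}\ge \tilde x^*_{j',d}=1$. (The validity of \eqref{LPC:prec-1} under the new $d^{\max}_j$ was already argued in the paragraph preceding the lemma, using $\opt\le n\Phi$.)

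The core of the proof is bounding how much zeroing out these variables increases the objective. The objective \eqref{LP:prec} equals $w(J)\tau_D - \sum_{j}w_j\sum_{d=1}^{D-1}\eta_d x_{j,d}$, so decreasing $x_{j,d}$ from $1$ to $0$ for $d\in[1,d^{\min}_j]$ increases the objective by $w_j\sum_{d=1}^{d^{\min}_j}\eta_d = w_j\,\tau_{d^{\min}_j+1} \le w_j(1+\epsilon)\tau_{d^{\min}_j}$ (using $\sum_{d=1}^{k}\eta_d=\tau_{k+1}$ and $\tau_{k+1}=(1+\epsilon)\tau_k$ for $k\ge 1$; the $d=1$ term contributes $\eta_0=\tau_1$, so $\sum_{d=1}^{d^{\min}_j}\eta_d=\tau_{d^{\min}_j+1}$). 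By the definition $d^{\min}_j=\max\{d:\tau_d\le \epsilon\Phi/(n\tilde w_j)\}$ we have $\tau_{d^{\min}_j}\le \epsilon\Phi/(n\tilde w_j)\le \epsilon\Phi/(nw_j)$ since $\tilde w_j\ge w_j$. Hence the total increase over all jobs is at most $\sum_{j}w_j(1+\epsilon)\cdot\frac{\epsilon\Phi}{nw_j} = (1+\epsilon)\epsilon\Phi\le (1+\epsilon)\epsilon\cdot\opt$, using $\Phi\le\opt$. Therefore the new LP has a feasible solution of value at most $(1+\epsilon)\opt + (1+\epsilon)\epsilon\opt = (1+\epsilon)^2\opt$, which gives $\lp\le(1+\epsilon)^2\opt$.

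The main obstacle — or rather the step requiring the most care — is confirming that after zeroing out the small variables, the resulting solution still respects the precedence constraints \eqref{LPC:prec-prec} simultaneously with the interplay of $d^{\min}$ and $d^{\max}$, i.e., that the monotonicity $d^{\min}_j\le d^{\min}_{j'}$ and $d^{\max}_j\le d^{\max}_{j'}$ for $j\prec j'$ (both following from $\tilde w_j\ge\tilde w_{j'}$, which in turn holds because every job succeeding $j'$ also succeeds $j$) is exactly what makes the modified bounds consistent with $\tilde x^*$. A secondary subtlety is the bookkeeping of the geometric sum $\sum_{d=1}^{d^{\min}_j}\eta_d$: one must handle the $d=0$ versus $d=1$ indexing of $\tau$ and $\eta$ correctly, and note the edge case $d^{\min}_j=0$ (then there is nothing to zero out and the increase is $0$). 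Once those are pinned down, the rest is the short arithmetic above, and the lemma follows; I would also remark that this is the only place the factor $(1+\epsilon)^2$ (rather than $1+\epsilon$) enters, and that combined with the $(1+O(\epsilon))$ loss from the approximate LP solver it still yields an $O(\epsilon)$-relative error overall.
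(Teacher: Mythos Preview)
Your proof is correct and follows essentially the same approach as the paper: your $\hat\bfx$ is precisely the solution $\tilde\bfx^*$ the paper defines directly (with the condition $d>d^{\min}_j$ baked in), and the bound on the objective increase matches the paper's subtraction of $\sum_j w_j\tau_{d^{\min}_j}\le\epsilon\,\opt$. One minor slip: $\sum_{d=1}^{k}\eta_d=\tau_{k+1}-\tau_1=\tau_{k+1}-1$, not $\tau_{k+1}$ (and the $d=1$ term is $\eta_1$, not $\eta_0$), but since you only need the upper bound $\le\tau_{k+1}=(1+\epsilon)\tau_k$ this does not affect the argument.
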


\begin{proof}
	Let $\tilde \bfx^* \in \{0, 1\}^{J \times [0, D]}$ be the solution correspondent to the optimum schedule: $\tilde x^*_{jd} \in \{0, 1\}$ indicates if $d >d^{\min}_j$ and $j$ has completion time at most $\tau_d$ in the schedule. Then, we have
	\begin{align*}
		\opt \quad&\geq\quad \sum_{j \in J}w_j \left(\sum_{d = 1}^{D} (\tilde x^*_{jd} - \tilde x^*_{j(d-1)}) \frac{\tau_{d}}{1+\epsilon} - \tau_{d^{\min}_j}\right) \\
		&\geq\quad \frac{1}{1+\epsilon}\sum_{j \in J}w_j \sum_{d = 1}^{D} (\tilde x^*_{jd} - \tilde x^*_{j(d-1)}) \tau_d - \epsilon\cdot\opt\\
		&=\quad \frac{1}{1+\epsilon}\sum_{j \in J} w_j \left(\sum_{d = 1}^{D-1} \tilde x^*_{jd}\big(\tau_d - \tau_{d+1}\big) + \tau_D\right) - \epsilon\cdot\opt \\
		&=\quad \frac{1}{1+\epsilon}\sum_{j \in J} w_j\left(\tau_D  - \sum_{d=1}^{D-1} \eta_d \tilde x^*_{jd}\right) -\epsilon\cdot \opt 
		\quad\geq\quad \frac{\lp}{1+\epsilon}  -\epsilon\cdot \opt.
	\end{align*}
	To see the first inequality in the first line, focus on a job $j \in J$ and the $d$ such that $\tilde x^*_{jd} = 1$ and $\tilde x^*_{j(d-1)} = 0$. If $d-1 > d^{\min}_j$, then the completion time of $j$ is in $(\tau_{d-1}, \tau_d]$ and thus is at least $\frac{\tau_d}{1+\epsilon}$. Otherwise $d-1 = d^{\min}_j$ and the completion time of $j$ is at least 1. In either case, the term inside the parentheses lower bounds the completion time.   The second inequality in the line holds since $\sum_{j \in J}w_j \tau_{d^{\min}_j} \leq \epsilon \cdot \opt$, as $w_j \tau_{d^{\min}_j} \leq \tilde w_j \tau_{d^{\min}_j} \leq \frac{\epsilon\Phi}{n} \leq \frac{\epsilon\opt}{n}$  for every $j \in J$.   The other arguments are the same as those in the proof of Lemma~\ref{lemma:prec-lp-to-opt}. In the end, we have $\lp \leq (1+\epsilon)^2\opt$, finishing the proof of the lemma.
\end{proof}


Again, we define the directed graph $G = (V, E)$ in Section~\ref{subsec:prec-LP}: $jd \in V$ if and only if $x_{jd}$ is not fixed to $0$ or $1$ in LP\eqref{LP:prec}, and there is an edge from $jd$ to $j'd'$ if we have a constraint $x_{jd} \leq x_{j', d'}$ in \eqref{LPC:prec-same-j} or \eqref{LPC:prec-prec}. For every $j \in J$, we have $d^{\max}_j - d^{\min}_j = O\left(\frac{\log n}{\epsilon}\right) = \tilde O_\epsilon(1)$.   Therefore we have $|V| \leq \tilde O_\epsilon(n)$.  The numbers of constraints in \eqref{LPC:prec-same-j} and \eqref{LPC:prec-prec} are respectively $\tilde O_\epsilon(n)$ and $\tilde O_\epsilon(\kappa)$. So $|E|\leq \tilde O_\epsilon(n + \kappa)$.  Each variable appears in exactly one constraint in \eqref{LPC:prec-capacity}, the matrix $\bfP$ defining \eqref{LPC:prec-capacity} has the number of non-zeros being ${\bar N} = \tilde O_\epsilon(n)$.  Let $a_{jd} = w_j \eta_d$ for every variable $jd \in V$. Then the LP is equivalent to $\max \bfa\bfx$ subject to $\bfx \in \calQ:=\{\bfx \in [0, 1]^V: x_v \leq x_u \forall vu \in E\}, \bfP\bfx \leq \bf1$.

We set $\phi = \epsilon\Phi \leq \epsilon \cdot \opt$. Then for every $jd\in V$, we have $a_{jd} = w_j \eta_d \leq \tilde w_j \tau_{d^{\max}_j-1} \leq n\Phi \leq \frac{n}{\epsilon}\cdot \phi$, as  $d < d^{\max}_j$. Therefore, we have $|\bfa|_1 \leq \poly(n)\cdot \phi$. Then each time the oracle given in Theorem~\ref{thm:prec-aggregate-main} still takes time $\tilde O_\epsilon(n + \kappa)$. However,  Loop~\ref{step:lp-solver-main-loop} in the template algorithm need to run for $O\left(\frac{\bar m\log \bar m}{\epsilon^2}\right) = \tilde O_{\epsilon}(D) = \tilde O_{\epsilon}(\log p_{\max})$ iterations, where $\bar m$ is the number of rows of $\bfP$, i.e., the number of constraints in \eqref{LPC:prec-capacity}. Overall, the running time of the algorithm is $\tilde O_\epsilon((n + \kappa)\log p_{\max})$.



\section{Other Omitted Analysis}
\label{sec:other}

\subsection{Reducing $R||C_{\max}$ to Promise Version}\label{subsec:RCmax-to-promise}
We first show how to reduce the general problem to the promise version.   Let $\calA$ be the algorithm for the promise version of the problem. If we are given a $P \geq \opt$ to $\calA$, the it will successfully output a schedule of makespan at most $(2+O(\epsilon))P$. However, when $P < \opt$, the algorithm may or may not succeed. 

Assigning each job $j$ to the machine $i$ with the smallest $p_{ij}$ value gives us an $m$-approximation. Then we can create a geometric sequence of $\floor{\log_{1+\epsilon}m} + 1$ numbers $P$, such that one of them has $\opt \leq  P < (1+\epsilon)\opt$.   Via binary search among these numbers\footnote{Simply enumerating all values of $P$ is sufficient, but binary search gives a better dependence.}, we can run $\calA$ for $O\big(\log(\floor{\log_{1+\epsilon}n}+1)\big) = O(\log \log n)$ times, to find a $P < (1+\epsilon)\opt$ for which $\calA$ succeeds. So, the schedule for this $P$ has makespan at most $(2+O(\epsilon))P \leq (2+O(\epsilon))\opt$.  \medskip

\subsection{$(1+\sqrt{2}+\epsilon)$-Approximation Algorithm for $P|\tprec, p_j=1|\sum_j w_jC_j$}
\label{subsubsec:improved}
In this section, we show that the $(1+\sqrt{2})$-approximation of \cite{Li20} can be obtained using our weaker LP relaxation. 
For every $j \in J$ and $\theta \in (0, 1]$, define $D^\theta_j = \tau_d$ where $d$ is the minimum integer in $[0, D]$ such that $x_{jd} \geq \theta$.   This is the time that $\theta$ fraction of job $j$ is completed.  Notice that $C_j = \sum_{d = 1}^D \tau_d(x_{jd} - x_{j(d-1)}) = \int_{\theta = 0}^{1} D^\theta_j\sfd \theta$. Our algorithm for the problem chooses $\theta$ uniformly at random from $(0, 1]$, and then call $\listscheduling((D^\theta_j)_{j \in J})$ and output the returned schedule. Let $\tilde C_j$ be the completion time of the job $j$ in the constructed schedule.  Focus on a job $j^* \in J$ from now on and we bound $\E_\theta\left[\frac{\tilde C_{j^*}}{C_{j^*}}\right]$.


We shall use $g(\theta) = D^\theta_{j^*}$ for every $\theta \in (0, 1]$ and so $C_{j^*} = \int_{\theta = 0}^{1}g(\theta) \sfd \theta$. Let $g(0) = \lim_{\theta \to 0^+}g(\theta)$; that is, $g(0)$ is $\tau_d$ for the smallest $d$ such that $x_{j^*, d} > 0$.   By \eqref{LPC:prec-0}, we have $g(0) \geq q_{j^*}$. For every $j \in J$ and $\theta \in [0, 1]$, define $\displaystyle h_j(\theta):= x_{jd}$ for the $d$ satisfying $\tau_d = g(\theta)$. This is the fraction of  job $j$ that is completed when $\theta$ fraction of $j^*$ is completed.  Thus, we have $\sum_{j \in J} h_j(\theta) \leq (1+O(\epsilon)) m g(\theta)$ for every $\theta \in [0, 1]$. Notice that $D^\theta_j \leq D^\theta_{j^*}$ if and only if $h_j(\theta) \geq \theta$. So, by Claim~\ref{claim:busy}, Lemma~\ref{lemma:idle-unit} and that $q_{j^*} \leq g(0)$, we have $\tilde C_{j^*} \leq g(0) + \frac1m\sum_{j \in J}\mathbf{1}_{h_j(\theta) \geq \theta}$. Thus, we can bound $\frac{\E\left[\tilde C_{j^*}\right]}{C_{j^*}}$ by the supreme of 
\begin{align}
	\frac{g(0) +  \frac1m\sum_{j \in J}\int_{\theta=0}^1\mathbf{1}_{h_j(\theta)\geq \theta} \sfd \theta}{\int_{\theta=0}^1g(\theta)\sfd\theta} \label{inequ:PwC-uniform-supreme}
\end{align} 
subject to 
\begin{enumerate}[leftmargin=*, topsep=3pt,itemsep=3pt,label=(\ref{inequ:PwC-uniform-supreme}.\arabic*)]
	\item $g:[0, 1] \to [1, \infty)$ is piecewise linear, left-continuous and non-decreasing, \label{property:g-increasing}
	\item $\forall j \in J$, $h_j:[0,1]\to[0,1]$ is piecewise linear, left-continuous and non-decreasing, \label{property:h-increasing}
	\item 	$\displaystyle	\sum_{j \in J} h_j(\theta) \leq  (1+O(\epsilon)) mg(\theta), \quad \forall \theta \in [0, 1]$.\label{property:h-less-than-g}
\end{enumerate}

\begin{lemma}[\cite{Li20}]
	The supreme of \eqref{inequ:PwC-uniform-supreme} satisfying the three properties is at most $(1 + \sqrt{2})(1+O(\epsilon))$. 
\end{lemma}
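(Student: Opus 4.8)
The plan is to treat this as a pure analytic (factor‑revealing) optimization, keeping only properties \ref{property:g-increasing}–\ref{property:h-less-than-g}. First I would collapse the job‑indicator sum into a single aggregate: set $n(\theta):=\frac1m\sum_{j\in J}\mathbf 1_{h_j(\theta)\ge\theta}$, so the quantity to bound is exactly $\big(g(0)+\int_0^1 n(\theta)\,d\theta\big)\big/\int_0^1 g(\theta)\,d\theta$. From \ref{property:h-increasing} and \ref{property:h-less-than-g} I would extract two valid inequalities involving only $g$ and $n$: (i) for every $\theta$, $\theta\,n(\theta)\le(1+O(\epsilon))\,g(\theta)$, since every job counted at time $\theta$ has $h_j(\theta)\ge\theta$ and so contributes at least $\theta$ to $\sum_j h_j(\theta)\le(1+O(\epsilon))mg(\theta)$; and (ii) for every $\theta_0$, $\int_0^{\theta_0}n(\theta)\,d\theta\le(1+O(\epsilon))\,g(\theta_0)$, since for $\theta\le\theta_0$ with $h_j(\theta)\ge\theta$ monotonicity of $h_j$ gives $\theta\le h_j(\theta)\le h_j(\theta_0)$, so the part of job $j$'s ``winning set'' in $[0,\theta_0]$ has measure at most $h_j(\theta_0)$; summing over $j$ and using \ref{property:h-less-than-g} at $\theta_0$ yields the bound. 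This discards the family $(h_j)$ entirely.

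Combining (i) and (ii), for any $\theta_0\in(0,1]$ we get $\int_0^1 n=\int_0^{\theta_0}n+\int_{\theta_0}^1 n\le(1+O(\epsilon))\big(g(\theta_0)+\int_{\theta_0}^1 g(\theta)/\theta\,d\theta\big)$, so it suffices to prove the purely analytic statement: for every nondecreasing $g:[0,1]\to[1,\infty)$,
\[
g(0)+\inf_{\theta_0\in(0,1]}\Big(g(\theta_0)+\int_{\theta_0}^1\frac{g(\theta)}{\theta}\,d\theta\Big)\ \le\ (1+\sqrt2)\int_0^1 g(\theta)\,d\theta .
\]
This inequality is invariant under scaling $g$, so normalize $g(0)=1$. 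I would prove it by choosing the split point $\theta_0$ by a threshold rule tied to the profile $g(\theta)/\theta$ (equivalently, by a smoothing/exchange argument that replaces an arbitrary $g$ by one constant on an initial interval $[0,a]$ and with $g(\theta)/\theta$ constant on $[a,1]$ without decreasing the ratio), reducing to a one‑parameter maximization over $a$. For that canonical $g$ one has $\int_0^1 g=\tfrac a2+\tfrac1{2a}$ and the bracketed infimum equals $1/a$, so the ratio is $\frac{2(a+1)}{a^2+1}$, maximized at $a=\sqrt2-1$ with value $1+\sqrt2$; the stationarity condition $a^2+2a-1=0$ is precisely where $\sqrt2$ appears. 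Tracking the $(1+O(\epsilon))$ factors through this bound gives the claimed $(1+\sqrt2)(1+O(\epsilon))$.

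The combinatorial reduction (steps (i), (ii)) is short; the main obstacle is the last, analytic step. The delicacy is that a naive choice of $\theta_0$ — say bounding $g(0)\le g(\theta_0)$ and $\int_{\theta_0}^1 g/\theta\le\frac1{\theta_0}\int_{\theta_0}^1 g$ for a fixed fraction $\theta_0$ — loses a constant factor and only yields something like $3+2\sqrt2$; to reach $1+\sqrt2$ one must exploit that in the worst case essentially all the mass of $g$ lies to the right of the split, so $\int_{\theta_0}^1 g(\theta)/\theta\,d\theta$ is governed by $\int_{\theta_0}^1 g(\theta)\,d\theta$ divided by a value near $\theta_0$ rather than being genuinely inflated. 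Making that precise — via the correct threshold rule for $\theta_0$, or via a clean exchange argument driving $g$ to the canonical extremal above — is the crux; once the extremal $g$ is identified the remaining calculus is routine. (This matches the analysis of \cite{Li20}; only the constraints (i)–(ii), which here come from the weaker LP, need to be re‑derived.)
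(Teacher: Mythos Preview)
The paper does not actually prove this lemma: it cites \cite{Li20} for the case where the $(1+O(\epsilon))$ factor in \ref{property:h-less-than-g} is $1$, and then handles the $\epsilon$ by the one-line observation that replacing $g$ by $g'=(1+O(\epsilon))g$ restores the clean constraint $\sum_j h_j\le mg'$, so the ratio is at most $(1+\sqrt2)\cdot\frac{\int g'}{\int g}=(1+\sqrt2)(1+O(\epsilon))$. That is the entire argument.

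Your proposal is therefore attempting something much more ambitious than the paper: a self-contained proof of the $1+\sqrt2$ bound. The reduction you outline is sound. Inequality (i) is immediate, and (ii) is correct with the argument you give: if $\theta\le\theta_0$ and $h_j(\theta)\ge\theta$ then $\theta\le h_j(\theta)\le h_j(\theta_0)$, so the winning set of $j$ inside $[0,\theta_0]$ has measure at most $h_j(\theta_0)$. The split $\int_0^1 n\le g(\theta_0)+\int_{\theta_0}^1 g/\theta$ then follows, and your identification of the extremal $g(\theta)=\max(1,\theta/a)$ with $a=\sqrt2-1$ and $\theta_0=a$ checks out.

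The genuine gap is exactly where you say it is, and it is not minor. You have not proved the analytic inequality
\[
g(0)+\inf_{\theta_0}\Big(g(\theta_0)+\int_{\theta_0}^1\frac{g}{\theta}\Big)\ \le\ (1+\sqrt2)\int_0^1 g
\]
for all nondecreasing $g$; you have only exhibited the conjectured extremizer and noted that naive choices of $\theta_0$ lose a constant. The exchange/smoothing argument you allude to (``drive $g$ to constant on $[0,a]$ and $g/\theta$ constant on $[a,1]$'') is not obviously valid here, because the functional involves an \emph{infimum} over $\theta_0$ interacting with the profile of $g$, and local exchanges can move the optimal $\theta_0$. There is also a second, subtler risk you do not address: inequalities (i) and (ii) are only \emph{necessary} consequences of the existence of the $h_j$'s, and it is not a priori clear that they are strong enough by themselves to force the $1+\sqrt2$ bound---the proof in \cite{Li20} may exploit more structure of the family $(h_j)$ than the two scalar inequalities you retained. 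So even if your sketch could be completed, you would first need to verify that the relaxed problem (sup over $g$ and over all $n\ge0$ satisfying (i),(ii)) still has value $1+\sqrt2$; you have checked this only on the extremal family. Absent that, the honest route is the paper's: invoke \cite{Li20} and scale.
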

Indeed, \cite{Li20} proved that the supreme is exactly $1+\sqrt{2}$ if the $(1+O(\epsilon))$-term in \ref{property:h-less-than-g} is $1$.  So, we can define $g' = (1+O(\epsilon))g$ and $g'$ satisfies the property holds with $(1+O(\epsilon))$ replaced to $1$; moreover \ref{property:g-increasing} and \ref{property:h-increasing} remain satisfied for $g'$. Then the supreme of $\frac{g'(0) +  \frac1m\sum_{j \in J}\int_{\theta=0}^1\mathbf{1}_{h_j(\theta)\geq \theta} \sfd \theta}{\int_{\theta=0}^1g'(\theta)\sfd\theta} = \frac{g(0) +  \frac1{(1+O(\epsilon))m}\sum_{j \in J}\int_{\theta=0}^1\mathbf{1}_{h_j(\theta)\geq \theta} \sfd \theta}{\int_{\theta=0}^1g(\theta)\sfd\theta}$ is $1+\sqrt{2}$.  So, the supreme of \eqref{inequ:PwC-uniform-supreme} is at most $(1+\sqrt{2})(1+O(\epsilon)) = 1+\sqrt{2}+O(\epsilon)$.


\subsection{Implementation of $\listscheduling$ for Identical Machine Precedence Constrained Scheduling in $O((n + \kappa)\log n)$ Time} 
\label{subsec:data-structures-for-list-scheduling}

\newcommand{\II}{{\mathsf{Idle}\mhyphen\mathsf{Intervals}}}
\newcommand{\find}{{\mathsf{find}}}
\newcommand{\remove}{{\mathsf{remove}}}
\newcommand{\sfinsert}{{\mathsf{insert}}}
\newcommand{\increase}{{\mathsf{increase}}}
\newcommand{\CC}{{\mathsf{Critical}\mhyphen\mathsf{Counters}}}

We define two data structures. The first data structure, which we call $\II$, maintains the set of idle unit-time slots. (Recall that a slot $(t-1, t]$ is idle if the number of jobs processed during the slot is at most $m-1$, and busy otherwise.)  Initially, all unit-time slots $(t-1, t], t \geq 1$ are idle. The data structure supports the following two operations: 	
\begin{itemize}
	\item $\find(t, p)$: given two integer $t \geq 0$ and $p > 0$, return the smallest $t' \geq t$ such that $(t', t'+p]$ is idle, i.e., all unit-time slots in $(t', t'+p]$ are idle.
	\item $\remove(t, t')$: given an idle interval $(t, t']$, mark all unit-time slots in $(t, t']$ as busy.
\end{itemize}

To implement the data structure, we maintain the set $\calI$ of inclusion-wise maximal idle intervals.  We store $\calI$ in a self-balancing binary-search tree  (BST) with the left-to-right order.  For each node in the BST, we maintain the maximum length of intervals in $\calI$ stored in the sub-tree rooted at the node. 

With this data structure, both operations can be done in $O(\log s)$ time, where $s$ is the maximum possible size of $\calI$.   For $\find(t, p)$, we first try to find the interval $I \in \calI$ containing $t$. If it exists and containing $(t, t+p]$, we return $t$. If the algorithm does not return, we find the left-most interval $I \in \calI$ to the right of $t$ with length at least $D$.  Both steps can be implemented in $O(\log s)$-time. For $\remove(t, t')$, we need to find the interval in $I = (\tilde t, \tilde t'] \in \calI$ containing $(t, t']$, which is guaranteed to exist, remove $I$ from $\calI$, adding $(\tilde t, t]$ and/or $(t', \tilde t']$ to $\calI$, if they are not empty. So $\remove(t, t')$ can be done in $O(\log s)$ time.\medskip

The second data structure, which we call $\CC$, maintains a set $\calT$ of critical time points, and a counter $m_t$ for every $t \in \calT$, which is an integer in $[0, m]$. In the $\listscheduling$ algorithm, a time point $t$ is critical if $t = 0$ or some job completes at $t$.  Notice that the starting time $\tilde S_j$ of a job  $j$ is either $0$ or the completion time $\tilde C_{j'}$ of some other job $j'$.  $m_t$ for a $t \in \calT$ is supposed to be the number of jobs $j$ such that $(t, t+1] \subseteq (\tilde S_j, \tilde C_j]$. The data structure supports the following operations:
\begin{itemize}[itemsep=0pt]
	\item $\sfinsert(t)$: if $t \notin \calT$, then we add $t$ to $\calT$, with $m_t = m_{t'}$ where $t'$ is the last critical time point before $t'$.
	\item $\increase(\tau, \tau')$: increase the $m_t$ values of all time points $t \in \calT \cap [\tau, \tau')$ by 1, and return the list of those points $t$ whose new $m_t$ values become $m$, as well as their respective next critical time point.  It is guaranteed that before the updates, every $t \in \calT \cap [\tau, \tau')$ has $m_t < m$.
\end{itemize}

We again use a self-balancing BST to store $\calT$ and their $m_t$ values, with the natural integer order for the time points.  We setup some notations first before describing the values maintained at the nodes of the tree. For each node $v$ in the BST, let $A_v$ be the set of ancestor nodes of $v$ in the BST, including $v$ itself, let $\Lambda_v$ be the descendant nodes of $v$, including $v$.  For every $u \in \Lambda_v$, let $Pvu$ be the set of nodes in the path from $v$ to $u$ in the BST.  For every node $v$ in the BST, let $t_v$ be the time point stored at $v$, and for every $t \in \calT$, let $v_t$ be the node in the BST storing $t$. So $v_t = v$ if and only if $t_v = t$. We maintain three values for each node $v$: $m_\textrm{inc}(v), m_\textrm{self}(v)$ and $m_{\max}(v)$.  We guarantee the following two properties:
\begin{itemize}[itemsep=0pt]
	\item For every $t \in \calT$, we have $m_t = \sum_{v \in A_{v_t}} m_\textsf{inc}(v) + m_\textsf{self}(v_t)$.
	\item For every $v$ in the BST, $m_{\max}(v) = \max_{u \in \Lambda_v}\left(\sum_{u' \in Pvu}m_\textrm{inc}(u') + m_\textrm{self}(u)\right)$, which is equal to $\max_{u \in \Lambda_v} m_{t_u} - \sum_{v' \in A_v \setminus \{v\}} m_\textsf{inc}(v')$.  
\end{itemize}
Given $m_\textsf{inc}(v)$'s and $m_\textsf{self}(v)$'s for all $v$, the $m_{\max}(v)$'s can be defined as follows: for every node $v$ in the BST with left child $v'$ and right child $v''$, let $m_{\max}(v) = \max\{m_{\max}(v') , m_{\max}(v''), m_\textsf{self}(v)\} + m_\textsf{inc}(v)$, where we assume if $v'$ or $v''$ does not exist, then its $m_{\max}$ value is $-\infty$. This guarantees that if we rotate the tree, $m_{\max}(v)$'s can be updated efficiently.

Again let $s$ be the maximum possible size of $\calT$. Then, $\sfinsert(t)$ takes $O(\log s)$ time.
It takes $O(\log s)$-time for $\increase(\tau, \tau')$ to update the data structure:  Let $\calT_v = \{t_u: u \in \Lambda_v\}$. We need to update the information for a node $v$ only if $\calT_v \cap[\tau, \tau') \neq \emptyset$ and $\calT_v \setminus [\tau, \tau') \neq \emptyset$, or $v$ is the topmost node with $\calT_v \subseteq [\tau, \tau')$. 
In the former case, we may need to update $m_\textsf{self}(v)$ and $m_\textsf{max}(v)$. In the latter case, we shall increase $m_\textsf{inc}(v)$ and $m_{\max}(v)$ by 1. There are only $O(\log s)$ nodes $v$ whose information will be updated.  It takes $O((r+1)\log s)$-time for $\increase(\tau, \tau')$ to return the list of critical time points $t \in [\tau, \tau')$ with $m_t = m$, where $r$ is the number of such points. 
\medskip

Now we show how to implement the $\listscheduling$ algorithm using the two data structures.  Consider the iteration for scheduling $j$. To find the minimum $t' \geq t$ such that $(t', t' + p_j]$ is idle, we call  $t' \gets \II.\find(t, p_j)$, and $\CC.\sfinsert(t'+p_j)$. To schedule $j$ in $(t', t'+p_j]$,  we call $\CC.\increase(t', t'+p_j)$.  For every returned time point $\tau$ and its next critical point $\tau'$, we call $\II.\remove(\tau, \tau')$.   $\II.\find$ and $\CC.\sfinsert$ are called once for every $j$, and each of the two operations has running time $O(\log n)$, as there are at most $n + 1$ critical time points.  Every critical time point is returned at most once by $\CC.\increase$,  since once a $t \in \calT$ has $m_t = m$, $m_t$ will never be increased in the future. This also implies that $\II.\remove$ will only be called $O(n)$ times. Therefore, the running time of the $\listscheduling$ algorithm is $O(\kappa + n \log n)$ using the two data structures. 


\section{Approximate Oracle for \eqref{LP:packing-aggregate}: Proof of Theorem~\ref{thm:prec-aggregate-main}}
\label{subsec:oracle}

In this section, we prove Theorem~\ref{thm:prec-aggregate-main}. Throughout this section, we fix $G = (V, E), \calQ, \bfb, \bfa, \bfy^*, \epsilon$ and $\phi$ as in Theorem~\ref{thm:prec-aggregate-main}, among which $\bfy^*$ is not given to our algorithm. For any directed graph $H = (V_H, E_H)$, and two subsets $U, U' \subseteq V_H$, $U\leadsto_H U'$ holds if there is a path from some vertex in $U$ to some vertex in $U'$ in $H$. If there is no such a path, then $U \not\leadsto_H U'$ holds.  If $U$ or $U'$ is a singleton set, we can replace it with the vertex it contains.  When $H = G$, the subscripts $H$ in the notations defined above can be omitted. \smallskip

Let $S = \{s \in V: a_s > 0\}$ and $T = \{t \in V: b_t > 0\}$.  We prove that the following properties can be assumed w.l.o.g.
\begin{restatable}{lemma}{lemmanetworktransformation}
	\label{lemma:network-transformation}
	To prove Theorem~\ref{thm:prec-aggregate-main}, we can w.l.o.g assume 
	\begin{itemize}[itemsep=0pt]
		\item $S \cap T = \emptyset$ and there are no edges from $S$ to $T$,
		\item $\delta^-(s) = \emptyset$ for every $s \in S$, 
		\item $\delta^+(t) = \emptyset$ for every $t \in T$, and
		\item for every $v \in V$, we have $S \leadsto v$ and $v \leadsto T$.
	\end{itemize}
	Finally, we can w.l.o.g replace the constraint $\bfy \in [0, 1]^V$ by $y_s \leq 1$ for every $s \in S$ and $y_t \geq 0$ for every $t \in T$.
\end{restatable}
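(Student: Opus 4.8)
The plan is to perform a sequence of reductions on the network flow / LP instance, each of which either preserves the optimum value $\bfa\bfy^*$ exactly (or up to a harmless additive change absorbed by $\phi$) and leaves $|E|$ within a constant factor of the original. I will argue each bullet point of Lemma~\ref{lemma:network-transformation} in turn, in an order where later transformations are not undone by earlier ones.

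First I would handle the last sentence, replacing $\bfy \in [0,1]^V$ by $y_s \le 1$ for $s \in S$ and $y_v \ge 0$ for all $v$. Recall $\calQ$ already encodes $y_v \le y_u$ for $(v,u) \in E$; the objective $\bfa\bfy = \sum_{s \in S} a_s y_s$ only sees coordinates in $S$, and the constraint $\bfb\bfy = \sum_{t \in T} b_t y_t \le 1$ only sees coordinates in $T$. The key observation is monotonicity: if I have any $\bfy \ge \bf0$ with $y_s \le 1$ for $s \in S$ satisfying the precedence inequalities and $\bfb\bfy \le 1+\epsilon$, I can clip every coordinate into $[0,1]$ by setting $\bar y_v := \min\{y_v, 1\}$. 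Clipping preserves $\bar y_v \le \bar y_u$ for $(v,u) \in E$ (the $\min$ with $1$ is monotone), does not increase any $y_t$ so $\bfb\bar\bfy \le \bfb\bfy \le 1+\epsilon$, and does not change $y_s$ for $s \in S$ (already $\le 1$) so the objective is unchanged. Conversely the original feasible region is contained in the relaxed one. Hence an $(\epsilon,\phi)$-approximate solution to the relaxed problem yields one to the original after clipping. From now on I work with the relaxed constraint set.

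Next I would establish the structural bullets. For ``$\delta^-(s) = \emptyset$ for $s \in S$'': if $s \in S$ has an incoming edge $(v,s)$, the constraint $y_v \le y_s$ is implied transitively whenever we also keep a copy of $s$ as a pure source — so I would split each $s \in S$ into a new source $s'$ with $a_{s'} = a_s$, no incoming edges, and one edge $(s', s)$, while zeroing out $a_s$; this only adds $O(|S|)$ vertices and edges and exactly preserves the feasible region's projection onto the relevant coordinates because $y_{s'} \le y_s$ and $y_{s'}$ is otherwise unconstrained from above except $\le 1$, so we can always take $y_{s'} = y_s$ at no loss. Symmetrically for ``$\delta^+(t) = \emptyset$ for $t \in T$'', split each $t \in T$ into a sink $t'$ with an edge $(t, t')$, $b_{t'} = b_t$, and zero out $b_t$; then $y_{t'} \le y_t$, and since $b_{t'} y_{t'}$ is what enters the packing constraint, we can push $y_{t'}$ down to $y_t$... wait — the direction of the edge forces $y_{t'} \le y_t$, which is what we want since decreasing $y_{t'}$ only helps the constraint, but we must be able to *achieve* $y_{t'} = y_t$, which is feasible. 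After these two splits, $S \cap T = \emptyset$ automatically (the new sources have out-degree $\ge 1$, new sinks have in-degree $\ge 1$, in a DAG they can't coincide) and there are no edges from a (new) source to a (new) sink unless such an edge already connected the originals — and such an edge $(s,t)$ with $y_s \le y_t$, $a_s > 0$, $b_t > 0$ can be removed after subdividing it with a fresh intermediate vertex, which again costs only $O(|E|)$. Finally, for ``$S \leadsto v$ and $v \leadsto T$ for every $v$'': any $v$ with no source ancestor has $y_v$ upper-bounded only by $1$ and appears in no objective term, and any chain above it; but more to the point, such a vertex can only ever be *forced up* by vertices below it and never constrains the objective or the packing constraint, so setting all its coordinates consistently is free — formally, discard the strongly-"unreachable" part by contracting or deleting, using acyclicity to see it forms a down-closed (resp. up-closed) set disjoint from $S$ (resp. $T$); deletion only shrinks $|E|$.

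**The main obstacle** I anticipate is bookkeeping the direction of the precedence edges correctly through the source/sink splits — it is easy to introduce an edge whose orientation forces the wrong inequality (e.g. forcing a sink variable *up* instead of leaving it free to go down), which would change the optimum. The safe discipline is: whenever splitting a source $s$, the new pure source $s'$ sits *above* $s$ in the $\le$ order (edge $s' \to s$ meaning $y_{s'} \le y_s$), because sources carry the objective and we want $y_{s'}$ free to rise to $1$; whenever splitting a sink $t$, the new pure sink $t'$ sits *below* $t$ (edge $t \to t'$ meaning $y_{t'} \le y_t$), because sinks carry the packing load and we want $y_{t'}$ free to fall to $0$ — but then to preserve value we need $y_{t'} = y_t$ achievable, which it is. One then checks that in each case the projection of the new feasible region onto the old coordinates equals the old feasible region, so $\bfa\bfy^*$ is literally unchanged and the additive slack $\phi$ and multiplicative slack $1+\epsilon$ transfer verbatim. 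Since each transformation is applied $O(1)$ times and multiplies $|V|+|E|$ by a constant, the running time bound $\tilde O_\epsilon(|E| \log^2(|\bfa|_1/\phi))$ claimed in Theorem~\ref{thm:prec-aggregate-main} is unaffected.
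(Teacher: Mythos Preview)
Your overall plan matches the paper's—remove unreachable vertices, split sources and sinks into pure sources/sinks, then argue the box constraints outside $S$ and $T$ are redundant—but the sink split, which you correctly flagged as the delicate step, is analyzed with the wrong sign.

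You add the edge $(t,t')$ and then assert $y_{t'} \le y_t$. Under the convention in force (an edge $(v,u)\in E$ encodes $y_v \le y_u$, which you applied correctly for the source split), the edge $(t,t')$ gives $y_t \le y_{t'}$. Your stated intuition, that ``we want $y_{t'}$ free to fall to $0$'', is exactly backwards: if $y_{t'}$ were only bounded from above by $y_t$, the optimizer would set $y_{t'}=0$, the term $b_{t'}y_{t'}$ would vanish, and the packing constraint $\bfb\bfy\le 1$ would become vacuous—so the transformed LP would have a strictly larger optimum than the original. The split preserves the LP precisely because the edge forces $y_{t'} \ge y_t$; since the packing term is to be kept small, the optimum takes $y_{t'}=y_t$ and nothing changes. (Your source split is correctly constructed, but the phrase ``$y_{s'}$ free to rise to $1$'' is again the wrong picture: $y_{s'}$ is bounded above by $y_s$, and the optimum takes $y_{s'}=y_s$.)

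A smaller gap: the lemma asks you to drop $y_v \ge 0$ for all $v \notin T$, not just $y_v \le 1$ for $v \notin S$; your clipping argument handles only the latter. The paper establishes reachability first and then observes that once every $v$ satisfies $v \leadsto T$, one may take $y_v = \min_{t: v\leadsto t} y_t$ without loss, which inherits nonnegativity from the sinks. This matters downstream: the dual LP~\eqref{LP:prec-aggregate-dual} needs equality constraints at vertices outside $T$, and those arise precisely from the corresponding primal variables being sign-unrestricted.
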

\begin{proof}
	W.l.o.g, we assume every vertex $v \in V$ has $S \leadsto v$.  It is the best to set $y_v = 0$ for the vertices $v$ with $S \not\leadsto v$ and remove them.  Similarly, we assume every vertex $v \in V$ has $v \leadsto T$: It is the best to set $y_v = 1$ for  vertices $v$ with $v \not\leadsto T$ and remove them.
	
	Now for every $s \in S$, we can add a new vertex $s'$ and a new edge $(s', s)$ to $G$. We set $a_{s'} = a_s, b_{s'} = 0$, change $a_s$ to $0$, and update $S$ to $S \cup \{s'\} \setminus \{s\}$.  This does not change the instance since we have $y_{s'} \leq y_s$ and it is the best to set $y_{s'} = y_s$.   Similarly, for every $t \in T$, we add a new vertex $t'$ and a new edge $(t, t')$ to $G$. We set $a_{t'} = 0, b_{t'} = b_t$, change $b_t$ to $0$, and update $T$ to $T \cup \{t'\} \setminus \{t\}$. This does not change the instance since we have $y_{t'} \geq y_t$ and it is the best to $y_{t'} = y_t$. After this modification, the four properties in the list of the lemma are satisfied.
	
	We show w.l.o.g the constraint $\bfy \in [0, 1]^V$ can be replaced by $y_s \leq 1, \forall s \in S$ and $y_t \geq 0, \forall t \in T$. Fixing $(y_s)_{s \in S}$, we can assume $y_v = \max_{s\in S:s\leadsto v} y_s, \forall v \in V\setminus S$, as this will minimize $\bfb\bfy$. As we have $S \leadsto v$, the condition $y_v \leq 1, \forall v \in V$ is implied by that $y_v \leq 1, \forall v \in S$.  Similarly, fixing $(y_t)_{t \in T}$, we can assume $y_v = \min_{t \in T: v \leadsto t} y_t, \forall v \in V \setminus T$, as it maximizes $\bfa\bfy$.  So, the condition $y_v \geq 0, \forall v \in V$ is also implied.
\end{proof}

With Lemma~\ref{lemma:network-transformation}, the LP in Theorem~\ref{thm:prec-aggregate-main} becomes LP\eqref{LP:prec-aggregate}, which has constraints (\ref{LPC:prec-aggregate-b}-\ref{LPC:prec-aggregate-0}).

\noindent
\begin{minipage}[t]{0.4\textwidth}
	\begin{framed}
		\begin{equation}
			\max \qquad \sum_{s \in S}a_s y_s \label{LP:prec-aggregate}
		\end{equation}\vspace*{-15pt}
		\begin{align}
			\sum_{t \in T}b_t y_t &\leq 1 \label{LPC:prec-aggregate-b}\\
			y_v &\leq y_u & &\forall vu \in E \label{LPC:prec-aggregate-prec}\\
			y_s  &\leq 1 & &\forall s \in S \label{LPC:prec-aggregate-1}\\
			y_t  &\geq 0 & &\forall t \in T \label{LPC:prec-aggregate-0}
		\end{align}
	\end{framed}
\end{minipage}\hfill
\begin{minipage}[t]{0.6\textwidth}
	\begin{framed}
		\begin{equation}
			\min \qquad \gamma + \sum_{s \in S} r_s \label{LP:prec-aggregate-dual}
		\end{equation}\vspace*{-20pt}
		\begin{align}
			f\left(\delta^+(s)\right) + r_s &= a_s  & &\forall s \in S \label{LPC:prec-aggregate-dual-S}\\
			\gamma b_t - f\big(\delta^-(t)\big) &\geq 0 & &\forall t \in T\label{LPC:prec-aggregate-dual-T}\\
			f\left(\delta^+(v)\right) - f\big(\delta^-(v)\big) & = 0 & &\forall v \notin (S \cup T)\label{LPC:prec-aggregate-dual-other}\\
			\gamma &\geq 0 \\
			f_{vu} &\geq 0 & &\forall vu \in E\\
			r_s &\geq 0 & &\forall s \in S
		\end{align}
	\end{framed}
\end{minipage}\bigskip

The dual of LP\eqref{LP:prec-aggregate} is LP\eqref{LP:prec-aggregate-dual}, where variables $\gamma$, $(f_{vu})_{vu \in E}$, and $(r_s)_{s \in S}$ correspond to constraints \eqref{LPC:prec-aggregate-b}, \eqref{LPC:prec-aggregate-prec} and \eqref{LPC:prec-aggregate-1} respectively. Constraints~\eqref{LPC:prec-aggregate-dual-S}, \eqref{LPC:prec-aggregate-dual-T} and \eqref{LPC:prec-aggregate-dual-other} correspond to variables $(y_s)_{s \in S}, (y_t)_{t \in T}$ and $(y_v)_{v \in V \setminus (S\cup T)}$ respectively.  We only require $y_t$ for $t \in T$ to be non-negative. So \eqref{LPC:prec-aggregate-dual-S} and \eqref{LPC:prec-aggregate-dual-other} are equalities.

If we fix $\gamma$, the dual LP is equivalent to a network flow problem NFP$_\gamma$, defined as follows. 
\begin{definition}
	For any $\gamma \geq 0$, we use NFP$_\gamma$ to denote the following single-commodity flow problem. We are given the network $G = (V, E)$, with sources $S$ and sinks $T$. Each source $s \in S$ has a supply of $a_s > 0$, each sink $t \in T$ has a demand $\gamma b_t$, and the capacities of all edges in $G$ are infinite.  
	
	So, a valid flow for NFP$_\gamma$ is a vector $\bff \in \R_{\geq 0}^E$ satisfying $f\left(\delta^+(s)\right) \leq a_s$ for every $s \in S$,  $f(\delta^-(t)) \leq \gamma b_t$ for every $t \in T$, and $f(\delta^+(v)) = f(\delta^-(v))$ for every $v \in V \setminus (S \cup T)$.   Let $\calF_\gamma \subseteq \R_{\geq 0}^E$ denote the set of valid flows $\bff$ for NFP$_\gamma$.   The value of a flow $\bff \in \calF_\gamma$, denoted as $\val(\bff)$, is defined as $f\big(\delta^+(S)\big) = f\big(\delta^-(T)\big)$. 
\end{definition}

The value of the dual LP\eqref{LP:prec-aggregate-dual} is the minimum, over all $\gamma \geq 0$, of $\gamma + |\bfa|_1 - \opt_\gamma$, where $\opt_\gamma$ is the value of the maximum flow for NFP$_\gamma$.  To understand better why $\gamma + |\bfa|_1 - \opt_\gamma$ is an upper bound on the value of \eqref{LP:prec-aggregate}, assume in the optimum flow for NFP$_\gamma$, each $s \in S$ sends $a'_s$ units flow. Then, the flow, \eqref{LPC:prec-aggregate-prec} and \eqref{LPC:prec-aggregate-0} imply $\bfa'\bfy  \leq \gamma\bfb\bfy$, which is at most $\gamma$ by \eqref{LPC:prec-aggregate-b}.  Then $(\bfa - \bfa')\bfy \leq |\bfa - \bfa'|_1 = |\bfa|_1 - \opt_\gamma$ by \eqref{LPC:prec-aggregate-1}.  If we are not concerned with the running time, then NFP$_\gamma$ is equivalent to a fractional maximum bipartite matching problem, on the bipartite graph between $S$ and $T$ where $st$ exists if and only if $s \leadsto t$. 

We state the main theorem that solves NFP$_{\gamma}$ approximately.  As we can only lose an additive factor in the objective value in Theorem~\ref{thm:prec-aggregate-main}, the values in $\bfa$ need to be respected. Values in $\bfb$ can be approximated within a factor of $1+O(\epsilon)$.  
%
In the statement, for a subset $S' \subseteq S$ of sources, $T(S') := \{t \in T : S' \leadsto t\}$ denotes the set of sinks that can be reached from $S'$. 
\begin{restatable}{theorem}{thmmfmc}
	\label{thm:mfmc}
	Let $\gamma \geq 0, \epsilon > 0$. There is an $O\left(\frac1\epsilon\cdot|E|\cdot \log |V|\cdot \log \frac{|\bfa|_1}{\phi}\right)$-time algorithm that outputs a flow $\bff \in \calF_\gamma$ and a set $S' \subseteq S$ such that $a(S \setminus S') + \frac{\gamma b(T(S'))}{1+\epsilon} \leq \val(\bff) + \frac{\phi}{3}$.
\end{restatable}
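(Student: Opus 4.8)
The goal is to produce, for $\mathrm{NFP}_\gamma$, a flow $\bff\in\calF_\gamma$ together with a witnessing set $S'\subseteq S$ that are approximately complementary, in the asymmetric sense the statement asks for. Recall from the LP-duality discussion preceding the theorem that $\opt_\gamma=\min_{S'\subseteq S}\bigl(a(S\setminus S')+\gamma b(T(S'))\bigr)$: any $S'$ upper bounds the maximum flow because everything leaving $S'$ must be absorbed by $T(S')$, and the minimum is attained (this is essentially K\"onig duality for the bipartite reachability flow). So it suffices to find $\bff$ and $S'$ such that the sinks of $T(S')$ are collectively saturated by $\bff$ up to the $1+\epsilon$ slack allowed on $\bfb$, while the supply of $S\setminus S'$ is met by $\bff$ up to a total additive $\phi/3$. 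Note this does \emph{not} follow from a single black-box call of Theorem~\ref{thm:main-network-flow}: that call only yields a flow of value $\ge\frac{\opt_\gamma}{1+\epsilon}$, and the $1+\epsilon$ slack there multiplies all of $a(S\setminus S')$, which we are not allowed to lose.

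\textbf{Building the flow.} First I would normalize via Lemma~\ref{lemma:network-transformation}, so that sources have no in-edges, sinks have no out-edges, and every vertex lies on an $S$-to-$T$ path; in the normalized instance a finite cut is exactly a forward-closed vertex set, whose source part is some $S'\subseteq S$ and whose sink part is $T(S')$, which is what makes the duality identity hold. I would then discard every source with $a_s<\frac{\phi}{6|S|}$ and every sink with $\gamma b_t<\frac{\phi}{6|T|}$, at a cost of $\le\frac\phi6$ of unrouted supply and $\le\frac\phi6$ of unused demand, so that all remaining supplies lie in a window of multiplicative width $O\bigl(|V|\,|\bfa|_1/\phi\bigr)$. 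The flow is built in rounds: starting from $\bff=\mathbf0$, round $r$ forms the residual instance --- same graph $G$, with source supplies and sink capacities decremented by the flow routed so far, so it again has infinite edge capacities --- runs the algorithm of Theorem~\ref{thm:main-network-flow} on it in time $O\bigl(\frac1\epsilon|E|\log|V|\bigr)$, and adds the returned residual flow to $\bff$. Since that algorithm returns a $\frac1{1+\epsilon}$-approximate flow and, for infinite-edge-capacity networks, a maximal non-cancelling flow is already a maximum flow, the residual maximum flow shrinks by the factor $\frac{\epsilon}{1+\epsilon}<1$ per round; hence after $O\bigl(\log\tfrac{|\bfa|_1}{\phi}\big/\log\tfrac1\epsilon\bigr)$ rounds it drops below $\frac\phi6$. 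This is precisely where the extra $\log\frac{|\bfa|_1}{\phi}$ factor over Theorem~\ref{thm:main-network-flow} is spent, and it is the price of keeping $\bfa$ exact.

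\textbf{Extracting the cut.} On the final flow $\bff$, let $A$ be the set of vertices reachable from the super-source in the residual graph, traversing only residual arcs of non-negligible capacity (threshold $\delta=\Theta(\epsilon\phi/|V|)$), and set $S':=A\cap S$, so $T(S')\subseteq A\cap T$. Using that the last round's residual maximum flow is $\le\frac\phi6$ and that the forward arcs of $G$ have infinite capacity, one checks: (i) since a source in $A$ that reaches an under-saturated surviving sink would yield a ``wide'' augmenting $\sigma$-to-$\tau$ path (contradicting near-maximality), every sink of $T(S')$ has residual capacity $\le\delta$, so $\sum_{t\in T(S')}f(\delta^-(t))\ge\gamma b(T(S'))-\delta|T|$, an error the $\frac1{1+\epsilon}$-slack on $\bfb$ (together with the tiny-sink discarding) absorbs; and (ii) the sources of $S\setminus S'=\bar A\cap S$ are saturated up to $\delta$ and, by forward-closedness of $A$, route essentially none of their flow into $T(S')\subseteq A$. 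Writing $\val(\bff)=\sum_{t\in T(S')}f(\delta^-(t))+\sum_{t\notin T(S')}f(\delta^-(t))$ and collecting all additive errors (the two discardings, the $\delta$-residuals, the negligible ``crossing'' flow) into $\phi/3$ gives exactly $a(S\setminus S')+\frac{\gamma b(T(S'))}{1+\epsilon}\le\val(\bff)+\frac\phi3$. The running time is $O\bigl(\log\tfrac{|\bfa|_1}{\phi}\big/\log\tfrac1\epsilon\bigr)$ rounds at $O\bigl(\frac1\epsilon|E|\log|V|\bigr)$ each, plus an $O(|E|)$ reachability computation, as claimed.

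\textbf{Main obstacle.} I expect the last step to be the crux: coupling a merely $(1+\epsilon)$-approximate flow with a combinatorially defined cut so that the \emph{asymmetric} guarantee holds. Ordinary max-flow/min-cut complementary slackness is exact only at optimality, so one must (a) arrange the round structure and the residual-reachability threshold $\delta$ so that the unavoidable leftover mass falls entirely on the $\bfa$-side of the bound (paid for by $\phi/3$) and never multiplies $a(S\setminus S')$ by $1+\epsilon$, while the residual slack of $T(S')$ is small enough to be hidden inside the $\frac1{1+\epsilon}$ factor on $\bfb$; and (b) argue that the ``crossing'' flow from $S\setminus S'$ into $T(S')$ is negligible, which is exactly where the infinite edge capacities and the acyclicity of $G$ (forward-closedness of $A$) enter. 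A secondary, more routine component is implementing each approximate-max-flow round, and each blocking sweep inside it, in $O\bigl(|E|+(\text{mass routed})\log|V|\bigr)$ time with a balanced search tree --- in the spirit of the list-scheduling implementation for Algorithm~\ref{alg:list-scheduling} --- and charging the total routed mass so that the per-round cost is $O\bigl(\frac1\epsilon|E|\log|V|\bigr)$.
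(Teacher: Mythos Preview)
Your iterative scheme has a genuine gap. Running Theorem~\ref{thm:main-network-flow} on the \emph{reduced} instance (same DAG, decremented supplies and demands, still infinite edge capacities) only drives the \emph{forward-only} residual maximum flow to zero; it does \emph{not} make $\bff$ near-optimal for $\mathrm{NFP}_\gamma$. The parenthetical ``for infinite-edge-capacity networks, a maximal non-cancelling flow is already a maximum flow'' is false --- when $G$ is bipartite between $S$ and $T$ this is precisely maximal-versus-maximum matching. Concretely, take $\epsilon=1$, $S=\{s_1,s_2\}$, $T=\{t_1,t_2,t_3\}$, all $a_s=\gamma b_t=1$, edges $s_1\!\to\!t_1,t_2,t_3$ and $s_2\!\to\!t_1$, so $\opt_\gamma=2$. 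The single unit $s_1\!\to\!t_1$ is a legal round-1 output (value $\ge\opt_\gamma/(1+\epsilon)=1$); the reduced instance then has max flow $0$ (only $s_2$ has residual supply and it reaches only the saturated $t_1$), so your loop halts with $\val(\bff)=1$. But one checks that for \emph{every} $S'\subseteq S$, $a(S\setminus S')+\tfrac12\gamma b(T(S'))\ge 1.5$, so no witness exists for small $\phi$. Your claim~(i), that sinks of $T(S')$ are $\delta$-saturated, breaks exactly here: the ``wide augmenting path'' it would exhibit uses backward residual arcs (e.g.\ $s_2\!\to\!t_1\!\to\!s_1\!\to\!t_2$) and is invisible to the forward-only residual your rounds control. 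The same double-counting shows that taking $S'$ to be the min-cut of the reduced instance fails: one gets $a(S\setminus S')+\gamma b(T(S'))\le\val(\bff)+X+\phi/6$ with $X$ the crossing flow from $S\setminus S'$ into $T(S')$, and $X$ can be as large as $\gamma b(T(S'))$, far more than the $\epsilon$-slack absorbs.

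The paper's proof takes an essentially different route. It works in the \emph{shortcut graph} on $S\cup T$ (forward arcs $s\!\to\!t$ when $s\leadsto_G t$, backward arcs $t\!\to\!s$ when $s\leadsto_{\supp(\bff)}t$) and, Hopcroft--Karp style, builds a flow with no augmenting shortcut path of length $\le 2L{+}1$ for $L=\lfloor\log_{1+\epsilon}\tfrac{3|\bfa|_1}{\phi}\rfloor$; a BFS-layer argument (Lemma~\ref{lemma:find-cut}) then picks $S'$ as one of the layers. The hard part --- and the reason your forward-only scheme cannot be patched cheaply --- is that augmenting along shortest shortcut paths requires \emph{cancelling} flow on backward segments, yet doing a blocking step na\"ively can create new short paths because forward and backward segments share the edges of $G$. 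The paper's \emph{handled-graph} construction moves each backward segment onto its own copy of a subgraph of $G$, so that one blocking-flow phase provably increases the shortest augmenting shortcut-path length by $2$; $L$ such phases give the running time stated.
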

The value of the maximum flow for NFP$_\gamma$ is $\min_{S'\subseteq S}\left(a(S \setminus S') + \gamma b(T(S'))\right)$, by the maximum flow minimum cut theorem. The theorem finds a flow whose value is at least that of the maximum flow for the instance with $\bfb$ replaced by $\frac{\bfb}{1+\epsilon}$. 
We leave the proof of Theorem~\ref{thm:mfmc}, 
which involves many technical definitions, to Section~\ref{sec:appendix-networkflow}. 

\begin{proof}[Proof of Theorem~\ref{thm:prec-aggregate-main} using Theorem~\ref{thm:mfmc}]
	To gain insights into the proof, we first ignore the running time requirement. Let $\Psi$ be the value of LP\eqref{LP:prec-aggregate}. Then by LP duality, we have $\Psi = \min_{\gamma \geq 0}(\gamma + |\bfa|_1 - \opt_\gamma)$, where $\opt_\gamma$ is the value of the optimum flow for NFP$_\gamma$.  By MFMC theorem, for every $\gamma \geq 0$, there is a set $S' \subseteq S$ such that $\opt_\gamma = a(S \setminus S')\gamma b(T(S'))$. So, for every $\gamma \geq 0$, there is a set $S' \subseteq S$ such that $\gamma + a(S') - \gamma b(T(S')) \geq \Psi$. Then one can show that there are two subsets $S', S'' \subseteq S$ and a real number $z \in [0, 1]$ such that $\gamma + z(a(S') - \gamma b(T(S'))) + (1-z)(a(S'') - \gamma b(T(S''))) \geq \Psi$ for every $\gamma \geq 0$.  That means the coefficient $1 - z b(T(S')) + (1-z)b(T(S''))$ for $\gamma$ is  non-negative, and $z a(S') + (1-z) a(S'') \geq \Psi$. Then setting $y_v = z1_{S' \leadsto v} + (1-z)1_{S'' \leadsto v}, \forall v \in V$ gives a solution to LP\eqref{LP:prec-aggregate} of value $z a(S') + (1-z) a(S'') \geq \Psi$.
	
	Now we take the running time and the approximation parameters $\epsilon$ and $\phi$ into account. 
	We define an interesting set $\Gamma$ of $\gamma$'s as follows. Start from $\gamma = \frac{\phi}{3}$ and $\Gamma = \{\gamma\}$. While $\gamma < |\bfa|_1$ we do the following: $\gamma \gets (1+\epsilon)\gamma, \Gamma \gets \Gamma \cup \{\gamma\}$. Notice that we have $|\Gamma|  = O\left(\frac{\log (|\bfa|_1/\phi)}{\epsilon}\right) = \tilde O_{\epsilon} \left(\log \frac{|\bfa|_1}{\phi}\right)$. For every $\gamma \in \Gamma$, let $\opt_\gamma$ be the value of the optimum flow  for NFP$_\gamma$. We use Theorem~\ref{thm:mfmc} to find a flow $\bff^\gamma \in \calF_{\gamma}$, and a set $S'_\gamma \subseteq S$ with $a(S \setminus S'_\gamma) + \frac{\gamma b(T(S'_\gamma))}{1+\epsilon} \leq \val(\bff^\gamma) + \frac{\phi}{3} \leq \opt_\gamma +  \frac{\phi}{3}$. Then,
	\begin{align}
		\bfa\bfy^* &\leq  \min_{\gamma \in \Gamma} (|\bfa|_1 - \opt_\gamma + \gamma) \leq \min_{\gamma \in \Gamma} \left(|\bfa|_1 - a(S \setminus S'_\gamma)-\frac{\gamma b(T(S'_\gamma))}{1+\epsilon} + \gamma\right) + \frac{\phi}{3}\nonumber\\
		&= \min_{\gamma \in \Gamma} \left(a(S'_\gamma) - \frac{\gamma b(T(S'_\gamma))}{1+\epsilon} + \gamma\right) + \frac{\phi}{3}. \label{inequ:cy*-to-Psi}
	\end{align}
	The first inequality is by that for a fixed $\gamma \in \Gamma$, LP\eqref{LP:prec-aggregate-dual} can attain value $|\bfa|_1 - \opt_\gamma + \gamma$.
	
	Define $\Psi := \min_{\gamma \in \Gamma} \left(a(S'_\gamma) - \frac{\gamma b(T(S'_\gamma))}{1+\epsilon} + \gamma\right) \leq |\bfa|_1 + \frac{\phi}{3}$. Define two vectors $\tilde \bfa, \tilde \bfb \in \R_{\geq 0}^{\Gamma}$ as follows: For every $\gamma \in \Gamma$, let $\tilde a_\gamma = a(S'_\gamma)$ and $\tilde b_\gamma = b(T(S'_\gamma))$. 
	\begin{lemma}
		\label{lemma:at-least-Psi}
		The value of the following LP is at least $\Psi - \frac{2\phi}{3}$: maximize $\tilde \bfa\bfz$ subject to $\bfz \in \R_{\geq 0}^\Gamma, |\bfz|_1\leq 1$ and $\frac{\tilde \bfb \bfz}{(1+\epsilon)^2}\leq 1$.
	\end{lemma}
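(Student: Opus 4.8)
The plan is to prove Lemma~\ref{lemma:at-least-Psi} by LP duality rather than by exhibiting a primal solution directly. The target LP is a packing LP with only two nontrivial constraints, $\mathbf 1^\top\bfz\le 1$ and $\tilde\bfb\bfz\le(1+\epsilon)^2$, so its dual is $\min\ \lambda+(1+\epsilon)^2\mu$ subject to $\lambda+\mu\tilde b_\gamma\ge\tilde a_\gamma$ for every $\gamma\in\Gamma$ and $\lambda,\mu\ge 0$. Since the primal is feasible (take $\bfz=\mathbf 0$) and bounded, strong duality applies, and it suffices to show that \emph{every} dual-feasible pair $(\lambda,\mu)$ satisfies $\lambda+(1+\epsilon)^2\mu\ge\Psi-\tfrac{2\phi}{3}$.

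The only properties I would use are: (i) by the definition of $\Psi$, for every $\gamma\in\Gamma$ we have $\tilde a_\gamma-\tfrac{\gamma\tilde b_\gamma}{1+\epsilon}+\gamma\ge\Psi$; (ii) $\Gamma$ is a geometric grid of ratio $1+\epsilon$ whose smallest element is $\phi/3$ and whose largest element $\gamma_k$ satisfies $\gamma_k\ge|\bfa|_1$ (since $\phi<|\bfa|_1/2$); and (iii) $\Psi\le|\bfa|_1+\tfrac{\phi}{3}$, which is already recorded. Fix a dual-feasible $(\lambda,\mu)$; combining $\tilde a_\gamma\le\lambda+\mu\tilde b_\gamma$ with (i) gives, for every $\gamma\in\Gamma$,
\begin{align*}
\Psi \;\le\; \lambda+\gamma+\tilde b_\gamma\Bigl(\mu-\tfrac{\gamma}{1+\epsilon}\Bigr).
\end{align*}
If some $\gamma\in\Gamma$ has $\gamma\ge(1+\epsilon)\mu$, take the smallest such $\gamma$. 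Using (ii), one checks $\gamma\le\max\{\phi/3,\,(1+\epsilon)^2\mu\}$: either $\gamma=\phi/3$, or the grid point immediately below $\gamma$ is $<(1+\epsilon)\mu$, forcing $\gamma<(1+\epsilon)^2\mu$. For this $\gamma$ we have $\mu-\tfrac{\gamma}{1+\epsilon}\le0$ and $\tilde b_\gamma\ge0$, so the display gives $\Psi\le\lambda+\gamma\le\lambda+\max\{\phi/3,(1+\epsilon)^2\mu\}$. When the maximum equals $(1+\epsilon)^2\mu$ this is already $\lambda+(1+\epsilon)^2\mu\ge\Psi$; when it equals $\phi/3$ we get $\lambda\ge\Psi-\tfrac{\phi}{3}$, hence $\lambda+(1+\epsilon)^2\mu\ge\lambda\ge\Psi-\tfrac{2\phi}{3}$.

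It then remains to treat the case where no grid point reaches $(1+\epsilon)\mu$, i.e.\ $\gamma_k<(1+\epsilon)\mu$. By (ii) this forces $\mu>\tfrac{|\bfa|_1}{1+\epsilon}$, so $\lambda+(1+\epsilon)^2\mu>(1+\epsilon)|\bfa|_1\ge|\bfa|_1\ge\Psi-\tfrac{\phi}{3}>\Psi-\tfrac{2\phi}{3}$, using (iii) in the third step. This exhausts the cases, so the dual optimum—equivalently, the optimum of the target LP—is at least $\Psi-\tfrac{2\phi}{3}$.

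I do not expect a deep obstacle: the core is just LP duality combined with the geometric spacing of $\Gamma$. The one place needing care is the bookkeeping between the factors $1+\epsilon$ and $(1+\epsilon)^2$—the grid has ratio $1+\epsilon$, the target LP has slack $(1+\epsilon)^2$ in the $\tilde\bfb$-constraint, and the dual variable $\mu$ must be compared to $\gamma/(1+\epsilon)$—so one must land the grid point $\gamma$ in the window $[(1+\epsilon)\mu,(1+\epsilon)^2\mu]$ and handle the two boundary regimes ($\mu$ below $\phi/3$, $\mu$ above $|\bfa|_1$) via the smallest and largest elements of $\Gamma$. An alternative, more hands-on route would build an explicit $\bfz\in\R_{\ge0}^\Gamma$ supported on at most two consecutive grid points, mirroring the ``two subsets $S',S''$'' argument sketched for the exact case, but the duality version avoids the fiddly primal case analysis.
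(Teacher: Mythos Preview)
Your proof is correct and takes essentially the same approach as the paper: both argue via LP duality, pick a grid point $\gamma\in\Gamma$ close to the dual variable (the paper takes the largest $\gamma\le\alpha$, you take the smallest $\gamma\ge(1+\epsilon)\mu$), and invoke the definition of $\Psi$ at that $\gamma$ to derive the bound. The paper phrases it as a proof by contradiction while you show the dual lower bound directly, but the substance is identical.
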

	\begin{proof}
		Assume otherwise. By duality, there is an $\alpha \in [0, \Psi - \frac{2\phi}{3}]$ such that $\frac{\alpha\tilde\bfb}{(1+\epsilon)^2} + (\Psi - \frac{2\phi}{3}-\alpha){\bf1} \geq \tilde\bfa$.   Consider the largest $\gamma \in \Gamma$ that is at most $\alpha$, or consider $\gamma = \frac{\phi}{3}$ if $\alpha < \frac{\phi}{3}$. So, $\alpha < (1+\epsilon)\gamma$ and $\alpha \geq \gamma - \frac{\phi}{3} > \gamma - \frac{2\phi}{3}$. Then $\frac{\alpha \tilde b_\gamma}{(1+\epsilon)^2} + \Psi-\frac{2\phi}{3}-\alpha \geq \tilde a_\gamma$ implies $\frac{\gamma \tilde b_\gamma}{1+\epsilon} + \Psi - \gamma > \tilde a_\gamma$, which is $\Psi > \tilde a_\gamma - \frac{\gamma \tilde b_\gamma}{1+\epsilon} + \gamma$. This contradicts the definition of $\Psi$.
	\end{proof}
	
	There are only two non-trivial linear constraints in LP in Lemma~\ref{lemma:at-least-Psi}. So its value can be attained by a $\bfz$ with at most $2$ non-zero coordinates. Then in time $\tilde O_{\epsilon}\left(\log^2\frac{|\bfa|_1}{\phi}\right)$, which is not a bottleneck, we can find a solution $\bfz$ to the LP, with value at least $\Psi - \frac{2\phi}{3}$. 
	
	For every $v \in V$,  let $y_v:= \sum_{\gamma: S'_\gamma \leadsto v}z_{\gamma}$. Then $y_v \leq |\bfz|_1 \leq 1$. Clearly $y_v \leq y_u$ for every $vu \in E$ as $S'_\gamma \leadsto v$ implies $S'_\gamma \leadsto u$. $\bfa\bfy = \sum_{\gamma \in \Gamma}z_{\gamma}a(S'_\gamma) = \tilde \bfa\bfz \geq \Psi - \frac{2\phi}{3}$. Similarly, $\bfb\bfy = \tilde \bfb\bfz \leq (1+\epsilon)^2$. Notice that $\bfa\bfy^*\leq\Psi + \frac{\phi}{3}$ by \eqref{inequ:cy*-to-Psi}, so $\bfa\bfy \geq \bfa\bfy^* - \phi$.  The running time of the algorithm is $\tilde O_\epsilon\left(\big(|E|\big)\log^2 \frac{|\bfa|_1}{\phi}\right)$ as we need to run the algorithm in Theorem~\ref{thm:mfmc} for $|\Gamma| = \tilde O_\epsilon\left(\log\frac{|\bfa|_1}{\phi}\right)$ times. 
\end{proof}

\section{Approximate Network Flow Algorithm: Proof of Theorem~\ref{thm:mfmc}} 
\label{sec:appendix-networkflow}


We highlight some key ideas behind the proof of Theorem~\ref{thm:mfmc}. 
Consider the special case where the graph $G$ is a directed bipartite graph from $S$ to $T$ and for convenience assume $\bfa$ and $\bfb$ are all-1 vectors. The problem then becomes fractional bipartite-matching.  
If we are allowed to scale up the capacities of the sinks $T$ to $1+\epsilon$, then there are augmenting paths of length $O(\frac{\log n}{\epsilon}) = \tilde O_\epsilon(1)$, if the current matching is not perfect.   We can use the shortest augmenting path algorithm of Hopcroft and Karp \cite{HK73} or Dinic \cite{Din70}: In each iteration one can in nearly-linear time find a ``blocking flow'' in the residual graph restricted to shortest augmenting paths. Then augmenting the matching increases the length of the shortest augmenting path by at least $2$ and so the algorithm terminates in $\tilde O_\epsilon(1)$ iterations.

Now, we move to the case where $G = (V \supseteq S \cup T, E)$ is a general graph. Let $H = (S \cup T, E_H)$ be the transitive closure of $G$ where $st \in E_H$ if and only if there is a path from $s$ to $t$ in $G$. So the maximum flow problem over $G$ is equivalent to that over $H$.  However, we can not construct and maintain $H$ explicitly as it may have quadratic number of edges.  Instead, we try to mimic the shortest augmenting path algorithm.  A shortest augmenting path in $H$ corresponds to an augmenting path in $G$ with the minimum number of switches between forward and backward edges. Then it is tempting to run the algorithm where in each iteration we do the following: Construct the residual graph for the current flow $\bff$,  restrict it to the augmenting paths with the minimum number of switches, find a blocking flow in the graph and use it to augment $\bff$.

However, unlike the bipartite graph case, the above operations do not necessarily increase the minimum number of switches in an augmenting path for $\bff$. This is due to the interference between the forward and backward ``segments'' of an augmenting path.  To address the issue, we separate forward and backward segments, using the structure of  ``handled graphs'': A handle is a copy of a sub-graph of $G$, with sources and sinks identified with those in $G$. A handled graph $G'$ is the graph $G$ with many handles.  When augmenting the flow $\bff$ over the graph $G'$, we only consider the forward edges in $G$, and backward edges in the handles of $G'$. This way, the forward and backward edges have disjoint supports. By carefully constructing the handles, we show that augmenting $\bff$ by a blocking flow in the handled graph can increase the minimum number of switches in an augmenting path increases by at least 2.  The algorithm then repeats the procedure $\tilde O_\epsilon(1)$ times.

From now on, we focus on the proof of  Theorem~\ref{thm:mfmc}. For convenience, it is repeated below. 
\thmmfmc*

In Theorem~\ref{thm:mfmc}, $\gamma$ is fixed. So we omit the subscript $\gamma$ in $\calF_\gamma$ and simply use $\calF$ for $\calF_\gamma$.

\subsection{Handled Graphs and Shortcut Graphs}
\label{subsec:networkflow-notation}
In this section, we introduce two important structures: handled graphs and shortcut graphs. 
\begin{definition}
	Let $\hat G = (\hat V, \hat E)$ be a sub-graph of $G$. 
	A directed graph $\tilde G = (\tilde V, \tilde E)$ is \emph{a copy of $\hat G$} if 
	\begin{itemize}
		\item $\tilde V \cap (S \cup T) = \hat V \cap (S \cup T)$,
		\item $\tilde V \setminus (S\cup T)$ and $\hat V \setminus (S\cup T)$ are disjoint, and
		\item there is a bijection $\pi:\tilde V \to \hat V$ such that $\pi(v) = v$ for every $v \in \tilde V \cap (S \cup T) = \hat V \cap (S \cup T)$, and $uv  \in \tilde E$ if and only if $\pi(u)\pi(v) \in \hat E$ for every $u, v \in \tilde V$.
	\end{itemize}
	For every $v \in \tilde V$, we say $\pi(v)$ is the pre-image of $v$. 
\end{definition} 
So $\tilde G$ satisfies the definition if it is obtained from $\hat G$ by copying everything except the sources and sinks.\footnote{There are no edges from sources to sinks in $G$ so the first condition in the definition implies $E \cap \tilde E = \emptyset$.}  
Throughout the paper, we shall use $\pi$ to denote the function that maps all vertices in all copies of sub-graphs we ever defined to their pre-images.  For convenience, we also let $\pi(v) = v$ for every $v \in V$.  We guarantee the $\pi(v)$ of every $v$ is known to our algorithm.

\begin{definition}
	Let  $G^1 = (V^1, E^1), G^2 = (V^2, E^2), \cdots, G^k = (V^k, E^k)$ be $k$ copies of sub-graphs of $G$ for some integer $k \geq 0$, such that for every $1 \leq i < j \leq k$ we have $V^i \cap V^j \subseteq S \cup T$. 
	We say $G' = (V', E')$, where $V' = V \cup V^1 \cup V^2 \cup \cdots \cup V^k$ and $E' = E \uplus E^1 \uplus E^2 \uplus \cdots \uplus E^k$,  is a \emph{handled graph}, and $G^1, G^2, \cdots, G^k$ are called the \emph{handles} of $G'$.
\end{definition} 

See Figure~\ref{fig:handled-graph} for an illustration of handled graphs. Notice that the handles of $G'$ do not improve the connectivity from sources to sinks: For every $s \in S$ and $t \in T$, we have $s \leadsto_{G'} t$ if and only if $s \leadsto t$. 
The usefulness of handled graphs will be discussed later.

\begin{figure}
	\centering
	\begin{minipage}{0.45\textwidth}
		\centering
		\includegraphics[width=0.6\textwidth]{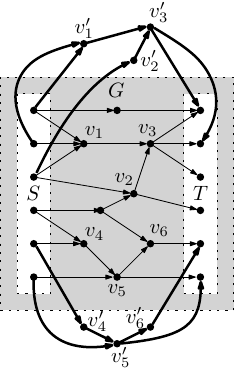}
		\caption{A handled graph. $G$ is the graph contained in the gray rectangle.  $S$ and $T$ are respectively the vertices in the left and right small white rectangles. The thick edges are edges in handles. $v'_1, v'_2, v'_3, v'_4, v'_5$ and $v'_6$ are copies of $v_1, v_2, v_3, v_4, v_5$ and $v_6$ respectively.}
		\label{fig:handled-graph}
	\end{minipage}\hfill%
	\begin{minipage}{0.45\textwidth}
		\centering
		\includegraphics[width=0.5\textwidth]{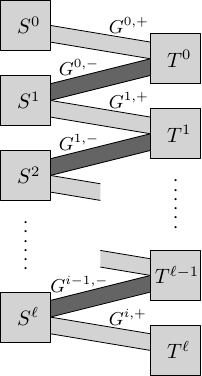}
		\caption{$S^i$'s, $T^i$'s, $G^{i, +}$'s and $G^{i,-}$'s defined in Algorithm~\ref{alg:inclen}. The
			light gray components are inside $G$, and the dark gray components are copies of $G$.}
		\label{fig:inclen-graphs}
	\end{minipage}
\end{figure}

We extend the definition of a valid flow to sub-graphs, copies of sub-graphs and handled graphs:
\begin{definition}
	Let $G' = (V', E')$ be a sub-graph of $G$, or a copy of a sub-graph of $G$, or a handled graph.  A valid flow $\bff'$ for $G'$ is a vector in $\R_{\geq 0}^{E'}$ satisfying $f'\big(\delta^+_{G'}(s)\big) \leq a_s$ for every $s \in V' \cap S$, $f'\big(\delta^-_{G'}(t)\big) \leq \gamma b_t$ for every $t \in V' \cap T$, and $f'\big(\delta^+_{G'}(v)\big) = f'\big(\delta^-_{G'}(v)\big)$ for every $v \in V' \setminus (S \cup T)$.  Let $\calF^{G'}$ be the set of all valid flows for $G'$. The value of a $\bff' \in \calF^{G'}$, denoted as $\val(\bff')$,  is defined as $f'(\delta^+_{G'}(V' \cap S)) = f'\big(\delta^-_{G'}(V' \cap T)\big)$.
\end{definition}
Given $\bff' \in \calF^{G'}$ for some $G'$, we define the support of $\bff'$, denoted as $\supp(\bff')$, as the following sub-graph $G'' = (V'', E'')$ of $G'$: $E''$ is the set of edges with positive $\bff'$ values, and $V''$ is the set of vertices incident to at least one edge in $E''$.  We use $V_{\supp(\bff')}$ and $E_{\supp(\bff')}$ to denote the vertices and edges in $\supp(\bff')$ respectively. 

\medskip

\begin{definition}[Shortcut Edges and Graphs]
	Given $s \in S, t \in T$ with $s \leadsto t$, we say $st$ is a shortcut edge. Let $G' = (V', E')$ be a handled graph and $\bff' \in \calF^{G'}$.  For any $s \in S, t \in T$, we say $ts$ is a backward shortcut edge w.r.t $\bff'$ if $s \leadsto_{\supp(\bff')} t$. A shortcut edge w.r.t $\bff'$ is defined as either a forward shortcut edge or a backward shortcut edge w.r.t $\bff'$.
	
	Let $G' = (V', E')$ be a handled graph and $\bff' \in \calF^{G'}$, the shortcut graph $H$ for $\bff'$ is the graph $(S\cup T, E_H)$, where $E_H$ is the set of shortcut edges w.r.t $\bff'$.
\end{definition}
Notice that we can not maintain a shortcut graph explicitly since its size might be quadratic in $|V|$. 

\begin{definition}[Augmenting Shortcut Path and Alternating Shortcut Path]
	Let $G' = (V', E')$ be a handled graph and $\bff' \in \calF^{G'}$, and let $H$ be the shortcut graph for $\bff'$.  We say a source $s \in S$ is satisfied w.r.t $\bff'$ if $f(\delta^+(s)) = a_s$ and unsatisfied otherwise. We say a sink $t \in T$ is saturated w.r.t $\bff'$ if $f(\delta^-(t)) = b_t$ and unsaturated otherwise. 
	
	An alternating shortcut path in $H$ is a simple path starting from an unsatisfied vertex $s \in S$. An alternating shortcut path in $H$ is said to be an augmenting shortcut path in $H$ if it ends at an unsaturated vertex $t \in T$. 
\end{definition}

\subsection{Long Augmenting Shortcut Paths Imply $(1+\epsilon)$-Approximate Flow}
In this section, we prove that if we have a flow $\bff'$ in some handled graph whose shortcut graph does not contain a short augmenting path, then we can find a set $S' \subseteq S$ satisfying the property of Theorem~\ref{thm:mfmc}. In the proof and throughout the rest of  Section~\ref{sec:appendix-networkflow}, we use $S^{\leq i}$ as a shorthand for $\union_{i' \leq i}S^i$. $S^{<i}, S^{\geq i}, S^{> i}, T^{\leq i}, T^{<i}, T^{\geq i}$ and $T^{>i}$ are defined similarly.   
\begin{lemma}
	\label{lemma:find-cut} 
	Let $G' = (V', E')$ be a handled graph and $\bff' \in \calF^{G'}$. 
	Let $L = \floor{\log_{1+\epsilon}\frac{3|\bfa|_1}{\phi}} = \tilde O_\epsilon\left(\log \frac{|\bfa|_1}{\phi}\right)$.  Assume the shortcut graph $H$ for $\bff'$ does not contain an augmenting path of length at most $2L+1$. 
	Then in time $O_\epsilon(|E'|)$ we can find a set $S' \subseteq S$  such that $a(S \setminus S') + \frac{\gamma b(T(S'))}{1+\epsilon} \leq \val(\bff') + \frac{\phi}{3}$. 
\end{lemma}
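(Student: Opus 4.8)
The plan is to imitate the classical layered-BFS / max-flow–min-cut argument behind the Hopcroft–Karp analysis, but carried out on the (implicit) shortcut graph $H$, and then exploit the $(1+\epsilon)$-slack in Hall's condition that a ``long shortest augmenting path'' gives us. First I would run a BFS in $H$ starting from the set of all unsatisfied sources. Since a shortcut edge $(s,t)$ corresponds to reachability $s\leadsto t$ in $G$ and a backward shortcut edge $(t,s)$ corresponds to reachability $s\leadsto_{\supp(\bff')} t$, one BFS layer in $H$ can be computed by a reachability sweep in $G$ (forward along $G$) or in $\supp(\bff')$ (backward along the support), each costing $O(|E'|)$ time; since there are only $2L+2 = O_\epsilon(\log(|\bfa|_1/\phi))$ layers before we are guaranteed to stop (no augmenting path of length $\le 2L+1$), the whole computation is $O_\epsilon(|E'|)$ as claimed. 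For each even $\ell\in[0,L]$ let $S^\ell\subseteq S$ be the sources reachable within BFS-distance $2\ell$, and let $T^\ell = \{t\in T : S^\ell\leadsto_{G'} t\}$ be the sinks reachable within distance $2\ell+1$; these are nested, $S^0\subseteq\cdots\subseteq S^L$ and $T^0\subseteq\cdots\subseteq T^L$.

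The key structural facts are exactly as in the bipartite-matching proof (Lemma~\ref{lemma:shortest-augmenting}), transported to the flow setting. First, $T^\ell = \{t : s\leadsto t \text{ for some } s\in S^\ell\}$, i.e. $T^\ell$ is the neighborhood of $S^\ell$ in the forward shortcut graph; this is what plays the role of $N_H(S^\ell)$. Second, because there is no augmenting shortcut path of length $\le 2L+1$, every sink in $T^{L}$ is saturated, and more generally every sink in $T^{\ell}$ (for $\ell < L$) must route its saturating flow back — via backward shortcut edges — to sources in $S^{\ell+1}$; quantitatively $\gamma\, b(T^\ell) = f'(\delta^-_{G'}(T^\ell)) \le a(S^{\ell+1})$ for $\ell\in[0,L-1]$, because flow conservation in $G'$ forces the inflow to $T^\ell$ to have originated from sources that are $H$-reachable within $2(\ell+1)$ steps. (Here I will need the handled-graph property that handles do not create new $S$-to-$T$ connectivity, so the ``$T^\ell$ is the neighborhood of $S^\ell$'' statement is unambiguous, and the support-backward edges are exactly where the returning flow lives.) Third, the hypothesis of the lemma — which is the place where the $(1+\epsilon)$-slack enters — is used in the form: if $S'\subseteq S$ satisfies $\gamma\, b(T(S')) \le (1+\epsilon)\,a(S')$ in the worst case, but every time we move up a layer the sink-mass $\gamma\,b(T^\ell)$ is ``charged'' to the next source layer, then the source masses must grow geometrically.

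Concretely, combining the two inequalities $\gamma\, b(T^\ell)\le a(S^{\ell+1})$ and (the contrapositive of what we want) $\gamma\, b(T^\ell) > (1+\epsilon)\, a(S^\ell)$ would give $a(S^{\ell+1}) > (1+\epsilon)\,a(S^\ell)$ for all $\ell\in[0,L-1]$, hence $a(S^L) > (1+\epsilon)^L a(S^0) \ge (1+\epsilon)^L \cdot \frac{\phi}{3}$ (using that $S^0\neq\emptyset$ — there is at least one unsatisfied source, else the flow is already maximum and we take $S'=\emptyset$ — and $a_s\ge \phi/3$ can be arranged by the definition of $L$, or more carefully $a(S^0)\ge$ the smallest positive source value which we can assume is $\ge \phi/3$ after a preprocessing step that deletes tiny sources losing only $\phi/3$ total). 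But $a(S^L)\le |\bfa|_1$ and $(1+\epsilon)^L \ge \frac{3|\bfa|_1}{\phi}$ by the choice $L = \floor{\log_{1+\epsilon}(3|\bfa|_1/\phi)}$, a contradiction. Therefore some layer $\ell^*\in[0,L-1]$ has $\gamma\, b(T^{\ell^*}) \le (1+\epsilon)\, a(S^{\ell^*})$; set $S' := S\setminus S^{\ell^*}$. Then $T(S')\subseteq T\setminus T^{\ell^*}$ (sinks reachable only from $S\setminus S^{\ell^*}$), so $\gamma\, b(T(S')) \le \gamma\, b(T) - \gamma\, b(T^{\ell^*})$, and one checks $a(S\setminus S') + \frac{\gamma b(T(S'))}{1+\epsilon} = a(S^{\ell^*}) + \frac{\gamma b(T(S'))}{1+\epsilon}$; the value of $\bff'$ is at least $\gamma\,b(T^{\ell^*})$ (the saturated sinks) plus $a(S\setminus S')$ minus a small slack — I would finish by a direct accounting: $\val(\bff')$ equals the total flow out of $S$, which is $\ge a(S^{\ell^*}) - (\text{unsatisfied mass inside }S^{\ell^*})$, and all unsatisfied mass has been ``used up'' in producing the geometric growth, up to the additive $\phi/3$ coming from the deleted tiny sources. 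The main obstacle, and the step I would be most careful about, is proving the inequality $\gamma\, b(T^\ell) \le a(S^{\ell+1})$ rigorously: it requires tracing, through flow conservation in the handled graph $G'$, that every unit of flow saturating a sink in $T^\ell$ can be followed backward through $\supp(\bff')$ (i.e. along backward shortcut edges) to a source in $S^{\ell+1}$, and that no flow ``escapes'' into a handle in a way that breaks the layer bookkeeping — this is exactly why the handled-graph / separated forward-backward-support construction was set up, and I expect the bulk of the work to be in making this flow-decomposition argument airtight.
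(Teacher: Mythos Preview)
Your overall plan --- layered BFS in the shortcut graph, a key ``flow into near sinks comes from near sources'' inequality, then a geometric-growth contradiction --- matches the paper's. But the execution has two genuine gaps.

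First, your choice of $S'$ is backwards. You set $S' := S\setminus S^{\ell^*}$ (the far sources) and claim $T(S')\subseteq T\setminus T^{\ell^*}$. That containment is false: a sink in $T^{\ell^*}$ can certainly also be reachable from a far source; nothing prevents this. The correct choice is $S' := S^{\ell^*}$ (the near sources), for which one has the \emph{equality} $T(S') = T^{\ell^*}$ directly from the BFS layering, and then $a(S\setminus S') + \frac{\gamma b(T(S'))}{1+\epsilon} = a(S\setminus S^{\ell^*}) + \frac{\gamma b(T^{\ell^*})}{1+\epsilon}$, which is the quantity you actually want to bound.

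Second, your contradiction hypothesis and the quantity you track do not connect to the target inequality. You assume $\gamma b(T^\ell) > (1+\epsilon)\,a(S^\ell)$ for all $\ell$ and deduce $a(S^L) > (1+\epsilon)^L a(S^0)$, but (i) that assumption is not the negation of what you need to prove (the target involves $\val(\bff')$ and $\phi/3$, not just $a(S^\ell)$), and (ii) even if it were, you would need $a(S^0)\ge \phi/3$, which your ``delete tiny sources'' preprocessing cannot guarantee with the given $L$: after deletion the smallest surviving $a_s$ is only $\Omega(\phi/n)$, and $(1+\epsilon)^L\cdot \phi/n$ does not exceed $|\bfa|_1$. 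The paper's fix is to track $x_\ell := \val(\bff') - a(S\setminus S^\ell)$ instead of $a(S^\ell)$. The key inequality $\gamma b(T^\ell)\le x_{\ell+1}$ (flow into $T^\ell$ is at most flow sent by $S^{\ell+1}$, since far sources are all satisfied) combined with the negation of the target, $\frac{\gamma b(T^\ell)}{1+\epsilon} > x_\ell + \frac{\phi}{3}$, gives $x_{\ell+1} > (1+\epsilon)(x_\ell + \phi/3)$. Since $x_0 = \val(\bff') - a(S\setminus S^0)\ge 0$ automatically (all sources outside $S^0$ are satisfied), the chain starts at $\ge \phi/3$ with no preprocessing, and $x_{L+1}=\val(\bff')\le |\bfa|_1$ gives the contradiction with the stated $L$.

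Finally, your worry about flow ``escaping into a handle'' in the inequality $\gamma b(T^\ell)\le a(S^{\ell+1})$ is misplaced: this follows purely from the definition of backward shortcut edges (if $s\leadsto_{\supp(\bff')} t$ with $t\in T^\ell$ then $s$ is at BFS-distance $\le 2\ell+2$), and holds for any handled graph; no special structure of handles is used here.
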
	
\begin{proof}
	Let $H = (S \cup T, E_H)$ be the shortcut graph for $\bff'$.  For every integer $\ell \in [0, L]$, let $S^\ell$ be the set of vertices in $S$ to which the shortest alternating path in $H$ has length exactly $2\ell$, and let $T^\ell$ be the set of vertices in $T$ to which the shortest alternating path in $H$ has length exactly $2\ell+1$.  Let $S^{L+1} = S \setminus S^{\leq L}$ and $T^{L+1} = T \setminus T^{\leq L}$, so that $(S_\ell)_{\ell \in [0, L+1]}$ and $(T_\ell)_{\ell \in [0, L+1]}$ form partitions of $S$ and $T$ respectively.
	
	We find the ${\ell^*} \in [0, L]$ with minimum $a(S^{>{\ell^*}}) + \frac{\gamma b(T^{\leq {\ell^*}})}{1+\epsilon}$ and output $S' := S^{\leq {\ell^*}}$. We prove that $a(S \setminus S') + \frac{\gamma b(T(S'))}{1+\epsilon} = a(S^{>{\ell^*}}) + \frac{\gamma b(T^{\leq {\ell^*}})}{1+\epsilon} \leq \val(\bff') + \frac{\phi}{3}$.  Assume towards the contradiction that $a(S^{>{\ell^*}}) + \frac{\gamma b(T^{\leq {\ell^*}})}{1+\epsilon} > \val(\bff') + \frac{\phi}{3}$.  Then, for every $\ell \in [0, L]$, we have $a(S^{>\ell}) + \frac{\gamma b(T^{\leq \ell})}{1+\epsilon}>\val(\bff') + \frac{\phi}{3}$ by the way we choose $\ell^*$, which is $\val(\bff') - a(S^{>\ell}) + \frac{\phi}{3} < \frac{\gamma b(T^{\leq \ell})}{1+\epsilon}$.
	
	
	We prove  $\gamma b(T^{\leq \ell})\leq \val(\bff') - a(S^{> \ell +1})$ for every $\ell \in [0, L]$. In the flow $\bff'$, all sinks in $T^{\leq \ell}$ are saturated  as there are no augmenting path of length at most $2\ell+1$ in $H$.  So $\gamma b(T^{\leq \ell})$ units flow are sent to $T^{\leq \ell}$ in $\bff'$. The senders of this flow are in $S^{\leq \ell + 1}$.  Moreover, as the sources in $S^{>\ell+1}$ are saturated, the sources in $S^{\leq \ell+1}$ sent $\val(\bff) - a(S^{>\ell+1})$ units of flow.
	Hence $\gamma b(T^{\leq \ell})\leq \val(\bff) - a(S^{>\ell+1})$. 
	
	Therefore, we have proved that 	$\val(\bff) - a(S^{>\ell+1}) > (1+\epsilon)\left(\val(\bff) - a(S^{>\ell}) + \frac{\phi}{3}\right)$ for every $\ell \in [0, L]$.  This gives us $\val(\bff) = \val(\bff) - a(S^{>L+1}) > (1+\epsilon)^{L + 1}\big(\val(\bff) - a(S^{>0}) + \frac{\phi}{3}\big) \geq (1+\epsilon)^{L+1}\frac{\phi}{3}$. However, as $\val(\bff') \leq |\bfa|_1$, we have a contradiction by our definition of $L$.
	
	To construct the partitions $(S_\ell)_{\ell \in [0, L+1]}$ and $(T_\ell)_{\ell \in [0, L+1]}$, we define the following directed graph.
	We have vertices $\{(v, o): v \in V', o \in \{0 ,1\}\}$. For every $t \in T$, we have an edge $((t, 0), (t, 1))$ of length 1. For every $s \in T$, we have an edge $((s, 1), (s, 0))$ of length $1$. For every $vu\in E'$, we have an edge $((v, 0), (u, 0))$ of length 0. For every $vu \in \supp(\bff')$, we have an edge $((u, 1), (v, 1))$ of length $0$. 	
	Then $(S_\ell)_{\ell \in [0, L+1]}$ and $(T_\ell)_{\ell \in [0, L+1]}$ can be constructed using a variant of BFS that takes care of length-$0$ edges.  The running time of the algorithm is $O(|E'|)$.
\end{proof}
With Lemma~\ref{lemma:find-cut}, it remains to construct a handled graph $G'$ and a flow $\bff' \in \calF^{G'}$ satisfying the condition of the lemma. To achieve the goal, we need more definitions and tools. 

\subsection{Sub-Flows, Projections and Blocking Flows}

\begin{definition}
	\label{def:sub-flow}
	Let $G' = (V', E')$ be a handled graph, and $\bff' \in \calF^{G'}$. Let $S' \subseteq S$ be a subset of sources. Then, a sub-flow of $\bff'$ sent from $S'$ is a flow $\bff'' \in \calF^{G'}$ satisfying
	\begin{enumerate}[label=(\ref{def:sub-flow}\alph*),leftmargin=*]
		\item $f''_e \leq f'_e$ for every $e \in E'$,
		\item $f''_e = f'_e$ for every $e \in \delta^+_{G'}(S')$, and
		\item $f''_e = 0$ for every $e \in \delta^+_{G'}(S \setminus S')$.
	\end{enumerate}
	
	Let $T' \subseteq T$ be a subset of sinks. Similarly, a sub-flow of $\bff'$ received by $T'$ is a flow $\bff'' \in \calF^{G'}$ satisfying 
	\begin{enumerate}[label=(\ref{def:sub-flow}\alph*),leftmargin=*,start=4]
		\item $f''_e \leq f'_e$ for every $e \in E'$, 
		\item $f''_e = f'_e$ for every $e \in \delta^-_{G'}(T')$, and
		\item $f''_e = 0$ for every $e \in \delta^-_{G'}(T \setminus T')$.
	\end{enumerate} 
\end{definition}

\begin{lemma}
	Let $G' = (V', E')$ be a handled graph, and $\bff' \in \calF^{G'}$, and $S' \subseteq S$. Then we can find a sub-flow $\bff''$ of $\bff'$ sent from $S'$ in time $O\big(|S'| + |V_{\supp(\bff'')}|\cdot\log |V'| + |E_{\supp(\bff'')}|\big)$.
\end{lemma}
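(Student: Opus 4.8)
The plan is to use that the handled graph $G'$ is acyclic --- $G$ is acyclic and each handle is a copy of a subgraph of $G$ sharing with the rest of $G'$ only the sources and sinks, which have no incoming and no outgoing edges respectively, so no new cycle is created --- and to build $\bff''$ by pushing $\bff'$ forward from $S'$. First I would run a single depth-first search started from every source in $S'$, exploring only edges $e$ with $f'_e>0$; this yields the set $R$ of vertices of $\supp(\bff')$ reachable from $S'$ together with a topological order of $R$ (its reverse post-order). Then I process the vertices of $R$ in this topological order, maintaining for each $v\in R$ the amount $g_v:=f''\big(\delta^-_{G'}(v)\big)$ of partial-solution flow already routed into $v$. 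For a source $s\in S'$ I set $f''_e:=f'_e$ for all $e\in\delta^+_{G'}(s)$; for an internal vertex $v\in R\setminus(S\cup T)$, writing $F_v:=f'\big(\delta^+_{G'}(v)\big)=f'\big(\delta^-_{G'}(v)\big)$, I split the arrived amount \emph{proportionally}, $f''_e:=f'_e\cdot g_v/F_v$ for every $e\in\delta^+_{G'}(v)$ with $f'_e>0$; every edge never touched keeps value $0$. Each assignment immediately increments $g$ at the head of $e$.

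Verifying correctness is routine. Conditions (a)--(c) of the definition of a sub-flow sent from $S'$ hold because we copy $f'_e$ on the out-edges of $S'$, and no source of $S\setminus S'$ is ever visited (sources have no in-edges, even inside the handles), so those out-edges stay $0$; the proportionality factor $g_v/F_v$ is at most $1$ since, by conservation of $\bff'$ and induction along the topological order, $g_v=\sum_{e\in\delta^-_{G'}(v)}f''_e\le\sum_{e\in\delta^-_{G'}(v)}f'_e=f'\big(\delta^-_{G'}(v)\big)=F_v$. Flow conservation for $\bff''$ at an internal $v$ is immediate ($\sum_{e\in\delta^+_{G'}(v)}f''_e=F_v\cdot g_v/F_v=g_v$), and the source and sink capacity constraints carry over from $\bff'$ because $f''\le f'$; hence $\bff''\in\calF^{G'}$.

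The running time rests on the claim that $\supp(\bff'')$ is exactly $R$ together with the positive-$\bff'$ edges among $R$ --- this is precisely why I use proportional rather than greedy splitting. The claim is an induction along the topological order: if $v\in R$ then some already-processed in-neighbour $u\in R$ with $(u,v)\in\supp(\bff')$ has $f''_{(u,v)}>0$, so $g_v>0$, so $F_v>0$, so every positive-$\bff'$ out-edge of $v$ receives positive $f''$; conversely a positive $f''$-edge has positive $f'$ and, again by induction, a tail in $R$. Therefore both the DFS and the pushing pass touch exactly the vertices in $V_{\supp(\bff'')}$ and the edges in $E_{\supp(\bff'')}$. Charging the work: $O(|S'|)$ to seed the search; if $\bff'$ is kept in the balanced-search-tree representation maintained by the algorithm, so that the positive-flow out-edges of a vertex form a range locatable in $O(\log|V'|)$ and then listable in $O(1)$ per edge, each discovered vertex costs $O(\log|V'|)$ and each discovered edge costs $O(1)$; the total is $O\big(|S'|+|V_{\supp(\bff'')}|\log|V'|+|E_{\supp(\bff'')}|\big)$. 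The companion statement about a sub-flow received by $T'$ then follows by running the same procedure on the reversal of $G'$.

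The step I expect to be the real obstacle is not the construction but bounding the running time by $\supp(\bff'')$ alone: one must guarantee the traversal never pays for the ambient graph $G'$, nor even for the part of $\supp(\bff')$ that is unreachable from $S'$, and this is exactly what the proportional split (which forces the DFS-reachable subgraph to coincide with $\supp(\bff'')$) together with the support-indexed flow data structure are there to ensure.
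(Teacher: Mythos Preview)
Your approach is correct and follows the same high-level idea as the paper: push commodity forward from $S'$ through the acyclic graph $G'$ in topological order, touching only vertices and edges that end up carrying positive $\bff''$-flow. The paper's proof differs from yours in two implementation choices. First, instead of running a DFS on $\supp(\bff')$ to discover $R$ and its reverse post-order, the paper precomputes once (for the whole algorithm) a topological order of $G$, extends it to $G'$ by giving every handle vertex the rank of its pre-image, and then uses a \emph{priority queue} keyed by this rank to extract only those vertices currently holding commodity; the $\log|V'|$ term in the bound comes from these priority-queue operations, not from a balanced-search-tree representation of the flow as you hypothesize. Second, the paper pushes \emph{greedily} (saturate out-edges of $v$ one by one until the commodity at $v$ is exhausted) rather than proportionally; with greedy pushing the traversal naturally halts as soon as no commodity remains, so it still touches only $E_{\supp(\bff'')}$, but now $\supp(\bff'')$ may be a proper subset of your reachable set $R$, which can only make the output-sensitive bound tighter. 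Your proportional split is a neat trick precisely to force $\supp(\bff'')=R$ so that the DFS cost is accounted for, but it is not needed under the priority-queue route.
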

\begin{proof}
	Initially, we have $f'(\delta^+_{G'}(s))$ units of commodity at any $s \in S'$, and $0$ units of commodity elsewhere. We then process the vertices $V' \setminus T$ in topological order one by one.  When processing a vertex $v \in V \setminus T$, we push the commodity at $v$ to its out-neighbors along edges in $\delta^-_{G'}(v)$, with the only constraint being $\bff'' \leq \bff'$. 
	
	We assume the topological ordering of $G$ is computed at the beginning of the whole algorithm and for every $v \in V$ we know the rank of $v$ in the ordering. Notice that the rank of vertices in $G$ can be extended to rank of vertices in $G'$. Then we use a priority-queue data structure to store the vertices which hold the commodity, with rank being the priority function.  The overall running time of the algorithm can be bounded by $O\big(|S'| + |V_{\supp(\bff'')}|\cdot\log |V'| + |E_{\supp(\bff'')}|\big)$.
\end{proof}
Similarly, the following lemma can be proved:
\begin{lemma}
	Let $G' = (V', E')$ be a handled graph, and $\bff' \in \calF^{G'}$, and $T' \subseteq T$. Then we can find a sub-flow $\bff''$ of $\bff'$ received by $T'$ in time $O\big(|T'| + |V_{\supp(\bff'')}|\cdot\log |V'| + |E_{\supp(\bff'')}|\big)$.
\end{lemma}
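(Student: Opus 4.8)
The plan is to mirror the proof of the preceding lemma, processing the vertices of $G'$ in \emph{reverse} topological order instead of the forward one. Concretely, I would initialize $f''_e := f'_e$ for every $e \in \delta^-_{G'}(T')$ --- equivalently, place a demand of $f'\big(\delta^-_{G'}(t)\big)$ units at each sink $t \in T'$ and $0$ units everywhere else --- and then process the vertices of $V' \setminus S$ in reverse topological order. When a vertex $v$ is processed, it currently owes some amount $o_v \ge 0$ of commodity, namely the total $f''$-value already committed on its outgoing edges; I pull this amount back along its incoming edges by choosing values $0 \le f''_e \le f'_e$ for $e \in \delta^-_{G'}(v)$ with $\sum_{e \in \delta^-_{G'}(v)} f''_e = o_v$. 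This is always possible because $\sum_{e \in \delta^-_{G'}(v)} f'_e = f'\big(\delta^-_{G'}(v)\big) = f'\big(\delta^+_{G'}(v)\big) \ge o_v$, using conservation of $\bff'$ at $v$ together with $f'' \le f'$ on the outgoing edges. For $v \in S$ nothing needs to be done, since $v$ has no incoming edges.

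Next I would verify that the resulting $\bff''$ is a sub-flow of $\bff'$ received by $T'$ in the sense of Definition~\ref{def:sub-flow}: it satisfies $f'' \le f'$ everywhere by construction; it equals $f'$ on $\delta^-_{G'}(T')$ by the initialization; and it is $0$ on $\delta^-_{G'}(T \setminus T')$, since the sinks in $T \setminus T'$ start with zero demand, have no outgoing edges, and hence are never the target of any push. Flow conservation at internal vertices holds because the amount pulled in equals the amount owed out, and the source capacity constraints $f''\big(\delta^+_{G'}(s)\big) \le a_s$ follow from $f'' \le f'$ and $f'\big(\delta^+_{G'}(s)\big) \le a_s$. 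Thus $\bff'' \in \calF^{G'}$.

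For the running time, exactly as in the preceding proof I would assume a topological ordering of $G$ --- and hence of $G'$, obtained by extending ranks to the vertices of the handles --- is computed once at the start of the whole algorithm, and maintain the set of vertices owing a positive amount in a priority queue keyed by reverse topological rank. Every vertex actually visited and every edge on which $f''$ is set positive lies in $\supp(\bff'')$, so the total work is $O\big(|T'|\big)$ for the initialization over $T'$, plus $O\big(|V_{\supp(\bff'')}|\cdot\log|V'|\big)$ for the priority-queue operations, plus $O\big(|E_{\supp(\bff'')}|\big)$ for the pushes, which gives the claimed bound.

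I do not expect a serious obstacle here: the statement is the exact dual of the preceding lemma under reversal of all edges in $G'$, which turns a sub-flow sent from $S'$ into a sub-flow received by $T'$ and preserves acyclicity because in a handled graph every vertex of $S$ is a pure source and every vertex of $T$ is a pure sink. The only points needing a line of care are that the incoming capacity $f'\big(\delta^-_{G'}(v)\big)$ always suffices to absorb the owed amount $o_v$ (immediate from conservation of $\bff'$) and that restricting all the work to $\supp(\bff'')$ is what yields the output-sensitive running time rather than one scaling with $|E'|$.
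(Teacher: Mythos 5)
Your proof is correct and takes exactly the approach the paper intends: the paper proves the analogous lemma for sub-flows sent from $S'$ by a forward topological sweep with a priority queue, and then states "Similarly, the following lemma can be proved," leaving the reverse-topological variant to the reader. You have filled in that variant faithfully, including the one point that actually requires justification (that conservation of $\bff'$ guarantees $f'\bigl(\delta^-_{G'}(v)\bigr) \geq o_v$) and the output-sensitive accounting that charges work only to $\supp(\bff'')$.
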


\begin{definition}
	Let $G' = (V', E')$ be a handled graph, and $\bff' \in \calF^{G'}$.  Let $\tilde G = (\tilde V, \tilde E)$ be a sub-graph of $G$, or a copy of a sub-graph of $G$.  If for every $uv  \in \supp(\bff')$, we have some $(\tilde u, \tilde v) \in \tilde E$ with $\pi(u) = \pi(\tilde u)$ and $\pi(v) = \pi(\tilde v)$, then we say $\bff'$ can be projected to $\tilde G$. (Notice that the $(\tilde u, \tilde v)$ is unique, if it exists.) Otherwise, we say $\bff'$ can not be projected to $\tilde G$.  In the former case, we define the projection of $\bff'$ to $\tilde G$ to be the vector $\tilde \bff \in \R_{\geq 0}^{\tilde E}$ satisfying:
	\begin{align*}
		\tilde f_{\tilde u\tilde v} = f'\Big(\big\{uv  \in E': \pi(u) = \pi(\tilde u), \pi(v) = \pi(\tilde v)\big\}\Big) , \qquad \forall \tilde u\tilde v \in \tilde E.
	\end{align*}
\end{definition}
Clearly, in the above definition, the projection $\tilde \bff$ of $\bff'$ to $\tilde G$ has $\tilde\bff \in \calF^{\tilde G}$ and $\val(\tilde \bff) = \val(\bff')$.

Dinic \cite{Din70} introduced the notion of blocking flows, which is a $s$-$t$ flow such that every $s$-$t$ path in the graph $G$ has an edge that is full. 
\begin{definition}[\cite{Din70}]
	\label{def:blocking-flows}
	Let $R = (V_R, E_R)$ be a directed graph with two special vertices $s^*, t^* \in V_R$ such that $\delta^-(s^*) = \emptyset$ and $\delta^+(t^*) = \emptyset$. Let $\bfc \in [0, \infty]^{E_R}$ be a capacity vector on $E_R$. A blocking flow in $(R, \bfc)$ is a vector $\bfg \in \R_{\geq 0}^{E_R}$ satisfying $\bfg \leq \bfc$, $g(\delta^+(v)) = g(\delta^-(v))$ for every $v \in V_R \setminus \{s^*, t^*\}$, and every path $P$ from $s^*$ to $t^*$ in $R$ contains an edge $e$ with $g_e = c_e$. 
\end{definition}
So, a flow $\bfg$ is a block flow if we can not increase its value by only increasing $g_e$ values. In an influential paper of Sleator and Tarjan \cite{ST83}, they developed the dynamic tree (also known as link cut tree) data structure and showed how it can be used to find a blocking flow in $O(|E_R|\log |V_R|)$-time.
\begin{theorem}[\cite{ST83}]
	\label{thm:find-blocking-flow}
	Let $R = (V_R, E_R)$, $s^*, t^*$ and $\bfc$ as defined in Definition~\ref{def:blocking-flows}.
	There is $O\big(|E_R|\log |V_R|\big)$-time algorithm that finds a blocking flow in $(R, \bfc)$.
\end{theorem}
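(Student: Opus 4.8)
The plan is to recall the approach of Sleator and Tarjan~\cite{ST83}: run Dinic's classical blocking-flow procedure (repeated advance/retreat/augment), but replace the naive step of walking an $s^*$-$t^*$ path to find its bottleneck and push flow along it — which costs $\Theta(|V_R|)$ per augmentation and yields an $O(|V_R|\cdot|E_R|)$ bound — by operations on the link-cut (dynamic) tree data structure, each costing $O(\log|V_R|)$ amortized time. Concretely, I would maintain for each $e\in E_R$ a residual value $\mathrm{res}_e:=c_e-g_e$ (initially $c_e$), and a rooted forest on $V_R$ in which the parent edge of a non-root vertex is its \emph{current edge}, stored in the link-cut tree with key $\mathrm{res}_e$. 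The structure supports, in $O(\log|V_R|)$ amortized time: $\mathrm{findroot}(v)$; $\mathrm{parent}(v)$; $\mathrm{link}(v,w,x)$ (attach the tree-root $v$ to parent $w$ with edge key $x$); $\mathrm{cut}(v)$ (remove $v$'s parent edge); $\mathrm{mincost}(v)$ (minimum key on the path from $v$ to $\mathrm{findroot}(v)$ together with a witnessing edge); and $\mathrm{addcost}(v,x)$ (add $x$ to the key of every edge on that path).

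The main loop repeatedly examines $v:=\mathrm{findroot}(s^*)$. \textbf{Augment:} if $v=t^*$, the forest path from $s^*$ to $t^*$ carries flow; set $\delta:=\mathrm{mincost}(s^*)$, perform $\mathrm{addcost}(s^*,-\delta)$ (this raises $g_e$ by $\delta$ on every edge of the path), then while $\mathrm{mincost}(s^*)=0$ take the witnessing saturated edge $(x,\mathrm{parent}(x))$, record its flow permanently, $\mathrm{cut}(x)$, and delete it from $R$. \textbf{Advance:} if $v\ne t^*$ and $v$ still has an outgoing edge $(v,w)$ in $R$ with $\mathrm{res}_{(v,w)}>0$ (and $w$ not already in $v$'s tree, which never happens once $R$ is acyclic, as in the level-graph/shortest-augmenting-path setting where this routine is used), perform $\mathrm{link}(v,w,\mathrm{res}_{(v,w)})$. \textbf{Retreat:} if $v\ne t^*$ and $v$ has no outgoing edge in $R$, then $v$ cannot reach $t^*$; $\mathrm{cut}$ every child of $v$ and delete $v$ and all its incident edges from $R$. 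Stop when $v=s^*$ and $s^*$ has no outgoing edge left; output $\bfg$, built from the recorded flows of deleted edges together with $c_e-\mathrm{res}_e$ on the surviving forest edges.

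For correctness I would first verify the invariant that the maintained $\bfg$ is always a valid $s^*$-$t^*$ flow: $\mathrm{link}$, $\mathrm{cut}$ and edge/vertex deletions never change any $g_e$, so they preserve conservation, and an augment adds $\delta$ uniformly along a complete $s^*$-$t^*$ path, preserving conservation at every internal vertex. Then I would argue the output is blocking: once an edge is deleted it is either permanently saturated (deleted in an augment) or it points into a vertex that can no longer reach $t^*$, and the set of such ``dead'' vertices only grows; at termination $s^*$ is dead, so any $s^*$-$t^*$ path of the original $R$ that avoided every deleted edge would be an all-positive-residual path from $s^*$ to $t^*$ with live endpoints, a contradiction — hence every such path contains a fully saturated edge. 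For the running time: each edge is advanced over at most once (it can only cease to be a current edge by being deleted), each augment deletes at least one edge, and each retreat deletes at least one edge or one isolated vertex, so the numbers of $\mathrm{link}$, augment and retreat events are each $O(|E_R|)$; the total number of link-cut-tree operations is therefore $O(|E_R|)$, giving $O(|E_R|\log|V_R|)$ amortized time.

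The genuine obstacle is the amortized $O(\log|V_R|)$ bound on the link-cut-tree operations themselves — the core of~\cite{ST83}, obtained via a biased-tree (or splay-tree) representation of the forest's path decomposition — which I would invoke rather than reprove. What remains is the careful but elementary bookkeeping: orchestrating advance/retreat/augment, and checking the two invariants (flow conservation is untouched by cuts and deletions, and no edge is ever advanced over twice) that make the linear edge-event count and the blocking property go through.
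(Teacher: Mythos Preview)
The paper does not give its own proof of this theorem; it is stated as a cited result of Sleator and Tarjan~\cite{ST83} and used as a black box. Your proposal is a faithful outline of exactly that classical Sleator--Tarjan argument (Dinic's advance/retreat/augment loop implemented via link-cut trees), so there is nothing to compare: you have supplied the standard proof that the paper merely cites.
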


\subsection{Augmenting using Shortest Shortcut Paths}
Now we show how to find $G'$ and $\bff'$ satisfying the property of Lemma~\ref{lemma:find-cut}. We maintain $G'$ and $\bff' \in \calF^{G'}$ and repeatedly augment $\bff'$ along shortest augmenting paths in the shortcut graph $H'$ for $\bff'$. During the procedure, we need to change the handled graph $G'$ from iteration to iteration. The following core theorem states that we can increase the length of the shortest augmenting path in $H'$ by $2$ in nearly-linear time.
\begin{theorem}
	\label{thm:inclen-main}
	Let $G^\circ = (V^\circ, E^\circ)$ be a handled graph, $\bff^\circ \in \calF^{G^\circ}$, and $H^\circ$ be the shortcut graph for $f^\circ$. Let $\ell \geq 0$ be an integer such that the length of the shortest augmenting path in $H^\circ$ is at least $2\ell+1$.   Given $\ell, G^\circ, f^\circ$ and $\epsilon > 0$, there is an $O(|E^\circ|\log|V^\circ| )$-time algorithm that outputs a handled graph $G' = (V', E')$ with $|V'| \leq 3|V|$ and $|E'| \leq 3|E|$,\footnote{We use concrete constants here to avoid abuse of $O(\cdot)$ notation caused by applying the theorem repeatedly.} and a flow $\bff' \in \calF^{G'}$, such that in the shortcut graph $H'$ for $\bff'$, the shortest augmenting path has length at least $2\ell + 3$.
\end{theorem}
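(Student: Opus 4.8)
The plan is to carry out one phase of a Dinic-style shortest-augmenting-path algorithm, but with ``distance'' meaning the number of forward/backward direction switches -- equivalently, the length of a path in the shortcut graph -- rather than the number of residual edges. One phase should push a blocking family of shortest augmenting shortcut paths and thereby raise the minimum switch-count of an augmenting path from $2\ell+1$ to $2\ell+3$. The role of the handled-graph machinery is to make a phase implementable in near-linear time: I would maintain (and restore at the end) the invariant that the core copy of $G$ inside the handled graph carries no flow, so that $\supp(\bff^\circ)$ lies entirely in the handles; then forward shortcut segments live in the core and backward shortcut segments live in the handles, the two never share an edge, and this disjointness is exactly what Dinic's correctness argument needs in order to transfer from residual distance to switch-count distance. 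I would begin by computing the switch-count layering of $G^\circ$ relative to $\bff^\circ$: as in the proof of Lemma~\ref{lemma:find-cut}, run a BFS on the auxiliary graph with vertices $(v,o)$, $v \in V^\circ$, $o \in \{0,1\}$, where an edge that keeps the current direction (a core edge used forward in phase $0$, or a handle edge of $\supp(\bff^\circ)$ used backward in phase $1$) has length $0$ while an edge that flips direction at a sink or at a source has length $1$, starting from the unsatisfied sources in phase $0$. From the distances one reads off the partitions $S^0, S^1, \dots$ of $S$ and $T^0, T^1, \dots$ of $T$, and -- the point -- an assignment of a unique \emph{slot} to each core edge of $G^\circ$ (the hop index at which it could be traversed on a shortest augmenting shortcut path, as a forward edge of $G$ or as a backward edge of a handle), well defined because the two endpoints of a length-$0$ edge receive equal distance. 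This takes $O(|E^\circ|)$ time.

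Next I would assemble a ``chained'' graph $\hat G$ that exposes exactly the augmenting shortcut paths of length $2\ell+1$: a super-source $\hat s$ joined to each unsatisfied $s \in S^0$ with capacity $a_s - f^\circ(\delta^+(s))$; a forward copy $G^{0,+}$ of the slot-$0$ forward edges (infinite capacities) routing $S^0$ to $T^0$; a reversed backward copy $G^{0,-}$ of the slot-$0$ backward edges, with capacity equal to the current flow, routing $T^0$ to $S^1$; and so on, through $G^{\ell,+}$ routing $S^\ell$ to $T^\ell$; and finally a super-sink $\hat t$ joined from each unsaturated $t \in T^\ell$ with capacity $\gamma b_t - f^\circ(\delta^-(t))$. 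Because slot-uniqueness puts each core edge into at most one forward copy and at most one backward copy, $\hat G$ has $O(|E^\circ|)$ edges and $O(|V^\circ|)$ vertices; after pruning to the vertices lying on some $\hat s$-$\hat t$ path it is a layered graph, so Theorem~\ref{thm:find-blocking-flow} produces a blocking flow $\bfg$ in $\hat G$ in $O(|E^\circ|\log|V^\circ|)$ time.

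Then I would push $\bfg$ back onto $G$: add the forward contributions of $\bfg$ and subtract its backward contributions. This is a legal augmentation -- backward contributions never exceed the current flow because slots are unique (no edge is drained twice), and each copy contributes a conservative flow on a disjoint set of core edges, so the total is conservative -- and it defines $\bff'$. To return the output handled graph I would re-split $\bff'$ so that the core $G$ is again flow-free and all the flow sits on handles, placing the edges that just received forward flow and the reduced old-handle edges on separate handles; slot-uniqueness bounds the number of handles, which gives $|V'| \leq 3|V|$, $|E'| \leq 3|E|$ and $\bff' \in \calF^{G'}$. It remains to show the shortest augmenting shortcut path of $\bff'$ in $H'$ has length at least $2\ell+3$. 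Suppose some augmenting shortcut path has length $\leq 2\ell+1$, hence exactly $2\ell+1$, since an augmenting shortcut path alternates between $S$ and $T$ and ends at a sink and so has odd length. Decompose it into its forward and backward shortcut edges; by the disjoint-support property of $G'$ the forward ones live in the core and the backward ones in the handles, so the layering of the first step applies verbatim: a forward shortcut edge leaving $S^i$ can only reach $T^{\leq i}$ and a backward one leaving $T^i$ can only reach $S^{\leq i+1}$, except where the path crosses an edge whose flow was changed by $\bfg$ -- and those are precisely the edges of the blocking flow, for which the blocking property on the layered graph $\hat G$ forces any such crossing to strictly drop the layer index. Tracing layers along the path then gives the usual Dinic contradiction, now for switch-count distance, so no augmenting shortcut path of length $\leq 2\ell+1$ can survive.

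The step I expect to be the main obstacle is this last one taken together with the rebuild of $G'$: the rebuild has to realize the augmented flow, stay within the $3|V|$, $3|E|$ budget, and keep the forward (core) and backward (handle) supports disjoint, all at once; and the layering argument has to show that a blocking flow through this separated structure raises the switch count by \emph{two}, not merely by one. Both reduce to the single combinatorial fact that the BFS layering assigns every core edge a unique slot, so the real work is to verify carefully that this slotting genuinely isolates the forward and backward parts of every augmenting shortcut path and that the isolation is preserved by the augmentation.
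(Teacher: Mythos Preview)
Your high-level plan is the paper's: layer the sources and sinks by switch-count distance via BFS, build a layered residual structure, find a blocking flow, and argue the shortest augmenting shortcut path length jumps by two. You also correctly identify the key role of handles in keeping forward and backward segments edge-disjoint.

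The real gap is in the rebuild of $G'$ and the $|V'|\le 3|V|$ bound. Your invariant (core flow-free) forces you to retain the old handles of $G^\circ$ with their reduced flow \emph{and} add a fresh handle for the forward flow you just pushed; this yields $|V'|\le |V|+|V^\circ\setminus V|+|V|=|V^\circ|+|V|$, so the size constant increases by one every phase rather than staying bounded. The paper does the opposite: it \emph{discards} all old handles. It projects $\bff^\circ$ onto a fresh decomposition built from $G$ alone, with the forward layers $G^{i,+}$ sitting inside the core (adding no new vertices) and only the backward layers $G^{i,-}$ instantiated as new handles (copies of subgraphs of $G$); Observation~\ref{obs:inclen-a-few-times} then shows each $v\in V$ lies in at most one $G^{i,+}$ and appears as a copy in at most two $G^{i,-}$'s, giving exactly $3|V|$. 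Your slot-uniqueness is the right fact for this observation, but it bounds the \emph{new} pieces, not the old handles you propose to keep. Relatedly, your correctness sketch is looser than needed: the clean argument is that properties (P1) and (P2) --- forward shortcut edges go from $S^i$ to $T^{\le i}$, backward ones from $T^i$ to $S^i$ or $S^{i+1}$ --- continue to hold for the \emph{post}-augmentation shortcut graph $H'$ precisely because the $G^{i,+}$'s and $G^{i,-}$'s are internally vertex-disjoint in $G'$; hence any length-$(2\ell+1)$ augmenting path in $H'$ uses only ``useful'' edges, was already present before augmenting (useful backward edges can only shrink), and therefore lifts to an unsaturated $s^*$--$t^*$ path in $R$, contradicting that $\bfg$ is blocking. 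The ``edges changed by $\bfg$ drop the layer index'' phrasing obscures this and does not by itself rule out new backward shortcut edges that jump layers.
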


The algorithm is described in Algorithm~\ref{alg:inclen}. 
In the algorithm, the sub-graph of $G$ between a set $S' \subseteq S$ and $T' \subseteq T$ is defined as the sub-graph of $G$ induced by $\{v \in V: S' \leadsto v, v \leadsto T'\}$.

\begin{algorithm}[ht]
	\caption{$\inclen(\ell, G^\circ = (V^\circ, E^\circ), \bff^\circ)$}
	\label{alg:inclen}
	\begin{algorithmic}[1]
		\Require{integer $\ell \geq 0$, a handled graph $G^\circ = (V^\circ, E^\circ)$, a flow $\bff^\circ \in \calF^{G^\circ}$, which defines the shortcut graph $H^\circ$ for $\bff^\circ$.  The shortest augmenting path in $H^\circ$ has length at least $2\ell+1$.}
		\Ensure{a handled graph $G'$ and a flow $\bff' \in \calF^{G'}$ such that the shortest augmenting shortcut path in the shortcut graph $H'$ for $\bff'$ has length at least $2\ell+3$.} \medskip
		\State let $S^i$ be the sources to which the shortest alternating shortcut path in $H^\circ$ has length $2\ell$, $\forall i \in [0, \ell]$ \label{step:inclen-define-Si}
		\State let $T^i$ be the sinks to which the shortest alternating shortcut path in $H^\circ$ has length $2\ell+1$, $\forall i \in [0, \ell]$ \label{step:inclen-define-Ti}
		\State let $G^{i, +}$ be the sub-graph of $G$ between $S^i$ and $T^i$, $\forall i \in [0, \ell]$ \label{step:inclen-define-Gi+}
		\State let $G^{i, -}$ be a copy of the sub-graph of $G$ between $S^{i+1}$ and $T^i$, $\forall i \in [0, \ell-1]$ \label{step:inclen-define-Gi-}
		\State let $G' = (V', E')$ be the handled graph with handles $\{G^{i, -}:i \in [0, \ell-1]\}$ \label{step:inclen-define-G'}
		\For{$i \gets 0$ to $\ell$} \label{step:inclen-loop}
		\State find a sub-flow $\bff^{\circ(i, +)}$ of $\bff^\circ$ sent by $S^i$, \quad$\bff^\circ \gets \bff^\circ  - \bff^{\circ(i, +)}$ \label{step:inclen-find-y+}
		\State $\bff'^{(i, +)} \gets $ projection of $\bff^{\circ(i, +)}$ to $G^{i, +}$  \Comment{See Claim~\ref{claim:how-fcirc-changes}}
		\If{$i =\ell$} \textbf{break} \EndIf
		\State find a sub-flow $\bff^{\circ(i, -)}$ of $\bff^\circ$ received by $T^i$, \quad$\bff^\circ \gets \bff^\circ - \bff^{\circ(i, -)}$ \label{step:inclen-find-y-}
		\State $\bff'^{(i, -)} \gets $ projection of $\bff^{\circ(i, -)}$ to $G^{i, -}$  \Comment{See Claim~\ref{claim:how-fcirc-changes}}
		\EndFor\State\textbf{end for}
		\State let $\bff' \in \calF^{G'}$ be defined as $\bff' := \sum_{i = 0}^{\ell}\bff'^{(i, +)} + \sum_{i = 0}^{\ell-1}\bff'^{(i, -)}$, assuming coordinates that do not exist are 0 
		\label{step:inclen-sum}
		
		\State $R = (V' \cup \{s^*, t^*\}, E_R)$ and $\bfc \in [0, \infty]^{E_R}$ be a graph where $E_R$ and $\bfc$ are constructed as follows: \label{step:inclen-R}
		\begin{itemize}[leftmargin=*]
			\item for every $i \in [0, \ell]$ and edge $e$ in $G^{i, +}$, add $e$ to $E_R$ and let $c_e = \infty$,
			\item for every $i \in [0, \ell-1]$ and edge $vu$ in $G^{i, -}$, add $uv $ to $E_R$ and let $c_{uv} = f'_{vu}$,
			\item for every $s \in S^0$, add $s^*s$ to $E_R$, and let $c_{s^*s} =  a_s - f'(\delta^+(s))$,
			\item for every $t \in T^{\ell}$, add $tt^*$ to $E_R$, and let $c_{tt^*} = b_t - f'(\delta^-(t))$.
		\end{itemize}
		\State find a blocking flow $\bfg$ for $(R, \bfc)$ using Theorem~\ref{thm:find-blocking-flow} \label{step:inclen-find-blocking-flow}
		\State for every edge $e$ in $G^{i,+}$ for some $i$, let $f'_e \gets f'_e + g_e$, and for every edge $vu$ in $G^{i, -}$ for some $i$, let $f'_{uv} \gets f'_{uv} - g_{vu}$ \label{step:inclen-use-blocking-flow}
		\State \Return $(G', \bff')$
	\end{algorithmic}
\end{algorithm}

In Steps~\ref{step:inclen-define-Si} to \ref{step:inclen-define-G'}, we define $S^i$'s, $T^i$'s, $G^{i, +}$'s, $G^{i,-}$'s and $G'$. See Figure~\ref{fig:inclen-graphs} for an illustration. By our assumption that every augmenting path in $H^\circ$ has length at least $2\ell+1$,  all sinks in $T^{< \ell}$ are saturated by the initial $\bff^\circ$.  Notice that $G^{i, +}$'s are sub-graphs of $G$, but $G^{i, -}$'s are copies of sub-graphs and are included in $G'$ as handles. In Loop~\ref{step:inclen-loop} of Algorithm~\ref{alg:inclen}, we construct the flows $\bff^{\circ(i, +)}$'s, $\bff'^{(i, +)}$'s, $\bff^{\circ(i, -)}$'s and $\bff'^{(i, -)}$'s.  The following claimed can be proved via mathematical induction:
\begin{claim}
	\label{claim:how-fcirc-changes}
	Focus on the iteration $i$ of Loop~\ref{step:inclen-loop}.
	\begin{itemize}
		\item At the beginning of the iteration, $\bff^\circ$ is a flow from $S^{\geq i}$ to $T^{\geq i}$.
		\item $\bff^{\circ(i, +)}$ can be projected to $G^{i, +}$.
		\item Before Step~\ref{step:inclen-find-y-} in the iteration, $\bff^\circ$ is a flow from $S^{\geq i + 1}$ to $T^{\geq i}$. 
		\item $\bff^{\circ(i, -)}$ can be projected to $G^{i, -}$.
	\end{itemize}
\end{claim}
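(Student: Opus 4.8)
The plan is to establish the four items simultaneously by induction on the loop index $i\in[0,\ell]$, via the chain of implications: (first item) $\Rightarrow$ (second item) $\Rightarrow$ (third item) $\Rightarrow$ (fourth item) $\Rightarrow$ (first item for $i+1$). Throughout, say a valid flow in $\calF^{G^\circ}$ \emph{goes from} $X\subseteq S$ \emph{to} $Y\subseteq T$ if it puts zero flow on every edge leaving a source outside $X$ and on every edge entering a sink outside $Y$; since the handled graph $G^\circ$ is acyclic (a handled graph of a DAG is a DAG, as a handle can only be entered at a shared source and left at a shared sink), any such flow decomposes into $X$-to-$Y$ paths. The base case is the first item at $i=0$: the input $\bff^\circ$ is a valid flow for $G^\circ$, so it trivially goes from $S=S^{\geq 0}$ to $T=T^{\geq 0}$.

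Two observations drive every step. First, Loop~\ref{step:inclen-loop} only ever subtracts sub-flows from $\bff^\circ$, so at all times $\supp(\bff^\circ)$ is contained in $\supp$ of the original $\bff^\circ$, which is the flow defining the shortcut graph $H^\circ$; hence whenever the current $\bff^\circ$ routes flow from a source $s$ to a sink $t$, there is an $s$-$t$ path inside that original support, so $(t,s)$ is a backward shortcut edge of $H^\circ$, while $(s,t)$ is automatically a forward shortcut edge because $s\leadsto_{G^\circ}t$ is equivalent to $s\leadsto_G t$ for a handled graph. Second, a sub-flow of $\bff^\circ$ that goes from $X$ to $Y$ can be projected to any copy $\tilde G$ of the subgraph of $G$ between $X$ and $Y$: every edge $(u,v)$ in its support lies on a path from some $x\in X$ to some $y\in Y$, so pushing that path through $\pi$ yields walks in $G$ witnessing $X\leadsto\pi(u)$ and $\pi(v)\leadsto Y$, while $(\pi(u),\pi(v))\in E$ since $(u,v)$ is an edge either of $G$ or of a handle (a copy of a subgraph of $G$); thus $(\pi(u),\pi(v))$ is an edge of $\tilde G$.

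Now fix $i$ and assume $\bff^\circ$ goes from $S^{\geq i}$ to $T^{\geq i}$. Being a sub-flow of $\bff^\circ$ sent by $S^i$, the flow $\bff^{\circ(i,+)}$ of Step~\ref{step:inclen-find-y+} goes from $S^i$ to a subset of $T^{\geq i}$; if it reaches a sink $t$ from a source $s\in S^i$, then appending the forward shortcut edge $(s,t)$ to a shortest alternating shortcut path ending at $s$ shows $\mathrm{dist}_{H^\circ}(t)\leq 2i+1$ (and if that walk is not simple, $t$ already occupies an odd position $\leq 2i-1$ on it, giving a better bound), so $t\in T^{\leq i}$; with $t\in T^{\geq i}$ this forces $t\in T^i$. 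Hence $\bff^{\circ(i,+)}$ goes from $S^i$ to $T^i$ and, by the second observation, can be projected to $G^{i,+}$. Subtracting it zeroes out every edge leaving $S^i$ (Definition~\ref{def:sub-flow}), leaving $\bff^\circ$ a flow from $S^{\geq i+1}$ to $T^{\geq i}$ — the third item. Symmetrically, $\bff^{\circ(i,-)}$, a sub-flow of the current $\bff^\circ$ received by $T^i$, goes from a subset of $S^{\geq i+1}$ to $T^i$; if it sends flow from a source $s$ to $t\in T^i$, then appending the backward shortcut edge $(t,s)$ to a shortest alternating path ending at $t$ gives $\mathrm{dist}_{H^\circ}(s)\leq 2i+2$, so $s\in S^{\leq i+1}$, and since $s\in S^{\geq i+1}$ as well, $s\in S^{i+1}$; thus $\bff^{\circ(i,-)}$ goes from $S^{i+1}$ to $T^i$ and projects to $G^{i,-}$. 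Subtracting it zeroes out every edge entering $T^i$, so $\bff^\circ$ becomes a flow from $S^{\geq i+1}$ to $T^{\geq i+1}$, i.e.\ the first item for $i+1$, completing the induction.

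The step I expect to need the most care is the distance bookkeeping in the two ``append a shortcut edge'' arguments: one must keep track that $S^i$ and $T^i$ are defined by \emph{shortest} alternating paths, handle the case that the augmented walk is not simple (then the repeated source or sink already carries a strictly smaller distance label), and --- the genuinely load-bearing point --- use that the backward shortcut edges invoked still belong to $H^\circ$ even though $\bff^\circ$ has been partially drained, which is exactly why the monotone-support observation is recorded first. Everything else reduces to flow decomposition in an acyclic handled graph and the fact that handles create no new source-to-sink reachability.
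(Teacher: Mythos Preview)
Your proof is correct and follows the same inductive scheme as the paper: assume $\bff^\circ$ goes from $S^{\geq i}$ to $T^{\geq i}$, use the forward shortcut edge $(s,t)$ to force the sinks of $\bff^{\circ(i,+)}$ into $T^i$, subtract to get $S^{\geq i+1}\to T^{\geq i}$, then use the backward shortcut edge $(t,s)$ (available because the support only shrinks) to force the sources of $\bff^{\circ(i,-)}$ into $S^{i+1}$, and subtract again. The paper states these two non-reachability facts tersely as $S^i\not\leadsto_{G^\circ}T^{>i}$ and $S^{\geq i+2}\not\leadsto_{\supp(\bff^\circ)}T^i$; you unpack them into the explicit distance-bookkeeping and support-monotonicity observations, and also spell out the projection step and the acyclicity of the handled graph, but the underlying argument is identical.
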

\begin{proof}
	Assume that at the beginning of iteration $i$ of Loop~\ref{step:inclen-loop}, $\bff^\circ$ is a flow from $S^{\geq i}$ to $T^{\geq i}$; this holds for $i = 0$.  Then $\bff^{\circ(i, +)}$ is a flow from $S^i$ to $T^i$ since $S^i \not\leadsto_{G^\circ}T^{>i}$. So it can be projected to $G^{i, +}$. Then before Step~\ref{step:inclen-find-y-} in the iteration,  $\bff^\circ$ is a flow from $S^{\geq i+1}$ to $T^{\geq i}$. $\bff^{\circ(i, -)}$ is a flow from $S^{i+1}$ to $T^i$ since $S^{\geq i+2} \not\leadsto_{\supp(\bff^\circ)} T^i$.  So, it can be projected to $G^{i, -}$. In the end of the iteration $i$ and thus at the beginning of iteration $i+1$, $\bff^\circ$ is a flow from $S^{\geq i+1}$ to $T^{\geq i+1}$.
\end{proof}	

In Step~\ref{step:inclen-sum}, we construct $\bff'$ by summing up $\bff'^{(i, +)}$'s and $\bff'^{(i, -)}$'s. Then, we define a residual graph $R$, find a blocking flow $\bfg$ in the graph, augment $\bff'$ using $\bfg$ in Steps~\ref{step:inclen-R}, \ref{step:inclen-find-blocking-flow} and \ref{step:inclen-use-blocking-flow}. We return $(G', \bff')$ in the end.

Now we can prove the key lemma that establishes the correctness of the algorithm:
\begin{lemma}
	Let $H'$ be the shortcut graph w.r.t $\bff'$ returned by Algorithm~\ref{alg:inclen}.  Then any augmenting path in $H'$ has length at least $2\ell+3$.
\end{lemma}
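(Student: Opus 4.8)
The plan is to carry the classical ``a blocking flow in the level graph strictly increases the length of a shortest augmenting path'' argument of Dinic and of Hopcroft--Karp over to the shortcut-graph setting, with the handled graph $G'$, the layered pieces $G^{i,+}$ and handles $G^{i,-}$, and the residual network $R$ playing the role of the level graph, and with ``augmenting shortcut path'' replacing ``augmenting path''. I would argue by contradiction: let $P=(s_0,t_0,s_1,t_1,\dots,s_k,t_k)$ be a \emph{shortest} augmenting shortcut path in $H'$ and suppose its length is $2k+1\le 2\ell+1$, so that $k\le\ell$ and $d'(s_j)=2j$, $d'(t_j)=2j+1$ along $P$ (writing $d'$ and $d^\circ$ for the BFS distances in $H'$ and $H^\circ$, with $d^\circ(s)=2i$ for $s\in S^i$ and $d^\circ(t)=2i+1$ for $t\in T^i$).

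The first task is to pin down the structure of $\bff'$. (i) Its support lies in the layered pieces, $\supp(\bff')\subseteq\union_{i=0}^{\ell}G^{i,+}\cup\union_{i=0}^{\ell-1}G^{i,-}$, since $\bff'$ is assembled from the projected sub-flows $\bff'^{(i,\pm)}$ (whose supports lie in $G^{i,+}$, resp.\ $G^{i,-}$, by Claim~\ref{claim:how-fcirc-changes}) while the augmentation in Step~\ref{step:inclen-use-blocking-flow} only adds flow on forward edges of the $G^{i,+}$'s and only removes flow on edges of the handles. Because distinct pieces share only vertices of $S\cup T$ and every sink is absorbing, a directed path inside $\supp(\bff')$ stays inside one piece; hence every backward shortcut edge $(t,s)$ of $H'$ satisfies $t\in T^i$ and $s\in S^i\cup S^{i+1}$ for some $i\le\ell$, so in particular $d^\circ(s)=d^\circ(t)\pm1$ and the tail lies in $T^{\le\ell}$. (ii) By flow conservation in $R$ during the augmentation, every sink of $T^{<\ell}$ stays saturated in $\bff'$ and every source in some $S^i$ with $i\ge1$ stays satisfied (it keeps exactly $a_s$ units: the augmentation sends $\delta_s$ extra units out through $G^{i,+}$ and cancels $\delta_s$ units on the incoming handle edges of $G^{i-1,-}$); one should moreover keep the residual of $\bff^\circ$ left after the loop --- a flow from $S\setminus S^{\le\ell}$ to $T^{\ge\ell}$, projected into $G$ --- as part of $\bff'$, which is harmless since it uses sources disjoint from $S^{\le\ell}$ and sinks overlapping only $T^\ell$, where the total stays within the demand. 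With all this, the unsatisfied sources of $\bff'$ all lie in $S^0$ and the unsaturated sinks of $T^{\le\ell}$ all lie in $T^\ell$; in particular $s_0\in S^0$.

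Next I would establish the monotonicity $d'(v)\ge d^\circ(v)$ for every $v\in S^{\le\ell}\cup T^{\le\ell}$ reachable in $H'$, by induction along a shortest alternating path, using $d^\circ(t)\le d^\circ(s)+1$ on forward edges and fact~(i) on backward edges. Applying it to the last, forward, edge of $P$ gives $d^\circ(t_k)\le d'(t_k)=2k+1\le2\ell+1$, so $t_k\in T^{\le\ell}$; since $t_k$ is unsaturated and $T^{<\ell}$ is saturated we get $t_k\in T^\ell$, hence $2\ell+1=d^\circ(t_k)\le2k+1$ and $k=\ell$. Running fact~(i) back along $P$ and chaining $d^\circ(t_{j+1})\le d^\circ(s_{j+1})+1\le d^\circ(t_j)+2$ from $d^\circ(s_0)=0$ up to $d^\circ(t_\ell)=2\ell+1$ forces equality everywhere, which pins down $s_j\in S^j$ and $t_j\in T^j$ for all $j$.

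The last step --- and the main obstacle --- is the contradiction with $\bfg$ being a blocking flow. Since $s_j\in S^j$ and $t_j\in T^j$, each forward shortcut edge $(s_j,t_j)$ of $P$ is realized by a path contained in $G^{j,+}$ (all of whose edges have capacity $\infty$ in $R$, hence are left unsaturated), and each backward shortcut edge $(t_j,s_{j+1})$ is realized by walking backward along positive-$\bff'$-flow edges of the handle $G^{j,-}$; the corresponding reversed edges of $R$ are unsaturated precisely because the augmentation did not cancel all of that handle flow along them. Together with the unsaturated edges $(s^*,s_0)$ (as $s_0$ is still unsatisfied) and $(t_\ell,t^*)$ (as $t_\ell$ is still unsaturated), concatenation yields an $s^*$--$t^*$ path in $R$ with no saturated edge --- impossible for a blocking flow. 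The two points I expect to need the most care are: confirming, through Claim~\ref{claim:how-fcirc-changes} and the projection/support bookkeeping, that the flow-paths witnessing the backward shortcut edges of $H'$ genuinely sit inside a single handle $G^{j,-}$ and carry positive flow (so the required reversed edges exist in $R$ with positive residual); and correctly accounting, as in~(ii), for the flow of $\bff^\circ$ that leaves the tracked layers, so that no source outside $S^{\le\ell}$ becomes unsatisfied and no sink of $T^{\ge\ell}$ loses its demand --- this is what makes $s_0\in S^0$ and lets the lifted path reach $s^*$.
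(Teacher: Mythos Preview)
Your argument follows the paper's strategy: establish that backward shortcut edges of $H'$ go from $T^i$ only to $S^i$ or $S^{i+1}$ (because $\supp(\bff')$ sits inside the internally-disjoint pieces $G^{i,+}$ and $G^{i,-}$), pin a hypothetical length-$(2\ell+1)$ augmenting shortcut path to the layer pattern $S^0,T^0,\ldots,T^\ell$, lift it to an unsaturated $s^*$--$t^*$ path in $R$, and contradict that $\bfg$ is blocking. Your $d'\ge d^\circ$ monotonicity is just another phrasing of the paper's ``position in the sequence increases by at most one'' observation, and your final lifting matches the paper's ``useful backward edges can only shrink'' step.

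You are in fact more careful than the paper on one point. You correctly notice that after Step~\ref{step:inclen-sum} the residual of $\bff^\circ$ --- a flow from $S\setminus S^{\le\ell}$ to $T^{\ge\ell}$ --- is discarded, so every source in $S\setminus S^{\le\ell}$ becomes unsatisfied in $\bff'$; the paper's assertion ``all unsatisfied sources are in $S^0$'' does not cover them, and simple instances show such a source can have a length-$1$ augmenting shortcut edge to an unsaturated sink in $T^{\ell+1}$. Your proposal to keep the residual is the natural remedy, but it is a change to the algorithm rather than a proof step, and it is not quite ``harmless'' as you write: the residual also delivers flow into $T^\ell$, so the capacities $c_{t,t^*}$ in $R$ must be set \emph{after} the residual is added (otherwise the augmented $\bff'$ can exceed $\gamma b_t$ there); and once the residual is included, $\supp(\bff')$ inside $G$ is no longer confined to the $G^{i,+}$, so the one-piece-per-path claim in your~(i), and with it the $\pm 1$ structure of backward shortcut edges, must be re-argued.
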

\begin{proof}
	To avoid confusion, we use $\bar \bff'$ be the flow $\bff'$ obtained after Step~\ref{step:inclen-sum}, that is, before it is augmented. We use $\bff'$ be the final $\bff'$ returned by the algorithm. 
	Let $\bar H'$ and $H'$ be the shortcut graphs for $\bar \bff'$ and $\bff'$ respectively.  Let $S^{\ell+1} := S \setminus S^{\leq \ell}$ and $T^{\ell+1} := T \setminus T^{\leq \ell}$, so that $(S_i)_{i \in [0, \ell+1]}$ and $(T_i)_{i \in [0, \ell+1]}$ are partitions of $S$ and $T$ respectively.
	
	The following two properties hold:
	\begin{enumerate}[label=(P\arabic*),leftmargin=*]
		\item If a forward edge in $\bar H'$ connects $S^i$ to $T^{i'}$, then we have $i' \leq i$. 
		\item If a backward edge in $\bar H'$ connects $T^i$ to $S^{i'}$, then $i' \in \{i, i+1\}$. 
	\end{enumerate}
	(P1)  follows from the definition of $S^i$'s and $T^i$'s.  (P2) follows from that $G^{i, +}$'s and $G^{i,-}$'s are internally disjoint, and that $\bar\bff'$ has support in these graphs. 
	
	If we focus on the sequence $S^0, T^0, S^1, T^1, \cdots, S^{\ell}, T^{\ell}$ of vertex sets, every edge in $\bar H'$ can only increase the position of the vertex in the sequence by $1$.  All unsatisfied sources are in $S^0$ and all  unsaturated sinks are in $T^{\ell} \cup T^{\ell+1}$. Therefore, an augmenting path in $\bar H'$ has length at least $2\ell+1$. For it to have length exactly $2\ell+1$, it can only use \emph{useful} shortcut edges in $\bar H'$: A forward shortcut edge is useful if it connects $S^i$ to $T^i$ for some $i \in [0, \ell]$, and a backward shortcut edge is useful if it connects $T^i$ to $S^{i+1}$ for some $i \in [0, \ell-1]$.  Moreover, if there is a backward edge from $T^i$ to $S^{i+1}$ in $\bar H'$, the correspondent path from $S^{i+1}$ to $T^i$ in $\supp(\bar\bff')$ must be completely in $G^{i, -}$.
	
	We then consider how augmenting $\bar \bff'$ using $\bfg$ to $\bff'$ changes the set of backward shortcut edges. The set of backward shortcut edges created by paths in the handles can only shrink, since the augmenting operation can only decrease $\bff'_e$ values of edges in the handles. The backward shortcut edges created by paths in $G^{i, +}$ are not useful, as they connect $T^i$ to $S^i$. Therefore the set of useful backward edges can only shrink from $\bar H' $ to $H'$. Suppose there is an augmenting path of length $2\ell + 1$ in $H'$. It must be an augmenting path in $\bar H'$.  However, as $\bfg$ is a blocking flow in $R$, one of the (backward) shortcut edge in the path must be broken in $H'$, a contradiction.  Therefore, there are no augmenting paths of length $2\ell+1$ in $H'$. The lemma follows since an augmenting shortcut path has length being an odd number.	
\end{proof}

We remark that it is crucial for us to make $G^{i, -}$'s and $G$ internally disjoint using handles. If we let $G^{i,-}$'s be sub-graphs of $G$, then increasing flow values in $G^{i, +}$ may create new backward shortcut edges from $T^i$ to $S^{i+1}$. Though augmenting $\bff'$ by $\bfg$ destroys the old augmenting paths of length $2\ell+1$ in $\bar H'$, it may create new ones.  At the other extreme, we could decompose $\bff^{\circ(i, -)}$'s completely into paths, and maintain a set of source-sink pairs, each with the amount of flow sent between them. But this way we could not bound the number of such pairs as the algorithm proceeds.  Moreover, in the end, we have to realize the flows sent between the pairs in $G$. Even assuming in the end we have, say, $O(|E|)$ $st$-pairs with positive amount of flow sent in between, we do not know how to realize the flows in $G$ in nearly-linear time. So, the handled graph gives an approach between the two extremes, which can guarantee that the length of the shortest augmenting shortcut paths increases from iteration to iteration, and that the graphs we maintain have nearly-linear size. 

We now show the algorithm has nearly-linear running time. The following are two simple but useful observations we can make:
\begin{obs}
	\label{obs:inclen-a-few-times}
	For each vertex $v \in V$, let $i$ be the smallest integer such that $S^i \leadsto v$. Then $v$ is not in $G^{i', +}$ for any $i' \neq i$. $v$ does not appear in $G^{i', -}$ as a copy for any $i' \notin \{i-1, i\}$.
\end{obs}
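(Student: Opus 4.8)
The plan is to deduce both statements from a single monotonicity fact about the layer sets defined in Algorithm~\ref{alg:inclen}. Recall from Steps~\ref{step:inclen-define-Si}--\ref{step:inclen-define-Gi-} that $S^i$ (resp.\ $T^i$) is the set of sources (resp.\ sinks) whose shortest alternating shortcut path in $H^\circ$ has length $2i$ (resp.\ $2i+1$), that $G^{i,+}$ is the subgraph of $G$ induced on $\{v : S^i \leadsto v,\ v \leadsto T^i\}$, and that $G^{i,-}$ is a copy of the subgraph of $G$ induced on $\{v : S^{i+1} \leadsto v,\ v \leadsto T^i\}$. Thus a vertex $v \in V$ lies in $G^{i',+}$ iff $S^{i'} \leadsto v$ and $v \leadsto T^{i'}$, and $v$ has a copy in $G^{i',-}$ iff $S^{i'+1} \leadsto v$ and $v \leadsto T^{i'}$. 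I will write $i$ for the least index with $S^i \leadsto v$; if no such index in $[0,\ell]$ exists then $v$ lies in none of the $G^{i',\pm}$ and the Observation holds vacuously, so assume $i$ exists.

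First I would establish the fact: \emph{if $S^a \leadsto T^b$ then $b \le a$}. Indeed, a $G$-path witnessing $S^a \leadsto T^b$ runs from some $s \in S^a$ (a source) to some $t \in T^b$ (a sink), so $s \leadsto t$ and hence $(s,t)$ is a forward shortcut edge in $H^\circ$. The ``length of the shortest alternating shortcut path'' is a BFS-type distance that respects alternation: it is computed layer by layer, and appending the edge $(s,t)$ to a shortest (hence simple) alternating path ending at $s$ yields an alternating walk ending at $t$ of length $2a+1$; if $t$ already occurs on that path, then the prefix ending at the first occurrence of $t$ is a strictly shorter alternating path to $t$. Either way the shortest alternating shortcut path to $t$ has length at most $2a+1$, i.e.\ $2b+1 \le 2a+1$, so $b \le a$.

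With this in hand the two claims are immediate. If $v \in G^{i',+}$ then $S^{i'} \leadsto v$, so $i' \ge i$ by minimality of $i$; and $S^i \leadsto v \leadsto T^{i'}$ gives $S^i \leadsto T^{i'}$, hence $i' \le i$. Thus $i' = i$, which is the contrapositive of ``$v \notin G^{i',+}$ for $i' \ne i$''. If $v$ has a copy in $G^{i',-}$ then $S^{i'+1} \leadsto v$, so $i'+1 \ge i$, i.e.\ $i' \ge i-1$; and again $S^i \leadsto v \leadsto T^{i'}$ gives $i' \le i$. So $i' \in \{i-1,i\}$, the contrapositive of ``$v$ does not appear in $G^{i',-}$ as a copy for $i' \notin \{i-1,i\}$''.

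The only step that needs care is the BFS-distance claim for alternating shortcut paths: one must check that ``length of the shortest alternating shortcut path to $v$'' is consistent with a layered computation even though alternating shortcut paths are required to be simple. This is routine — BFS always produces simple shortest paths, and any prefix of an alternating path that terminates at a vertex of $S \cup T$ is again an alternating path — so I expect no real obstacle here; everything else is bookkeeping with the definitions of $G^{i,\pm}$ and the minimality of $i$.
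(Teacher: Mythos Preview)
Your proof is correct and follows essentially the same approach as the paper: the paper's key line ``All the vertices $t \in T$ with $v \leadsto t$ are included in $T^{\leq i}$'' is exactly your monotonicity fact $S^a \leadsto T^b \Rightarrow b \le a$ specialized to $a = i$, and the two case analyses for $G^{i',+}$ and $G^{i',-}$ are the same. You simply spell out the BFS-distance justification (including the simplicity check) that the paper leaves implicit.
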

\begin{proof}
	For any $i' < i$, we have $S^{i'} \not\leadsto v$. All the vertices $t \in T$ with $v \leadsto t$ are included in $T^{\leq i}$.  So, for every $i' > i$ we have $v \not\leadsto T^{i'}$. Therefore the two statements follow.
\end{proof}

\begin{obs}
	\label{obs:inclen-a-few-times-1}
	Any $v \in V^\circ$ is in the support of at most $3$ flows in $\{ \bff^{\circ(i, +)}: i \in [0, \ell] \} \cup \{\bff^{\circ(i, -)}: i \in [0, \ell-1]\}$.
\end{obs}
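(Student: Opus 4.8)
The plan is to reduce the statement to Observation~\ref{obs:inclen-a-few-times} by translating ``support of a flow'' into ``pre-images of vertices of the associated graph'' using the projectability guaranteed by Claim~\ref{claim:how-fcirc-changes}. Fix a vertex $v \in V^\circ$ and let $w := \pi(v) \in V$ be its pre-image. First I would record the following translation: by Claim~\ref{claim:how-fcirc-changes}, $\bff^{\circ(i,+)}$ can be projected to $G^{i,+}$, which means that for every edge $(u,u')$ carrying positive $\bff^{\circ(i,+)}$-flow there is an edge $(\tilde u, \tilde u') \in E(G^{i,+})$ with $\pi(u) = \pi(\tilde u)$ and $\pi(u') = \pi(\tilde u')$. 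Since $G^{i,+}$ is a subgraph of $G$ we have $\tilde u = \pi(\tilde u)$ and $\tilde u' = \pi(\tilde u')$, so $\pi(u),\pi(u') \in V(G^{i,+})$. Hence if $v \in \supp(\bff^{\circ(i,+)})$ (i.e. $v$ is incident to such an edge), then $w \in V(G^{i,+})$. The identical argument applied to $\bff^{\circ(i,-)}$ and its projection to $G^{i,-}$ shows that $v \in \supp(\bff^{\circ(i,-)})$ implies that $w$ is the pre-image of a vertex of $G^{i,-}$, i.e. $w \in \pi\big(V(G^{i,-})\big)$, which by the definition of a copy is exactly the statement ``$w$ appears in $G^{i,-}$ as a copy''.

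Next I would invoke Observation~\ref{obs:inclen-a-few-times} for the vertex $w$. Letting $i_0$ be the smallest index with $S^{i_0} \leadsto w$, that observation says $w \in V(G^{i',+})$ only for $i' = i_0$, and $w$ appears in $G^{i',-}$ as a copy only for $i' \in \{i_0-1, i_0\}$. Combining this with the translation of the previous paragraph: $v$ can be in the support of $\bff^{\circ(i,+)}$ only for $i = i_0$, and in the support of $\bff^{\circ(i,-)}$ only for $i \in \{i_0-1,i_0\}$. Therefore $v$ lies in the support of at most $1 + 2 = 3$ of the flows in $\{\bff^{\circ(i,+)} : i \in [0,\ell]\} \cup \{\bff^{\circ(i,-)} : i \in [0,\ell-1]\}$, which is the claim.

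I do not expect a real obstacle here; the only care points are bookkeeping. One must check that the case where $v$ is itself a source or sink is covered (then $\pi(v)=v$, and ``$v$ in the support of a flow'' means $v$ is incident to a positively weighted edge, so the projectability conclusion still applies), and that ``$w$ appears in $G^{i,-}$ as a copy'' in Observation~\ref{obs:inclen-a-few-times} is literally the condition $w \in \pi(V(G^{i,-}))$ used above. Both are immediate from the definition of a copy of a subgraph, so once the projection-to-pre-image translation is set up the proof is only a couple of lines. This observation, together with Observation~\ref{obs:inclen-a-few-times}, is what will let us charge the work of Loop~\ref{step:inclen-loop} of Algorithm~\ref{alg:inclen} to $O(|E^\circ|)$ overall.
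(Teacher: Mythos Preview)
Your proof is correct, but it takes a slightly different route from the paper's. The paper argues directly in $G^\circ$: it lets $i$ be the smallest index with $S^i \leadsto_{G^\circ} v$, notes that $S^{i'} \not\leadsto_{G^\circ} v$ for $i' < i$ and that $v \not\leadsto_{G^\circ} T^{i'}$ for $i' > i$ (since otherwise $S^i \leadsto_G T^{i'}$, contradicting the definition of $T^{i'}$), and then uses that $\bff^{\circ(i',+)}$ is a flow from $S^{i'}$ to $T^{i'}$ and $\bff^{\circ(i',-)}$ is a flow from $S^{i'+1}$ to $T^{i'}$ (both shown in the proof of Claim~\ref{claim:how-fcirc-changes}) to conclude. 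You instead use the \emph{projectability} part of Claim~\ref{claim:how-fcirc-changes} to push the question down to the pre-image $w=\pi(v)\in V$, and then invoke Observation~\ref{obs:inclen-a-few-times} as a black box. Your approach is more modular and avoids re-deriving the reachability bounds, at the cost of the extra translation layer; the paper's approach is a couple of lines shorter because it works directly in the graph where the flows live. Either is perfectly acceptable.
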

\begin{proof}
	Let $i$ be the smallest integer such that $S^i\leadsto_{G^\circ} v$. Then, $S^{i'} \not\leadsto_{G^\circ} v$ for every $i' < i$. Also $v \not \leadsto_{G^\circ} T^{i'}$ for every $i' > i$ since otherwise $S^i \leadsto_{G^\circ} T^{i'}$, which implies $S^i \leadsto_G T^{i'}$, a contradiction.  So, if $v \in V_{\supp(\bff^{\circ(i', +)})}$, then $i' = i$.  If $v \in V_{\supp(\bff^{\circ(i', -)})}$, then $i' \in \{i - 1, i\}$.
\end{proof}

Steps \ref{step:inclen-define-Si} and \ref{step:inclen-define-Ti} can be implemented in $O(|E^\circ|)$ time using BFS.    By Observation~\ref{obs:inclen-a-few-times}, we have $|V'| \leq 3|V|$ and $|E'| \leq 3|E|$, and Step \ref{step:inclen-define-Gi+}, \ref{step:inclen-define-Gi-} and \ref{step:inclen-define-G'} can be implemented in time $O(|E^\circ|)$. Constructing the sub-flows $\bff^{\circ(i, +)}$ and $\bff^{\circ(i, -)}$ takes time $O\big(|S^i|  + |V_\supp(\bff^{\circ(i, +)})|\cdot \log |V^\circ| + |E_{\supp(\bff^{\circ(i, +)})}|\big)$ and $O\big(|T^i|  + |V_\supp(\bff^{\circ(i, -)})|\cdot \log |V^\circ| + |E_{\supp(\bff^{\circ(i, -)})}|\big)$ respectively. By Observation~\ref{obs:inclen-a-few-times-1}, Loop~\ref{step:inclen-loop} takes time $O(|V^\circ| \log |V^\circ| + |E^\circ|)$. Steps~\ref{step:inclen-sum} and \ref{step:inclen-R} take time $O(|E^\circ|)$. The bottleneck of the algorithm is Step~\ref{step:inclen-use-blocking-flow}, which runs in time $O(|E^\circ|\log |V^\circ|)$ using Theorem~\ref{thm:find-blocking-flow}.   This finishes the proof of Theorem~\ref{thm:inclen-main}.

	%

\subsection{Finishing Proof of Theorem~\ref{thm:mfmc}}
\label{subsec:networkflow-finish}
In this section, we can wrap up the proof of Theorem~\ref{thm:mfmc}. Let $L := \floor{\log_{1+\epsilon}\frac{3|\bfa|_1}\phi} $ as in Lemma~\ref{lemma:find-cut}. We run Algorithm~\ref{alg:mfmc} defined below:
\begin{algorithm}[H]
	\caption{Construction of $\bff \in \calF$ and $S' \subseteq S$ satisfying properties of Theorem~\ref{thm:mfmc}}
	\label{alg:mfmc}
	\begin{algorithmic}[1]
		\State $G^{(0)} \gets G, \bff^{(0)} \gets $ all-0 vector over domain $E$
		\For{every $\ell = 0$ to $L := \floor{\log_{1+\epsilon}\frac{3|\bfa|_1}\phi}$}
		\State $(G^{(\ell+1)}, \bff^{(\ell+1)}) \gets \inclen(\ell, G^{(\ell)}, \bff^{(\ell)})$
		\EndFor
		\State \Return the projection $\bff$ of $\bff^{(L+1)}$ to $G$, and $S'$ obtained using Lemma~\ref{lemma:find-cut} over $G^{(L+1)}$ and $\bff^{(L+1)}$ 
	\end{algorithmic}
\end{algorithm}

By Theorem~\ref{thm:inclen-main}, for every $\ell \in [0, L+1]$, we have that $G^\ell$ is a handled graph, $\bff^{\ell} \in \calF^{G^{\ell}}$ and the shortest augmenting path in the shortcut graph for $\bff^\ell$ has length at least $2\ell+1$. Then by Lemma~\ref{lemma:find-cut},  we can find a set $S' \subseteq S$ with $a(S \setminus S') + \frac{\gamma b(T(S'))}{1+\epsilon} \leq \val(\bff^{L+1}) + \frac{\phi}{3} = \val(\bff) + \frac{\phi}{3}$. By Theorem~\ref{thm:inclen-main}, all handled graphs $G^{\ell}$ constructed have at most $3|V|$ vertices and $3|E|$ edges. So, the overall running time is $O\left(L\cdot|E|\cdot\log|V|\right) = O\left(\frac1\epsilon\cdot|E|\cdot \log |V|\cdot \log \frac{|\bfa|_1}{\phi}\right)$. This finishes the proof of Theorem~\ref{thm:mfmc}.

\end{document}